\lstdefinelanguage{prog}
{
	morekeywords={prob, if, then, else, fi, while, do, od, true, false, and, or, skip},
	sensitive = false
}
\newcommand{\Rset}{\mathbb{R}}
\newcommand{\Nset}{\mathbb{N}}
\newcommand{\Zset}{\mathbb{Z}}
\newcommand{\pvars}{V_\mathrm{p}}
\newcommand{\rvars}{V_{\mathrm{r}}}
\newcommand{\locs}{\mathit{L}}
\newcommand{\blocs}{\mathit{L}_{\mathrm{b}}}
\newcommand{\alocs}{\mathit{L}_{\mathrm{a}}}
\newcommand{\plocs}{\mathit{L}_{\mathrm{p}}}
\newcommand{\dlocs}{\mathit{L}_{\mathrm{d}}}
\newcommand{\transitions}{{\rightarrow}}
\newcommand{\lin}{\loc_\mathrm{in}}
\newcommand{\lout}{\loc_\mathrm{out}}
\newcommand{\val}[1]{\mbox{\sl Val}_{#1}}
\newcommand{\sampdpd}{\overline{\Upsilon}}
\newcommand{\MDPkernel}{\mathbf{P}}
\newcommand{\MDPstates}{S}
\newcommand{\MDPactions}{\mbox{\sl Act}}
\newcommand{\MDP}[1]{\mathcal{M}_{#1}}
\newcommand{\compo}{\textsf{op}}
\newcommand{\probm}{\mathbb{P}}
\newcommand{\expv}{\mathbb{E}}
\newcommand{\condexpv}[2]{{\expv}{\left({#1}{\mid}{#2}\right)}}
\newcommand{\loc}{\ell}
\newcommand{\initval}{\nu_0}
\newcommand{\initcon}{\mathbf{c}}
\newcommand{\supp}[1]{{\mathrm{supp}}{\left(#1\right)}}
\newtheorem{remark}{Remark}
\newtheorem*{theorem*}{Theorem}
\newcommand{\tm}{\textsc{Tm}}
\renewcommand{\paragraph}[1]{\smallskip\noindent\textbf{\emph{#1}}}
\renewcommand\footnotetextcopyrightpermission[1]{} %
\renewcommand\footnotetextcopyrightpermission[1]{} %
\renewcommand\@formatdoi[1]{\ignorespaces}
\newcommand{\rd}[1]{{#1}}
\newcommand{\hongfei}[1]{{#1}}
\newcommand{\amir}[1]{{#1}}
\newcommand{\mingzhang}[1]{#1}
\begin{document}
	\fancyfoot{}\thispagestyle{empty}
	\pagestyle{plain}

\title{\hongfei{Modular} \amir{Verification for} Almost-Sure Termination \hongfei{of} Probabilistic Programs}         
\author{Mingzhang Huang}
\affiliation{
  \institution{BASICS Lab, Shanghai Jiao Tong University}            %
  \city{Shanghai}
  \country{China}                    %
}
\email{mingzhanghuang@gmail.com}          %

\author{Hongfei Fu}
\authornote{Corresponding Author}          %
\affiliation{
  \institution{Shanghai Jiao Tong University, Shanghai Key Laboratory of
Trustworthy Computing, 
East China Normal University}           %
  \city{Shanghai}
  \country{China}                   %
}
\email{fuhf@cs.sjtu.edu.cn}         %

\author{Krishnendu Chatterjee}
\affiliation{
	\institution{IST Austria (Institute of Science and Technology Austria)}           %
	\city{Klosterneuburg}
	\country{Austria}                   %
}
\email{krishnendu.chatterjee@ist.ac.at}         %

\author{Amir Kafshdar Goharshady}
\authornote{Recipient of a DOC Fellowship of the Austrian Academy of Sciences (\"{O}AW)}          %
\affiliation{
		\institution{IST Austria (Institute of Science and Technology Austria)}          %
	\city{Klosterneuburg}
	\country{Austria}                   %
}
\email{amir.goharshady@ist.ac.at}         %

\begin{abstract}
	In this work, we consider the almost-sure termination problem for probabilistic programs that asks
	whether a given probabilistic program terminates with probability~1. \amir{Scalable approaches for program analysis often rely on modularity as their theoretical basis.}
	In non-probabilistic programs, the classical variant rule (V-rule) of Floyd-Hoare logic \amir{provides} the
	foundation for \amir{modular} analysis.
	Extension of this rule to almost-sure termination of probabilistic programs is quite tricky, and
	a probabilistic variant was proposed in~\cite{HolgerPOPL}.
	While the proposed probabilistic variant cautiously addresses the key issue of integrability, we show
	that the proposed \amir{modular} rule is still not sound for almost-sure termination of probabilistic programs.

	Besides establishing unsoundness of the previous rule, our contributions are as follows:
	First, we present a sound \amir{modular} rule for almost-sure termination of probabilistic programs. Our approach is based on a novel notion of descent supermartingales.
	Second, for algorithmic approaches, we consider descent supermartingales that are linear and show that they
	can be synthesized in polynomial time.
	Finally, we present experimental results on a variety of benchmarks and several natural examples that model various types of
	nested while loops in probabilistic programs and demonstrate that our approach is able to efficiently prove their almost-sure termination property. 
\end{abstract}

\begin{CCSXML}
	<ccs2012>
	<concept>
	<concept_id>10003752.10003790.10002990</concept_id>
	<concept_desc>Theory of computation~Logic and verification</concept_desc>
	<concept_significance>500</concept_significance>
	</concept>
	<concept>
	<concept_id>10003752.10003790.10003794</concept_id>
	<concept_desc>Theory of computation~Automated reasoning</concept_desc>
	<concept_significance>500</concept_significance>
	</concept>
	<concept>
	<concept_id>10003752.10010124.10010138.10010142</concept_id>
	<concept_desc>Theory of computation~Program verification</concept_desc>
	<concept_significance>500</concept_significance>
	</concept>
	</ccs2012>
\end{CCSXML}

\ccsdesc[500]{Theory of computation~Logic and verification}
\ccsdesc[500]{Theory of computation~Automated reasoning}
\ccsdesc[500]{Theory of computation~Program verification}

\keywords{Termination, Probabilistic Programs, Verification, Almost-Sure Termination}  %

\maketitle

\pagestyle{plain}

\section{Introduction}\label{sec:introduction}

\paragraph{Probabilistic programs.}
Extending classical imperative programs with randomness, i.e.~generation of random values according to probability distributions, gives rise to
probabilistic programs~\cite{gordon2014probabilistic}.
Such programs provide a flexible framework for many different applications,
ranging from the analysis of network protocols~\cite{netkat,netkat2,netkat3}, to
machine learning applications~\cite{roy2008stochastic,gordon2013model,scibior2015practical,claret2013bayesian},
and robot planning~\cite{thrun2000probabilistic,thrun2002probabilistic}.
The recent interest in probabilistic programs has led to many probabilistic programming languages
(such as Church~\cite{goodman2008church}, Anglican~\cite{anglican} and WebPPL~\cite{dippl}) and
their analysis is an active research area in formal methods and programming languages
(see~\cite{SriramCAV,pmaf,pldi18,AgrawalC018,ChatterjeeFNH16,ChatterjeeFG16,EsparzaGK12,KaminskiKMO16,kaminski2018hardness}).

\paragraph{Termination problems.}
In program analysis, the most basic liveness problem is that of {\em termination},
that given a program asks whether \amir{it} always terminates.
In presence of probabilistic behavior, there are two natural extensions of
the termination problem:
first, the almost-sure termination problem that asks whether the program terminates with probability~1; and
second, the finite termination problem that asks whether the expected termination time is finite.
While finite termination implies almost-sure termination, the converse is not true.
Both problems have been widely studied for probabilistic programs,
e.g.~\cite{kaminski2018hardness,ChatterjeeFG16,KaminskiKMO16,ChatterjeeFNH16}.

\mingzhang{
\paragraph{Importance of Almost-Sure Termination.} Almost-sure termination is the classical and most widely-studied problem that extends termination of non-probabilistic programs, and is considered as a core problem  in the programming languages community. See~\cite{HolgerPOPL,DBLP:journals/pacmpl/McIverMKK18,AgrawalC018,ChatterjeeFNH16,ChatterjeeNZ2017}.
Proving finite termination of a program is much more ideal and probably the first goal of an analyzer. Indeed, another ideal scenario would be if we could prove sure termination, i.e. that every run of the program terminates. Unfortunately, in many real-world cases, sure or finite termination are either too hard to prove or do not hold for the interesting real-world programs in question. For example, consider Recursive Markov Chains (RMCs) and Stochastic Context-free Grammars (SCFGs)~\cite{etessami2009recursive}, which are special cases of probabilistic programs and are widely used in the Natural Language Processing community~\cite{etessami2009recursive,manning1999foundations}. These programs are much simpler than general probabilistic programs, but finite termination does not hold for them, even in the real-world examples, even in the special cases such as RMCs with bounded number of entries and exits. This exact problem also appears in robot planning in AI, where many different types of random walks are known to be a.s.~terminating but not finitely terminating. In such cases, there is a natural need for proving a.s.~termination.
}

\paragraph{\amir{Modular} approaches.}
Scalable approaches for program analysis are often based on \amir{modularity} as their theoretical foundation.
For non-probabilistic programs, the classical variant rule (V-rule) of Floyd-Hoare logic~\cite{rwfloyd1967programs,DBLP:journals/acta/KatzM75}
provides the necessary foundations for \amir{modular verification}.
Such \amir{modular} methods allow decomposition of the programs into smaller parts, reasoning about the parts,
and then combining the results to deduce the desired result for the entire program.
Thus, they are the key technique in many automated methods for large programs.

\paragraph{\amir{Modular} approaches for probabilistic programs.}
The \amir{modular} approach for almost-sure termination of probabilistic programs was
considered in~\cite{HolgerPOPL}.
First, it was shown that a direct extension of the V-rule of non-probabilistic
programs is not sound for almost-sure termination of probabilistic programs, as there is a
crucial issue regarding integrability.
Then, a \amir{modular} rule, which cautiously addresses the integrability issue, was proposed
as a sound rule for almost-sure termination of probabilistic programs. We refer to this rule as the FHV-rule.

\paragraph{Our contributions.}
Our main contributions are as follows:
\begin{compactenum}
\item First, we show that the FHV-rule of~\cite{HolgerPOPL}, which is the
natural extension of the V-rule with integrability condition, is not sound for
almost-sure termination of probabilistic programs. We do this by providing a concrete counterexample in which
the FHV-rule deduces a.s.~termination whereas the program is not a.s.~terminating. We show that besides the issue of integrability, there is another
crucial issue, regarding the non-negativity requirement in ranking supermartingales, that is not addressed by~\cite{HolgerPOPL} and leads to unsoundness.

\item Second, we present a strengthened and sound \amir{modular} rule for almost-sure termination of probabilistic programs that addresses both crucial issues.
Our approach is based on a novel notion called ``descent supermartingales'' (DSMs), which is an
important technical contribution of our work.

\item \amir{Third, we provide a sound proof system for deducing a.s.~termination of programs through DSMs. This proof system can be used by theorem provers to establish a.s.~termination.}

\item Fourth, while we present our \amir{modular} approach for general DSMs,
for algorithmic approaches we focus on DSMs that are linear.
We present an efficient polynomial-time algorithm for the synthesis of linear DSMs.

\item Finally, we present an implementation of our synthesis algorithm for linear DSMs
and demonstrate that our approach \amir{can handle benchmark programs used in~\cite{pldi18}}, is applicable to probabilistic programs containing various types of nested while loops, and can efficiently prove that these programs terminate almost-surely.

\end{compactenum}

\section{Preliminaries}\label{sect:preliminaries}

Throughout the paper, we denote by $\Nset$, $\Nset_0$, $\Zset$, and $\Rset$ the sets of positive integers, non-negative integers, integers, and real numbers, respectively.
We first review several useful concepts in probability theory and then present the syntax and semantics of our probabilistic programs.

\subsection{Stochastic Processes and Martingales}\label{subsect:basics}

We provide a short review of some necessary concepts in probability theory. For a more detailed treatment, see~\cite{probabilitycambridge}.

\paragraph{Probability Distributions.} A \emph{discrete probability distribution} over a countable set $U$ is a function $p:U\rightarrow[0,1]$ such that $\sum_{u\in U}p(u)=1$.
The \emph{support} of $p$ is defined as $\supp{p}:=\{u\in U\mid p(u)>0\}$.

\paragraph{Probability Spaces.} A \emph{probability space} is a triple $(\Omega,\mathcal{F},\probm)$, where $\Omega$ is a non-empty set (called the \emph{sample space}), $\mathcal{F}$ is a \emph{$\sigma$-algebra} over $\Omega$ (i.e.~a collection of subsets of $\Omega$ that contains the empty set $\emptyset$ and is closed under complementation and countable union) and $\probm$ is a \emph{probability measure} on $\mathcal{F}$, i.e.~a function $\probm\colon \mathcal{F}\rightarrow[0,1]$ such that (i) $\probm(\Omega)=1$ and
(ii) for all set-sequences $A_1,A_2,\dots \in \mathcal{F}$ that are pairwise-disjoint
(i.e.~$A_i \cap A_j = \emptyset$ whenever $i\ne j$)
it holds that $\sum_{i=1}^{\infty}\probm(A_i)=\probm\left(\bigcup_{i=1}^{\infty} A_i\right)$.
Elements of $\mathcal{F}$ are called \emph{events}.
An event $A\in\mathcal{F}$ holds \emph{almost-surely} (a.s.) if $\probm(A)=1$.

\paragraph{Random Variables.} A \emph{random variable} $X$ from a probability space $(\Omega,\mathcal{F},\probm)$
is an $\mathcal{F}$-measurable function $X\colon \Omega \rightarrow \Rset \cup \{-\infty,+\infty\}$, i.e.~a function such that for all $d\in \Rset \cup \{-\infty,+\infty\}$, the set $\{\omega\in \Omega\mid X(\omega)<d\}$ belongs to $\mathcal{F}$.

\paragraph{Expectation.} The \emph{expected value} of a random variable $X$ from a probability space $(\Omega,\mathcal{F},\probm)$, denoted by $\expv(X)$, is defined as the Lebesgue integral of $X$ w.r.t. $\probm$, i.e.~$\expv(X):=\int X\,\mathrm{d}\probm$.
The precise definition of Lebesgue integral is somewhat technical and is
omitted  here (cf.~\cite[Chapter 5]{probabilitycambridge} for a formal definition).
If $\mbox{\sl range}~X=\{d_0,d_1,\ldots\}$ is countable, then we have
$\expv(X)=\sum_{k=0}^\infty d_k\cdot \probm(X=d_k)$.

\paragraph{Filtrations.}
A \emph{filtration} of a probability space $(\Omega,\mathcal{F},\probm)$ is an infinite sequence $\{\mathcal{F}_n \}_{n\in\Nset_0}$ of $\sigma$-algebras over $\Omega$ such that $\mathcal{F}_n \subseteq \mathcal{F}_{n+1} \subseteq\mathcal{F}$ for all $n\in\Nset_0$. Intuitively, a filtration models the information available at any given point of time.

\paragraph{Conditional Expectation.}
Let $X$ be any random variable from a probability space $(\Omega, \mathcal{F},\probm)$ such that $\expv(|X|)<\infty$.
Then, given any $\sigma$-algebra $\mathcal{G}\subseteq\mathcal{F}$, there exists a random variable (from $(\Omega, \mathcal{F},\probm)$), denoted by $\condexpv{X}{\mathcal{G}}$, such that:
\begin{compactitem}
\item[(E1)] $\condexpv{X}{\mathcal{G}}$ is $\mathcal{G}$-measurable, and
\item[(E2)] $\expv\left(\left|\condexpv{X}{\mathcal{G}}\right|\right)<\infty$, and
\item[(E3)] for all $A\in\mathcal{G}$, we have $\int_A \condexpv{X}{\mathcal{G}}\,\mathrm{d}\probm=\int_A {X}\,\mathrm{d}\probm$.
\end{compactitem}
The random variable $\condexpv{X}{\mathcal{G}}$ is called the \emph{conditional expectation} of $X$ given $\mathcal{G}$.
The random variable $\condexpv{X}{\mathcal{G}}$ is a.s. unique in the sense that if $Y$ is another random variable satisfying (E1)--(E3), then $\probm(Y=\condexpv{X}{\mathcal{G}})=1$.
We refer to ~\cite[Chapter~9]{probabilitycambridge} for details. Intuitively, $\condexpv{X}{\mathcal{G}}$ is the expectation of $X$, when assuming the information in $\mathcal{G}$.

\paragraph{Discrete-Time Stochastic Processes.}
A \emph{discrete-time stochastic process} is a sequence $\Gamma=\{X_n\}_{n\in\Nset_0}$ of random variables where $X_n$'s are all from some probability space $(\Omega,\mathcal{F},\probm)$.
The process $\Gamma$ is \emph{adapted to} a filtration $\{\mathcal{F}_n\}_{n\in\Nset_0}$ if for all $n\in\Nset_0$, $X_n$ is $\mathcal{F}_n$-measurable. Intuitively, the random variable $X_i$ models some value at the $i$-th step of the process.

\paragraph{Difference-Boundedness.}
A discrete-time stochastic process $\Gamma=\{X_n\}_{n\in\Nset_0}$ adapted to a filtration $\{\mathcal{F}_n\}_{n\in\Nset_0}$ is \emph{difference-bounded} if there exists a $c\in(0,\infty)$ such that for all $n\in\Nset_0,\ $ $|X_{n+1}-X_n|\le c$ almost-surely.

\mingzhang{
\paragraph{Martingales and Supermartingales.} A discrete-time stochastic process $\Gamma=\{X_n\}_{n\in\Nset_0}$ adapted to a filtration $\{\mathcal{F}_n\}_{n\in\Nset_0}$ is a \emph{martingale} (resp. \emph{supermartingale})
if for every $n\in\Nset_0$, $\expv(|X_n|)<\infty$ and it holds a.s. that $\condexpv{X_{n+1}}{\mathcal{F}_n} = X_n$ (resp. $\condexpv{X_{n+1}}{\mathcal{F}_n}\le X_n$).
We refer to ~\cite[Chapter~10]{probabilitycambridge} for a deeper treatment.

Intuitively, a martingale (resp. supermartingale) is a discrete-time stochastic process in which for an observer who has seen the values of $X_0, \ldots, X_n$, the expected value at the next step, i.e.~$\condexpv{X_{n+1}}{\mathcal{F}_n}$, is equal to (resp. no more than) the last observed value $X_n$. Also, note that in a martingale, the observed values for $X_0, \ldots, X_{n-1}$ do not matter given that $\condexpv{X_{n+1}}{\mathcal{F}_n} = X_n.$ In contrast, in a supermartingale, the only requirement is that $\condexpv{X_{n+1}}{\mathcal{F}_n} \leq X_n$ and hence $\condexpv{X_{n+1}}{\mathcal{F}_n}$ may depend on $X_0, \ldots, X_{n-1}.$ Also, note that $\mathcal{F}_n$ might contain more information than just the observations of $X_i$'s. 
}

\rd{
\begin{example}
	Consider an unbiased and discrete random walk, in which we start at a position $X_0$, and at each second walk one step to either left or right with equal probability. Let $X_n$ denote our position after $n$ seconds. It is easy to verify that $\expv[X_{n+1} \vert X_0, \ldots, X_n] = \frac{1}{2} (X_n - 1) + \frac{1}{2} (X_n + 1) = X_n.$ Hence, this random walk is a martingale. Note that by definition, every martingale is also a supermartingale.
	As another example, consider the classical gambler's ruin: a gambler starts with $Y_0$ dollars of money and bets continuously until he loses all of his money. If the bets are unfair, i.e.~the expected value of his money after a bet is less than its expected value before the bet, then the sequence $\{Y_n\}_{n \in \mathbb{N}_0}$ is a supermartingale. In this case, $Y_n$ is the gambler's total money after $n$ bets. On the other hand, if the bets are fair, then $\{Y_n\}_{n \in \mathbb{N}_0}$ is a martingale.
\end{example}
}

\rd{
	\begin{example}[P\'olya's Urn~\cite{mahmoud2008polya}]
		As a more interesting example, consider an urn that initially contains $R_0$ red and $B_0$ blue marbles ($R_0+B_0>0$). At each step, we take one marble from the urn, chosen uniformly at random, look at its color and then add two marbles of that color to the urn. Let $B_n, R_n$ and $M_n$ respectively be the number of red, blue and all marbles after $n$ steps. Also, let $\beta_n = \frac{B_n}{M_n}$ and $\rho_n = \frac{R_n}{M_n}$ be the proportion of marbles that are blue (resp. red) after $n$ steps. Let $\mathcal{F}_n$ model the observations until the $n$-th step. The process described above leads to the following equations:
		$$
		M_{n+1} = 1 + M_n,
		$$
		$$
		\condexpv{B_{n+1}}{\mathcal{F}_n} = \condexpv{B_{n+1}}{B_1, \ldots, B_n} = \frac{B_n}{M_n} \cdot (B_n + 1) + \frac{R_n}{M_n} \cdot B_n,
		$$
		$$
		\condexpv{R_{n+1}}{\mathcal{F}_n}=\condexpv{R_{n+1}}{B_1, \ldots, B_n} = \frac{R_n}{M_n} \cdot (R_n + 1) + \frac{B_n}{M_n} \cdot R_n.
		$$
		Note that we did not need to care about observing $R_i$'s, $M_i$'s, $\beta_i$'s or $\rho_i$'s, because they can be uniquely computed in terms of $B_i$'s. More generally, an observer can observe only $B_i$'s, or only $R_i$'s, or only $\beta_i$'s or $\rho_i$'s and can then compute the rest using this information. Based on the equations above, we have:
		$$
		\condexpv{\beta_{n+1}}{\mathcal{F}_n} = \frac{B_n}{M_n} \cdot \frac{B_{n}+1}{M_n + 1} + \frac{M_n-B_n}{M_n} \cdot \frac{B_n}{M_n + 1} = \frac{B_n}{M_n} = \beta_n, 
		$$
		$$
		\condexpv{\rho_{n+1}}{\mathcal{F}_n} = \frac{R_n}{M_n} \cdot \frac{R_{n}+1}{M_n + 1} + \frac{M_n-R_n}{M_n} \cdot \frac{R_n}{M_n + 1} = \frac{R_n}{M_n} = \rho_n. 
		$$
		Hence, both $\{\beta_n\}_{n \in \mathbb{N}_0}$ and $\{\rho_n\}_{n \in \mathbb{N}_0}$ are martingales. Informally, this means that the expected proportion of blue marbles in the next step is exactly equal to their observed proportion in the current step. This might be counter-intuitive. For example, consider a state where $99\%$ of the marbles are blue. Then, it is more likely that we will add a blue marble in the next state. However, this is mitigated by the fact that adding a blue marble changes the proportions much less dramatically than adding a red marble. 
	\end{example}
}

\subsection{Syntax}\label{subsec:syntax}

In the sequel, we fix two disjoint countable sets: the set of \emph{program variables} and the set of \emph{sampling variables}.
Informally, program variables are directly related to the control flow of a program, while sampling variables represent random inputs sampled from distributions.
We assume that
every program variable is integer-valued,
and every sampling variable is bound to a discrete probability distribution over integers.
We first define several basic notions and then present the syntax.

\paragraph{Valuations.} A \emph{valuation} over a finite set $V$ of variables is a function $\nu : V \rightarrow \mathbb{Z}$ that assigns a value to each variable. The set of all valuations over $V$ is denoted by $\val{V}$.

\paragraph{Arithmetic Expressions.} An \emph{arithmetic expression} $\mathfrak{e}$ over a finite set $V$ of variables is an expression built from the variables in $V$, integer constants, and
arithmetic operations such as addition, multiplication, exponentiation, etc.
For our theoretical results
we consider a general setting for arithmetic expressions in which the set of allowed arithmetic operations can be chosen arbitrarily.

\paragraph{Propositional Arithmetic Predicates.}
A \emph{propositional arithmetic predicate} over a finite set $V$ of variables is a propositional formula $\phi$ built from (i) atomic formulae of the form $\mathfrak{e}\Join\mathfrak{e}'$ where $\mathfrak{e},\mathfrak{e}'$ are arithmetic expressions and $\Join\in\{<,\le, >,\ge\}$, and (ii) propositional connectives such as $\vee,\wedge,\neg$.
The satisfaction relation $\models$ between a valuation $\nu$ and a propositional arithmetic predicate $\phi$ is defined through evaluation and standard semantics of propositional connectives, e.g.~(i)~$\nu\models \mathfrak{e}\Join\mathfrak{e}'$ iff $\mathfrak{e}\Join \mathfrak{e}'$ holds when the variables in $\mathfrak{e},\mathfrak{e}'$ are substituted by their values in $\nu$, (ii) $\nu\models\neg\phi$ iff $\nu\not\models\phi$ and (iii) $\nu\models\phi_1\wedge\phi_2$ (resp. $\nu\models\phi_1\vee\phi_2$) iff $\nu\models\phi_1$ and (resp. or) $\nu\models\phi_2$.

\paragraph{The Syntax.}
Our syntax is illustrated by the grammar in Figure~\ref{fig:syntax}.
Below, we explain the grammar.
\begin{compactitem}
\item \emph{Variables.} Expressions $\langle\mathit{pvar}\rangle$ (resp. $\langle\mathit{rvar}\rangle$) range over program (resp. sampling) variables.
\item \emph{Arithmetic Expressions.} Expressions $\langle\mathit{expr}\rangle$ (resp. $\langle\mathit{pexpr}\rangle$) range over arithmetic expressions over all program and sampling variables (resp. all program variables). %
\item \emph{Boolean Expressions.} Expressions $\langle\mathit{bexpr}\rangle$ range over propositional arithmetic predicates over program variables.
\item \emph{Programs.} A program can either be a single assignment statement (indicated by `$:=$'), or
`\textbf{skip}' which is the special statement that does nothing, or a
conditional branch (indicated by `\textbf{if}  $\langle\mathit{bexpr}\rangle$'),
or a non-deterministic branch (indicated by `\textbf{if}  $\mbox{$\star$}$'),
or a probabilistic branch (indicated by `\textbf{if} \textbf{prob}($p$)', where $p\in[0,1]$ is the probability of executing the \textbf{then} branch and $1-p$ that of the \textbf{else} branch),
or a while loop (indicated by the keyword `\textbf{while}'), or a sequential composition of two subprograms (indicated by semicolon).
\end{compactitem}

\paragraph{Program Counters.}
We assign a \emph{program counter} to each assignment statement, skip, if branch and while loop.
Intuitively, the counter specifies the current point in the execution of a program. We also refer to program counters as \emph{labels}.

\begin{figure}

\[\begin{array}{rrl}
\langle \mathit{prog}\rangle &::=&  \mbox{`\textbf{skip}'}\\
&& \mid\langle\mathit{pvar}\rangle \,\mbox{`$:=$'}\, \langle\mathit{expr} \rangle\\
&&  \mid\langle\mathit{prog}\rangle \, \text{`;'} \langle\mathit{prog}\rangle\\
&&  \mid\mbox{`\textbf{if}'} \, \langle\mathit{bexpr}\rangle\,\mbox{`\textbf{then}'} \, \langle \mathit{prog}\rangle \, \mbox{`\textbf{else}'} \, \langle \mathit{prog}\rangle \,\mbox{`\textbf{fi}'}\\
&& \mid \mbox{`\textbf{if}'} \, \mbox{`$\star$'}\,\mbox{`\textbf{then}'} \, \langle \mathit{prog}\rangle \, \mbox{`\textbf{else}'} \, \langle \mathit{prog}\rangle \,\mbox{`\textbf{fi}'}\\
&& \mid \mbox{`\textbf{if}'} \, \mbox{`\textbf{prob$(p)$}'}\,\mbox{`\textbf{then}'} \, \langle \mathit{prog}\rangle \, \mbox{`\textbf{else}'} \, \langle \mathit{prog}\rangle \,\mbox{`\textbf{fi}'}\\
&&\mid  \mbox{`\textbf{while}'}\, \langle\mathit{bexpr}\rangle \, \text{`\textbf{do}'} \, \langle \mathit{prog}\rangle \, \text{`\textbf{od}'}
\\
\vspace{\baselineskip}
\\
\langle\mathit{literal} \rangle &::=& \langle\mathit{pexpr} \rangle\, \mbox{`$\leq$'} \,\langle\mathit{pexpr} \rangle \mid \langle\mathit{pexpr} \rangle\, \mbox{`$\geq$'} \,\langle\mathit{pexpr} \rangle
\\
\langle \mathit{bexpr}\rangle &::=&  \langle \mathit{literal} \rangle \mid \neg \langle\mathit{bexpr}\rangle \mid \langle \mathit{bexpr} \rangle \, \mbox{`\textbf{or}'} \, \langle\mathit{bexpr}\rangle
\\
&&\mid \langle \mathit{bexpr} \rangle \, \mbox{`\textbf{and}'} \, \langle\mathit{bexpr}\rangle
\end{array}\]

\caption{The syntax of probabilistic programs}
\label{fig:syntax}
\end{figure}

\subsection{Semantics}

To specify the semantics of our probabilistic programs, we follow previous approaches, such as~\cite{SriramCAV,ChatterjeeFNH16,ChatterjeeFG16}, and use Control Flow Graphs (CFGs) and Markov Decision Processes (MDPs) (see~\cite[Chapter 10]{DBLP:books/daglib/0020348}).
Informally, a CFG describes how the program counter and valuations over program variables
change along an execution of a program.
Then, based on the CFG, one can construct an MDP as the semantical model of the probabilistic program.

\begin{definition}[Control Flow Graphs]
\label{def:CFG}
A \emph{Control Flow Graph} (CFG) is a tuple
\begin{equation}\label{eq:cfg}
\mathcal{G}=(\locs,(\pvars,\rvars),\transitions)
\end{equation}
with the following components:
\begin{compactitem}
\item $\locs$ is a finite set of \emph{labels}, which is partitioned into the set $\blocs$ of \emph{conditional branch} labels, the set $\alocs$ of \emph{assignment} labels,
       the set $\plocs$ of \emph{probabilistic} labels and the set $\dlocs$ of \emph{non-deterministic branch} labels;
\item $\pvars$ and $\rvars$ are disjoint finite sets of \emph{program} and \emph{sampling} variables,  respectively;
\item $\transitions$ is a \emph{transition relation} in which every member (called a \emph{transition}) is a tuple of the form
$(\loc,\alpha,\loc')$ for which (i) $\loc$ (resp. $\loc'$) is the \emph{source label} (resp. \emph{target label}) in $\locs$
and (ii) $\alpha$ is either a propositional arithmetic predicate if $\loc\in\blocs$, or an \emph{update function} $u:\val{\pvars}\times\val{\rvars}\rightarrow \val{\pvars}$ if $\loc\in\alocs$,
or $p\in[0,1]$ if $\loc\in\plocs$
or $\star$ if $\loc\in\dlocs$.
\end{compactitem}
We always specify an \emph{initial} label $\lin\in\locs$ representing the starting point of the program, and
a \emph{terminal} label $\lout\in\locs$
that represents termination and has no outgoing transitions.
\end{definition}

\paragraph{Intuition for CFGs.} Informally, a control flow graph specifies how the program counter and values for program variables change in a program.
We have three types of labels, namely \emph{branching},  \emph{assignment} and \emph{non-deterministic}.
The initial label $\lin$ corresponds to the initial statement of the program.
A conditional branch label corresponds to a conditional branching statement indicated by `\textbf{if} $\phi$' or `\textbf{while} $\phi$'
, and leads to the next label determined by $\phi$ without changing the valuation.
An assignment label corresponds to an assignment statement indicated by `$:=$' or $\mathbf{skip}$,
and leads to the next label right after the statement and an update to the value of the variable on the left hand side of `$:=$' that is specified by its right hand side. This update can be seen as a function that gives the next valuation over program variables based on the current valuation and the sampled values. The statement `\textbf{skip}' is treated as an assignment statement that does not change values.
A probabilistic branch label corresponds to a probabilistic branching statement indicated by `\textbf{if} \textbf{prob}$(p)$', and leads to the label of `\textbf{then}' (resp. `\textbf{else}') branch with probability $p$  (resp. $1-p$).
A non-deterministic branch label corresponds to non-deterministic choice statement indicated by `\textbf{if} $\star$', and has transitions to the two labels corresponding to the `\textbf{then}' and `\textbf{else}' branches.

By standard constructions, one can transform any probabilistic program into an equivalent CFG.
We refer to ~\cite{SriramCAV,ChatterjeeFNH16,ChatterjeeFG16} for details.

\begin{example}\label{example:cfg}
Consider the probabilistic program in Figure~\ref{example:counter}~(left). Its CFG is given in Figure~\ref{example:counter}~(right).
In this program, $x$, $y$ and $z$ are program variables, and $r$ is a sampling variable that observes the probability distribution $\probm(r=1)=\probm(r=-1)=0.5$.
The numbers $1$--$10$ are the program counters (labels).
In particular, $1$ is the initial label and $10$ is the terminal label.
The arcs represent transitions in the CFG.
For example, the arc from $5$ to $7$ specifies the transition from label $5$ to label $7$ with the update function $x\mapsto x+r$ that assigns to program variable $x$, the value of the expression $x+r$, obtained by adding the value of $x$ to a sampled value for the sampling variable $r$.
\end{example}

\begin{figure}
	\begin{minipage}{0.45\linewidth}
	\lstset{language=prog}
	\lstset{tabsize=4}	
	\begin{lstlisting}[mathescape,basicstyle=\small]
	1:	while $x\geq 1$ do
	2:		$z:=y$;
	3:		while $z\geq0$ do
	4:			if $x<2$ then
	5:				$x:=x+r$
				else
	6: 				skip
	  			fi;
	7:			$z:=z-1$
      		od;
	8:		$y:=4 \times y$;
	9:		$x:=x-1$
	  	od
	10:
	\end{lstlisting}
	\end{minipage}
	\begin{minipage}{0.45\linewidth}
		\includegraphics[width=0.95\linewidth,keepaspectratio]{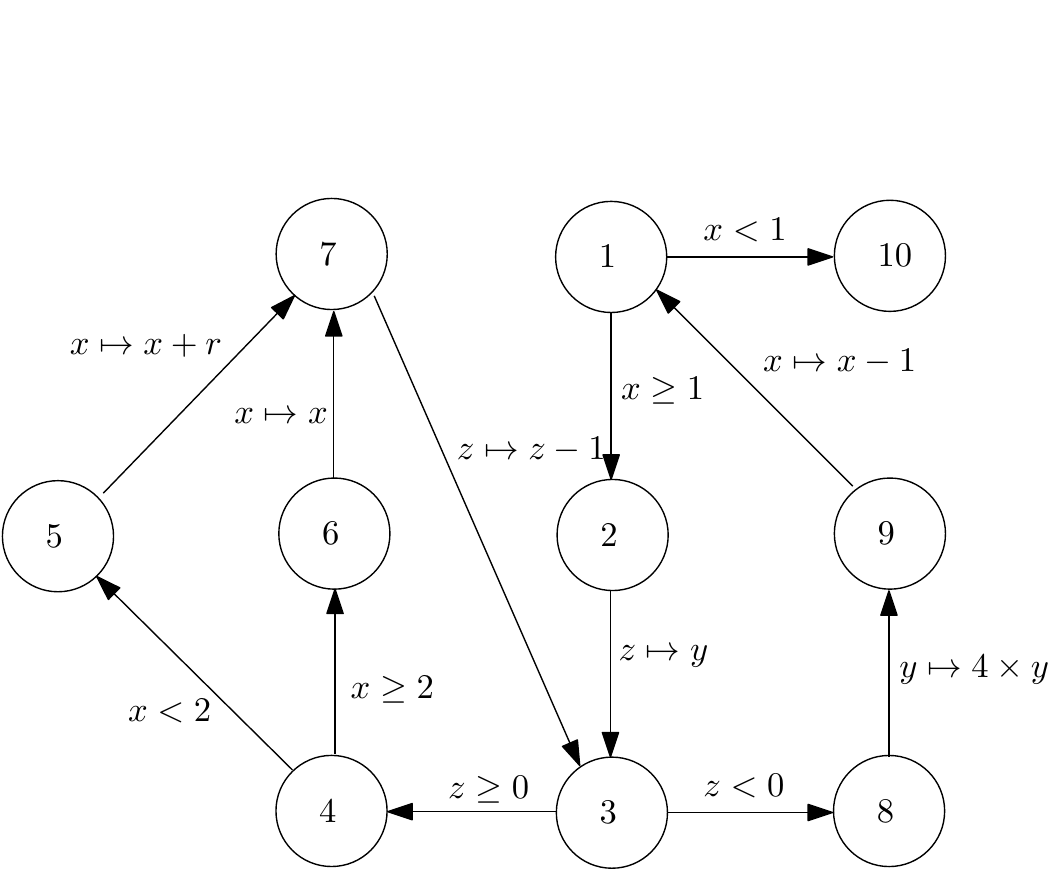}
	\end{minipage}
	\caption{A probabilistic program (left) and its CFG (right). In this program, $\probm(r=1) = \probm(r=-1) = 0.5.$}\label{example:counter} \label{fig:cfgrunning}
\end{figure}

\paragraph{The Semantics.} Based on CFGs, we define the semantics of probabilistic programs through the standard notion of Markov decision processes.
Below, we fix a probabilistic program $P$ with its CFG in form (\ref{eq:cfg}).
We define the notion of \emph{configurations} such that a configuration is a pair $(\loc,\nu)$, where $\loc$ is a label (representing the current program counter) and $\nu\in\val{\pvars}$ is a valuation (representing the current valuation for program variables).
We also fix a \emph{sampling function} $\Upsilon$ which
assigns to every sampling variable $r\in\rvars$, a discrete probability distribution over $\Zset$.
Then, the \emph{joint} discrete probability distribution $\sampdpd$ over $\val{\rvars}$ is defined as
$\sampdpd(\mu):=\prod_{r\in\rvars} \Upsilon(r)(\mu(r))$ for all valuations $\mu$ over sampling variables.

The semantics is described by a Markov decision process (MDP).
Intuitively, the MDP models the stochastic transitions, i.e.~how the current configuration jumps to the next configuration.
The state space of the MDP is the set of all configurations. %
The actions are $\tau$, $\textbf{th}$ and $\textbf{el}$ and correspond to
the absence of non-determinism, taking the \textbf{then}-branch of a non-deterministic branch label, and taking the $\textbf{else}$-branch of
a non-deterministic branch label, respectively.
The MDP transition probabilities are determined by the current configuration, the action chosen for the configuration and the statement at the current configuration.

To resolve non-determinism in MDPs, we use schedulers.
A \emph{scheduler} $\sigma$ is a function which maps every history, i.e.~all information up to the current execution point, to a probability distribution over the actions available at the current state. Informally, it resolves non-determinism
by discrete probability distributions over actions that specify the probability of taking each action.

From the MDP semantics, the behavior of a probabilistic program $P$ with its CFG in the form~(\ref{eq:cfg})
is described as follows:
Consider an arbitrary scheduler $\sigma$.
The program starts in an initial configuration $(\loc_0,\initval)$ where $\loc_0=\lin$.
Then in each step $i$ ($i\ge 0$), given the current configuration $(\loc_{i},\nu_{i})$, the next configuration $(\loc_{i+1},\nu_{i+1})$ is determined as follows:
\begin{compactenum}
\item
a valuation $\mu_i$ of the sampling variables is sampled according to the joint distribution $\sampdpd$; %
\item
if $\loc_i\in\alocs$ and $(\loc_i, u, \loc')$ is the transition in $\transitions$ with source label $\loc_i$ and update function $u$, then $(\loc_{i+1},\nu_{i+1})$ is set to be
$(\loc',u(\nu_{i},\mu_i))$;
\item
if $\loc_i\in\blocs$ and $(\loc_i, \phi, \loc'),(\loc_i, \neg\phi,\loc'')$ are the two transitions in $\transitions$ with source label $\loc_i$, then
$(\loc_{i+1},\nu_{i+1})$ is set to be either (i) $(\loc',\nu_i)$ if $\nu_i\models\phi$, or (ii) $(\loc'',\nu_i)$ if $\nu_i\models\neg\phi$;
\item
if $\loc_i\in\dlocs$ and $(\loc_i, \star, \loc')$, $(\loc_i, \star, \loc''$) are the transitions in $\transitions$ with source label $\loc_i$, then $(\loc_{i+1},\nu_{i+1})$ is set to be
$(\loc''',\nu_{i})$, where the label $\loc'''$ is chosen from $\loc',\loc''$ using the scheduler $\sigma$.
\item
if $\loc_i\in\plocs$ and $(\loc_i, p, \loc'),(\loc_i, 1-p,\loc'')$ are the two transitions in $\transitions$ with source label $\loc_i$, then
$(\loc_{i+1},\nu_{i+1})$ is set to be either (i) $(\loc',\nu_i)$ with probability $p$, or (ii) $(\loc'',\nu_i)$ with probability $1-p$;

\item
if there is no transition in $\transitions$ emitting from $\loc_i$ (i.e.~if $\loc_i=\lout$), then $(\loc_{i+1}, \nu_{i+1})$ is set to be $(\loc_i,\nu_{i})$.
\end{compactenum}
For a detailed construction of the MDP, see Appendix~\ref{app:semantics}.

\paragraph{Runs and the Probability Space.}
A \emph{run} is an infinite sequence of configurations.
Informally, a run $\{(\loc_{n},\nu_{n})\}_{n\in\Nset_0}$ specifies that the configuration at the $n$-th step
of a program execution is $(\loc_{n},\nu_{n})$, i.e.~the program counter (resp. the valuation for program variables) at the $n$-th step is $\loc_{n}$ (resp. $\nu_{n}$).
By construction, with an initial configuration $\initcon$
(as the initial state of the MDP) and a scheduler $\sigma$,
the Markov decision process for a probabilistic program induces a unique probability space over the runs (see ~\cite[Chapter 10]{DBLP:books/daglib/0020348} for details).
In the rest of the paper, we denote by $\probm^\sigma_\initcon$ the probability measure under the initial configuration $\initcon$ and the scheduler $\sigma$, and by $\expv^\sigma_\initcon(-)$ the corresponding expectation operator.

\section{Problem Statement}\label{sec:ps}
In this section, we define the \amir{modular} verification problem of almost-sure termination over probabilistic programs.
Below, we fix a probabilistic program $P$ with its CFG in the form (\ref{eq:cfg}).
We first define the notion of almost-sure termination.
Informally, the property of almost-sure termination requires that a program terminates with probability~$1$.
We follow the definitions in ~\cite{SriramCAV,HolgerPOPL,ChatterjeeFNH16}.

\begin{definition}[Almost-sure Termination]\label{def:astermination}
A run $\omega=\{(\loc_n,\nu_n)\}_{n\in\Nset_0}$ of a program $P$ is \emph{terminating} if  $\loc_n=\lout$ for some $n\in\Nset_0$.
We define the \emph{termination time} as a random variable $T$ such that for a run $\omega=\{(\loc_n,\nu_n)\}_{n\in\Nset_0}$, $T(\omega)$ is the smallest $n$ such that $\loc_n=\lout$ if such an $n$ exists (this case corresponds to program termination), and $\infty$ otherwise (this corresponds to non-termination).
The program $P$ is said to be \emph{almost-surely (a.s.) terminating} under initial configuration $\initcon$ if $\probm^\sigma_\initcon(T<\infty)=1$ for all schedulers $\sigma$.
\end{definition}

\begin{lemma}\label{lem:seqcond}
	Let the program $P$ be the sequential (resp.~branching) composition of two other programs $P_1$ and $P_2$, i.e.~$P := P_1 ; P_2$ (resp.~$P := \textbf{if}~~- ~~\textbf{then}~~P_1~~\textbf{else}~~P_2~~\textbf{fi}$), and assume that both $P_1$ and $P_2$ are a.s. terminating for any initial value. Then, $P$ is also a.s. terminating for any initial value.
\end{lemma}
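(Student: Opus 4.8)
The plan is to treat the sequential case $P := P_1; P_2$ in detail, since the branching case is an easier variant of the same argument. In the CFG of $P$, the terminal label $\lout$ of $P_1$ is identified with the initial label $\lin$ of $P_2$; call this junction label $\loc^\ast$. Every run of $P$ first traverses the sub-CFG of $P_1$ until it first reaches $\loc^\ast$, and only afterwards can it reach the terminal label $\lout$ of $P$. I would fix an arbitrary initial configuration $\initcon$ and an arbitrary scheduler $\sigma$, and aim to show $\probm^\sigma_\initcon(T < \infty) = 1$, where $T$ is the termination time of $P$.

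First I would introduce the stopping time $T^\ast$, the first index $n$ with $\loc_n = \loc^\ast$ (and $T^\ast = \infty$ if no such index exists). Since any path in the CFG from $\lin$ to $\lout$ must pass through $\loc^\ast$, we always have $\{T < \infty\} \subseteq \{T^\ast < \infty\}$, so it suffices to establish (i) $\probm^\sigma_\initcon(T^\ast < \infty) = 1$, and (ii) conditioned on $\{T^\ast < \infty\}$, the run reaches $\lout$ within finitely many further steps almost surely. For (i), the key observation is that the evolution of $P$ up to time $T^\ast$ coincides with the evolution of the standalone program $P_1$ up to its own termination: before $\loc^\ast$ is reached the two CFGs agree transition-for-transition, and in standalone $P_1$ the label $\loc^\ast$ (its $\lout$) is absorbing, so its termination time equals the first hitting time of $\loc^\ast$. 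The scheduler $\sigma$, restricted to histories lying entirely inside the $P_1$-portion of the CFG, induces a scheduler $\sigma_1$ for $P_1$. Under the coupling of the two processes up to $T^\ast$, we obtain $\probm^\sigma_\initcon(T^\ast < \infty) = \probm^{\sigma_1}_\initcon(T_{P_1} < \infty)$, and the latter equals $1$ because $P_1$ is assumed a.s.\ terminating for every initial value and every scheduler, in particular $\sigma_1$.

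For (ii), I would invoke the strong Markov property of the MDP at the stopping time $T^\ast$. On the event $\{T^\ast < \infty\}$ the run is at configuration $(\loc^\ast, \nu_{T^\ast})$, where $\nu_{T^\ast}$ is $\mathcal{F}_{T^\ast}$-measurable. Conditioned on $\mathcal{F}_{T^\ast}$ and $\{T^\ast < \infty\}$, the run shifted by $T^\ast$ steps behaves exactly like a run of the standalone program $P_2$ started from initial valuation $\nu_{T^\ast}$ under the scheduler $\sigma_2$ induced by $\sigma$ after time $T^\ast$. Since $P_2$ is a.s.\ terminating for every initial value and every scheduler, the conditional probability of eventually reaching $\lout$ equals $1$ for each value of $\nu_{T^\ast}$; integrating over the possible values of $\nu_{T^\ast}$ via the tower property yields $\probm^\sigma_\initcon(T < \infty \mid T^\ast < \infty) = 1$. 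Combining (i) and (ii) gives $\probm^\sigma_\initcon(T < \infty) = 1$, and since $\sigma$ and $\initcon$ were arbitrary, $P$ is a.s.\ terminating. For the branching composition, the first transition merely moves (without altering the valuation) to $\lin$ of $P_1$ or of $P_2$ — deterministically, non-deterministically, or probabilistically according to the guard — after which the run is simply a run of $P_1$ or of $P_2$ from the original initial valuation, so a.s.\ termination of both subprograms immediately gives a.s.\ termination of $P$.

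The main obstacle I anticipate is making these two reductions rigorous at the level of the MDP and its induced probability space: precisely defining the induced schedulers $\sigma_1, \sigma_2$ from $\sigma$, justifying the coupling that identifies the $P_1$-phase of $P$ with standalone $P_1$, and correctly applying the strong Markov property at the random time $T^\ast$ (verifying that $T^\ast$ is a genuine stopping time and that the post-$T^\ast$ dynamics depend only on the configuration $(\loc^\ast, \nu_{T^\ast})$). The universal quantifier over schedulers in the a.s.\ termination hypotheses is precisely what makes the argument go through, since we have no control over which schedulers $\sigma_1$ and $\sigma_2$ the decomposition produces.
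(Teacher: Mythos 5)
Your proposal is correct and takes essentially the same approach as the paper: the paper's proof rests on the decomposition $\probm_\initcon^\sigma(T < \infty) = \sum_{\nu \in \val{V_p}} \probm_\initcon^\sigma (T_1 < \infty \wedge F_1 = \nu) \cdot \probm_\nu^\sigma(T_2 < \infty)$, which is exactly your strong-Markov/tower-property argument at the junction label, with the universal quantification of $P_2$'s a.s.~termination over initial values (and schedulers) making every conditional factor equal to $1$. You merely spell out the MDP-level technicalities (stopping time, induced schedulers, coupling) that the paper leaves implicit, and your branching-case argument matches the paper's one-line observation as well.
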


\mingzhang{
\begin{proof}
	We first prove the sequential case. Let $V_p = \{x_1, x_2, \ldots, x_m\}$ be the set of program variables in $P$ and $T, T_1, T_2$ be the termination time random variables of $P$, $P_1$ and $P_2$, respectively. Define the vector $F_1$ of random variables as follows: if $\omega = \{ (\loc_j, \nu_j)\}_{j \in \mathbb{N}_0}$ is a terminating run of $P_1$ with $T_1(\omega)=n$, i.e.~if $\omega$ terminates at $(\loc_n, \nu_n)$, then $F_1(\omega) = \nu_n$. Intuitively, $(F_1(\omega))_i$ is the random variable that models the value of the $i$-th program variable at termination time of $P_1$. Then, we have:
	$$
	\probm_\initcon^\sigma(T < \infty) = \sum_{\nu \in \val{V_p}} \probm_\initcon^\sigma (T_1 < \infty ~~ \wedge ~~ F_1 = \nu) \cdot \probm_\nu^\sigma(T_2 < \infty).
	$$
	Informally, $P$ terminates if and only if $P_1$ terminates and then $P_2$, run with the initial valuation obtained from the last step of $P_1$, terminates as well. However, $P_2$ is a.s. terminating, hence $\probm_\nu^\sigma(T_2 < \infty) = 1$ for all $\nu$. Therefore,
	\begin{eqnarray*}
		\probm_\initcon^\sigma(T < \infty) & =  & \sum_{\nu \in \val{V_p}} \probm_\initcon^\sigma (T_1 < \infty ~~ \wedge ~~ F_1 = \nu) \\ & = & \probm_\initcon^\sigma(T_1 < \infty)\\ & = & 1,
	\end{eqnarray*}
	so $P$ is also a.s. terminating.
	
	For the branching case, note that if $P$ does not terminate, then at least one of $P_1$ and $P_2$ does not terminate as well. Formally, $\probm_{\initcon}^\sigma(T < \infty) \geq \probm_{\initcon}^\sigma(T_1 < \infty) \cdot \probm_{\initcon}^\sigma(T_2 < \infty) = 1.$
\end{proof}
}

\begin{remark}
 The lemma above shows that a.s. termination is closed under branching and sequential composition. Hence,
in this paper, we
\amir{focus on} the major problem of \amir{modular}
verification for a.s.~termination of nested while loops.
\end{remark}

\paragraph{\hongfei{Modular} \amir{Verification.}} We now define the problem of \hongfei{modular} \amir{verification} of a.s. termination,  \amir{following the terminology of~\cite{DBLP:conf/compos/KupfermanV97}}.
We first describe the notion of \hongfei{modularity} in general.
Consider an
operator $\compo$ (e.g.~sequential composition or loop nesting) over \amir{a set of} objects.
We say that a property $\phi$ is \hongfei{modular} under the operator $\compo$
with a \emph{side condition} $\psi$ (\hongfei{over pairs of objects}) if we have
\begin{equation}\label{def:compo}
(\psi(O,O')\wedge\phi(O)\wedge \phi(O'))\Rightarrow\phi(\compo(O,O'))
\end{equation}
holds for all objects $O,O'$.
In other words, the \hongfei{modularity} of the property $\phi$ says that
if the side condition $\psi$ and the property $\phi$ hold for $O,O'$, then the property $\phi$ also holds on the (bigger) composed object $\compo(O,O')$.
The motivation \amir{behind} \hongfei{modular} \amir{verification} is that \hongfei{it allows one to prove the property incrementally from sub-objects to the whole object}.

\paragraph{The Almost-sure Termination Property.}
In this paper, we are concerned with a.s. termination of while loops. Our aim is to prove this based on the assumption that the loop body is a.s. terminating for \emph{every} initial value. We consider ${\textsc{Tm}}(P)$ to be the target property expressing that the probabilistic program $P$ is a.s. terminating for \emph{every} initial value, and we consider the operator to be the while-loop operator $\textbf{while}$, i.e.~given a probabilistic program $P$ and a propositional arithmetic predicate $G$ (as the loop guard), the probabilistic program $\textbf{while}(G,P)$ is defined as $\textbf{while}~G~\textbf{do}~P~\textbf{od}$.
Since $P$ might itself \amir{contain} another while loop, our setting encompasses probabilistic nested loops of any depth.

We focus on the \amir{modular} verification of ${\textsc{Tm}}(-)$ under the while-loop operator and solve the problem in the following steps.
First, we establish a side condition $\psi$ so that the assertion
\begin{equation}
(\psi(G,P)\wedge\textsc{Tm}(P))\Rightarrow \textsc{Tm}(\textbf{while}(G,P))\label{pro:compana}
\end{equation}
holds for all probabilistic programs $P$ and propositional arithmetic predicates $G$\footnote{Note that we do not define or consider any assertion of the form $\textsc{Tm}(G)$, because checking the condition $G$ always takes finite time.}.
Second, based on the proposed side condition, we explore possible \hongfei{proof-rule} and algorithmic approaches.

\hongfei{
\begin{remark}
	 In modular \amir{verification}, the level of modularity depends \amir{on the complexity of} the side condition. The least amount of modularity is \amir{achieved when} the side condition \amir{is} equivalent to the target property, so that no information from sub-objects is used. \amir{On the other hand,  maximum} modularity is \amir{attained when there is no need to prove a side condition}, resulting in \amir{what is sometimes called} \emph{compositionality} of the property\amir{\footnote{Some authors use the terms ``modular'' and ``compositional'' interchangeably, while others require compositional approaches to have no side conditions. To avoid confusion, we always describe our approach as modular.}}. In our modular approach to prove a.s. termination, we \amir{consider a} side condition that lies in the middle: the side condition neither encodes the \amir{entire} a.s.~termination property, nor \amir{can it be removed}. We note that \amir{it is not possible to find} a modular approach for proving a.s.~termination \amir{for} nested while loops if we
	 forbid side conditions on the loop guard and the loop body, \amir{because} the variables in a loop guard can be updated in the inner loops.
\end{remark}
}

\section{Previous Approaches for Modular Verification of Termination}\label{sect:nontriviality}

In this section, we describe previous approaches for \amir{modular} verification of the (a.s.) termination property for (probabilistic) while loops.
We first present the variant rule from the Floyd-Hoare logic~\cite{rwfloyd1967programs, DBLP:journals/acta/KatzM75} that is sound for non-probabilistic programs.
Then, we describe the probabilistic extension proposed in~\cite{HolgerPOPL}.
\subsection{Classical Approach for Non-probabilistic Programs}\label{sect:vrulenonp}

Consider a non-probabilistic while loop
\begin{equation*}\label{eq:nloop}
P=\textbf{while } G \textbf{ do } P_1;\dots;P_n\textbf{ od}
\end{equation*}
\noindent where the programs $P_1, \dots, P_n$ may contain nested while loops and are assumed to be terminating.
The fundamental approach for \amir{modular verification} is the following classical variant rule (V-rule) from the Floyd-Hoare logic~\cite{rwfloyd1967programs, DBLP:journals/acta/KatzM75}:
$$  \inference[\text{V-RULE}]{\forall k:P_k\text{ terminates and }\{R=z\}P_k\{R\preceq z\}\\ \exists k\,\{R=z\}P_k\{R\prec z\} }{\textbf{while } G \textbf{ do } P_1;\ldots ;P_n\textbf{ od }\ \text{terminates}}$$
In the V-rule above, $R$ is an arithmetic expression over program variables that acts as a
\emph{ranking function}. The relation $\prec$ represents a well-founded relation
when restricted to the loop guard $G$, while the relation $\preceq$ is the ``non-strict'' version of $\prec$ such that
(i) $a\prec b\wedge b\preceq c\Rightarrow a\prec c$ and (ii) $a\preceq b\wedge b\prec c\Rightarrow a\prec c$.
Then, the premise of the rule says that (i)~for all $P_k$, the
value of $R$ after the execution of $P_k$ does not increase
in comparison with its initial value $z$ before the execution, and (ii)~there is some $k$ such that the execution of $P_k$ leads to a decrease in the value of $R$.
If $\{R=z\}P_k\{R\preceq z\}$ holds, then $P_k$ is said to be \emph{unaffecting} for $R$. Similarly, if $\{R=z\}P_k\{R\prec z\}$ holds, then $P_k$ is \emph{ranking} for $R$. %
Informally, the variant rule says that if all $P_k$'s are unaffecting and there is at least one $P_k$ that is ranking, then $P$ terminates.

The variant rule is sound for proving termination of non-probabilistic programs, because the value of $R$ cannot be decremented infinitely many times, given that the relation $\prec$ is well-founded when restricted to the loop guard $G$.

\subsection{A Previous Approach for Probabilistic Programs}

The approach in~\cite{HolgerPOPL} can be viewed as an extension of the abstract V-rule, which is a proof system for a.s. terminating property.
We call this abstract rule the FHV-rule:
  $$\inference[\text{FHV-RULE}]{\forall k:P_k\text{ terminates and }\{R=z\}P_k\{R\preceq z\}\\ \exists k\,\{R=z\}P_k\{R\prec z\} }{\textbf{while } G \textbf{ do } P_1;\ldots ;P_n\textbf{ od }\ \text{terminates}}$$
   Note that while the FHV-rule looks identical to the V-rule, semantics of the Hoare triple in the FHV-rule are different from that of the V-rule (see below).

The FHV-rule is a direct probabilistic extension of the V-rule through the notion of \emph{ranking supermartingales} (RSMs, see ~\cite{SriramCAV,ChatterjeeFNH16,ChatterjeeFG16}).
RSMs are discrete-time stochastic processes that satisfy the following conditions: (i)~their values are always non-negative; and (ii)~at each step of the process, the conditional expectation of the value is decreased by at least a positive constant $\epsilon$.
The decreasing and non-negative nature of RSMs ensures that with probability $1$ and in finite expected number of steps, the value of any RSM hits zero.
When embedded into programs through the notion of RSM-maps (see e.g.~\cite{SriramCAV,ChatterjeeFNH16}),
RSMs serve as a sound approach for proving termination of probabilistic programs with finite expected \hongfei{termination} time, which implies a.s. termination, too.

In~\cite{HolgerPOPL}, the $R$ in the FHV-rule is a propositionally linear expression that represents an RSM, while
$\prec$ is the well-founded relation on non-negative real numbers such that $x\prec y$ iff $x\le y-\epsilon$ for some fixed positive constant $\epsilon$ and $\preceq$ is interpreted simply as $\le$.
Unaffecting and ranking conditions are extended to the probabilistic setting through conditional expectation  (see $Dec_{\le} (-,-), Dec_{<} (-,-)$ on~\cite[Page 9]{HolgerPOPL}). Concretely, we say that (i) $P_k$ is \emph{unaffecting} if the expected value of $R$ after the execution of $P_k$ is no greater than its initial value before the execution; and (ii)
$P_k$ is \emph{ranking} if the expected value of $R$ after the execution of $P_k$ is decreased by at least $\epsilon$ compared with its initial value before the execution.
Note that in~\cite{HolgerPOPL}, $R$ is also called a \emph{compositional RSM}.

\paragraph{Crucial Issue 1: Difference-boundedness and Integrability.}
In~\cite{HolgerPOPL}, the authors accurately observed that simply extending the variant rule with expectation is not enough.
They provided a counterexample in~\cite[Section 7.2]{HolgerPOPL} that is not a.s. terminating but has a compositional RSM.
The problem is that random variables may not be integrable after the execution of a probabilistic while loop.
In order to resolve this integrability issue, they introduced the \hongfei{conditional} difference-boundedness condition (see Section~\ref{sect:preliminaries}) for conditional expectation. %
Then, using the Optional Sampling/Stopping Theorem, they proved that, under \hongfei{this}
condition, the random variables are integrable after the execution of while loops.
To ensure the \hongfei{conditional} difference-bounded condition, they established sound inference rules
(see~\cite[Table 2 and Theorem 7.6]{HolgerPOPL}).
With the integrability issue resolved,~\cite{HolgerPOPL} finally claims that
compositional ranking supermartingales provide a sound approach for proving a.s. termination of
probabilistic while loops (see~\cite[Theorem 7.7]{HolgerPOPL}).
\section{A Counterexample to the FHV-rule}

Although~\cite{HolgerPOPL} takes care of the integrability issue, we show that, unfortunately, the FHV-rule is still not sound.
We present an explicit counterexample on which the FHV-rule proves a.s. termination, while the program is actually not a.s. terminating.

\begin{example}[The Counterexample]\label{ex:counterexample}
\begin{figure}
  \includegraphics{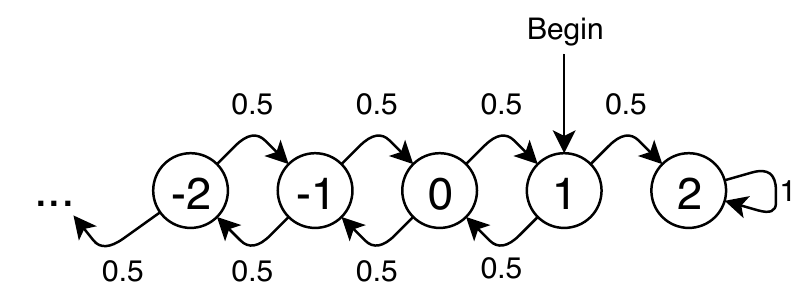}
  \caption{The random walk modeled by the inner loop of the counterexample}~\label{fig:randomwalk}
\end{figure}
\mingzhang{
Consider a $1$-dimensional symmetric random walk as in Figure~\ref{fig:randomwalk}, in which at every location there is a $\frac{1}{2}$ probability of going left and $\frac{1}{2}$ probability of going right, except that there is a barrier at location $2$, i.e.~we remain at $2$ if we reach it. Our inner loop in Figure~\ref{fig:cfgrunning} models this random walk, where the variable $x$ corresponds to the location, and the number of steps taken is $z=y$ (lines $2-3$). In our outer loop, we run this random walk many times, but each time we increase the number of steps exponentially (line $8$) and also, to avoid starting at the barrier, move the location one step to the left (line $9$), before running the walk again.

Note that the program terminates only if $x\neq 2$, i.e.~if we do \emph{not} reach the barrier after some step-bounded random walk (lines $9$ and $1$). It is well-known that a $1$-d symmetric random walk with a barrier will eventually reach the barrier with probability $1$. In our program, the probability of reaching the barrier increases as we increase the number of steps of the random walk.

Moreover, at each iteration of the outer loop we are dramatically increasing the probability of reaching the barrier by increasing the number of steps by a factor of $4$ (line $8$). Therefore, after each iteration of the outer loop, the probability of termination in the next iteration is dramatically decreased and as a result, the program as a whole does not terminate a.s. This argument is formalized in Proposition~\ref{prop:count1}.

We now argue that the FHV-rule wrongly deduces a.s.~termination for this program. Intuitively, all the FHV-rule demands to assert termination is that (i)~the inner loop terminates a.s., and~(ii) there is an integrable expression $R$ such that its expected value decreases in each iteration. Point~(i) is trivial as the inner loop takes at most $z$ steps. For point~(ii), we let $R=x$, i.e.~we use our location in the random walk as the ranking expression. It is easy to verify that the value of $x$ does not change in expectation in the random walk. So, it decreases by $1$ in each iteration of the outer loop (line $9$). Hence, the FHV-rule incorrectly concludes that the program in Figure~\ref{fig:cfgrunning} terminates a.s. This is formalized in Proposition~\ref{prop:flaw} (together with the argument for integrability). The flaw in the FHV-rule is that although the expected value of $x$ decreases, at each iteration we reach $x=2$ (the barrier) with higher and higher probability. So, despite the overall decrease in expectation, $x$ reaches its maximum possible value with high probability, which leads to non-termination.
}
\end{example}
\mingzhang{
We now provide a rigorous proof of the above arguments.
}

\begin{proposition}\label{prop:count1}
The probabilistic program in Example~\ref{ex:counterexample} (Figure~\ref{example:counter}, Page~\pageref{example:counter}) is not a.s. terminating. Specifically, it does not terminate with probability~$1$ when the initial value for the program variable $x$ is $1$ and the initial value for the program variable $y$ is sufficiently large.
\end{proposition}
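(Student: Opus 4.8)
The plan is to exploit the fact that the program in Figure~\ref{example:counter} contains no non-deterministic branch, so there is a unique probability measure over runs (independent of the scheduler), which I denote $\probm_\initcon$, and to show directly that $\probm_\initcon(\tertime<\infty)<1$. First I would describe one outer-loop iteration. Starting from a configuration with program variables $x=1$ and $y$, the inner loop runs a symmetric $\pm1$ random walk on the integers, started at $1$, with an absorbing barrier at $2$ (the \textbf{skip} branch at label~$6$), for $y+1$ steps. If the walk reaches the barrier within these $y+1$ steps it ends at $x'=2$; otherwise it ends at some $x'\le 1$. After the inner loop we set $y:=4y$ and $x:=x'-1$. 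Hence the outer guard $x\ge 1$ is re-entered, with $x$ reset to exactly $1$, precisely when the barrier was hit, and the program terminates (since then $x'-1\le 0$) precisely when it was not. Consequently, writing $y_n:=4^{\,n-1}y_0$ for the value of $y$ at the start of the $n$-th iteration and letting $q_n$ be the probability that the walk of length $y_n+1$ started at $1$ reaches the barrier, the fresh independent sampling of $r$ in each iteration together with the deterministic reset $x=1$ gives $\probm_\initcon(\tertime=\infty)=\prod_{n\ge 1}q_n$. It therefore suffices to prove that this infinite product is strictly positive.

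The key quantitative estimate concerns $1-q_n$, the probability that a symmetric simple random walk started one step below the barrier fails to reach it within $L_n:=y_n+1$ steps. Writing $\tau$ for the first-passage time to the barrier, we have $1-q_n=\probm(\tau>L_n)$, and I would invoke the classical first-passage tail bound for symmetric simple random walk, namely $\probm(\tau>L)\le C/\sqrt{L}$ for some absolute constant $C$ and all $L\ge 1$ (this follows from the reflection principle, or from the exact identity $\probm(\tau>2m)=\binom{2m}{m}2^{-2m}\sim 1/\sqrt{\pi m}$). Since the barrier only absorbs at the hitting time, restricting the walk to the half-line below $2$ does not alter $\probm(\tau>L)$, so the bound applies verbatim. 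With $L_n=4^{\,n-1}y_0+1$ this yields $1-q_n\le C/\sqrt{4^{\,n-1}y_0}=\bigl(C/\sqrt{y_0}\bigr)\,2^{-(n-1)}$.

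Finally I would assemble the pieces. Summing the geometric bound gives $\sum_{n\ge 1}(1-q_n)\le (C/\sqrt{y_0})\sum_{n\ge 1}2^{-(n-1)}=2C/\sqrt{y_0}$, which is finite and, crucially, strictly smaller than $1$ once $y_0$ is large enough (namely $y_0>4C^2$); this is exactly where the hypothesis that the initial value of $y$ is sufficiently large enters. Using the elementary inequality $\prod_{n}(1-a_n)\ge 1-\sum_n a_n$, valid for $a_n\in[0,1)$ with $\sum_n a_n<1$, I conclude $\prod_{n\ge 1}q_n\ge 1-2C/\sqrt{y_0}>0$, and therefore $\probm_\initcon(\tertime<\infty)=1-\prod_{n\ge 1}q_n<1$, so the program is not a.s.\ terminating.

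I expect the main obstacle to be the probabilistic estimate rather than the bookkeeping: I must produce a genuine, non-asymptotic tail bound $\probm(\tau>L)\le C/\sqrt{L}$ that holds uniformly over all $L\ge 1$, including the small-$L$ terms where the asymptotics are not yet sharp, and I must carefully justify the product formula for $\probm_\initcon(\tertime=\infty)$, in particular the independence of successive random walks and the fact that each surviving iteration re-enters with $x$ reset to precisely $1$, so that $q_n$ depends only on $y_n$. The geometric growth of the step counts $L_n$ is what makes the interaction favorable: the failure probabilities $1-q_n$ decay like $2^{-n}$ even though each $q_n$ remains strictly below $1$, so the product stays bounded away from zero.
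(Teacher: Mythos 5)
Your proposal is correct and takes essentially the same route as the paper: both reduce non-termination to the event that the absorbing barrier at $x=2$ is hit in every outer iteration, bound the per-iteration failure probability by $O(1/\sqrt{y_n})=O(2^{-n}/\sqrt{y_0})$ via the first-passage tail of the symmetric walk (which the paper derives explicitly through the Catalan-number path count and Stirling's approximation), and then deduce that the infinite product of hitting probabilities is strictly positive. The only differences are in technical packaging: you cite the classical tail bound $\mathbb{P}(\tau>L)\le C/\sqrt{L}$ and finish with the Weierstrass inequality $\prod_n(1-a_n)\ge 1-\sum_n a_n$, whereas the paper derives the same bound from scratch and instead invokes the criterion that $\prod_i(1-q_i)$ converges to a non-zero limit iff $\sum_i q_i<\infty$.
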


\begin{proof}
The program does not terminate only if the value of $x$ in label $9$ is 2 after every execution of the inner loop. The key point is to prove that in the inner loop, the value of the program variable $x$ will be 2 with higher and higher probability when the value of $y$ increases.
   Consider the random walk in the inner loop. We abstract the values of $x$ as three states `$\le 0$', `$1$' and `$2$'.
   From the structure of the program, we have that if we start with the state `$1$', then after the inner loop, the successor state may transit to either `$\le 0$', `$1$' or `$2$'.
   If the successor state is either `$\le 0$' or `$1$', then the outer loop will terminate immediately. However, there is a positive probability that the successor state is `$2$' and the outer loop does not terminate in this loop iteration (as the value of $x$ will be set back to $1$).
   This probability depends on the steps of the random walk in the inner loop (determined by the value of $y$), and we show that it is higher and higher when the value of $y$ increases.
   Thus, after more and more executions of the outer loop, the value of $y$ continues to increase exponentially, and with higher and higher probability the program would be not terminating in the current execution of the loop body.

The detailed demonstration is as follows: W.l.o.g., we assume that $x=1$ at every beginning of the inner loop. The values of $x$ at label $9$ are the results of the execution of inner loop with the same initial value, hence they are independent mutually.
We now temporarily fix the value $ \hat{y}$ for $ y$ at the beginning of the outer loop body and consider the probability that the value of $x$ in label $9$ is not 2. We use the random variable $\bar{X}_{ \hat{y}}$ to describe the value of $x$ at label $9$ and analyze the situation $\bar{X}_{\hat{y}}\neq 2$ after the $\hat{y}$ iterations of the inner loop.
Suppose that the $\hat{y}$ sampled values for $r$ during the execution of the inner loop consist of $m$ instances of $-1$ and $(\hat{y} - m)$ instances of $1$. Since $\bar{X}_{\hat{y}}\neq 2$, we have  $m\geq\frac{ \hat{y}}{2}$. Then,
there are ${\hat{y}\choose{m}}-{\hat{y}\choose{m+1}}$ different possible paths that avoid being absorbed by the barrier. The reason is that the only way to avoid absorption is to always have more $-1$'s than $1$'s in any prefix of the path. Hence, the number of possible paths is the Catalan number.
 so we have
$\probm(\bar{X}_{\hat{y}} = 2 )=1-\frac{1}{2^{ \hat{y}}}\sum_{\frac{{ \hat{y}}}{2}\leq m\leq \hat{y}}({ \hat{y}\choose{m}}-{ \hat{y}\choose{m+1}}) =1-\frac{1}{2^{\hat{y}}}{{ \hat{y}}\choose{\lceil\frac{ \hat{y}}{2}\rceil}}$.
Since $\sqrt{2\pi}n^{n+\frac{1}{2}}e^{-n}\leq n! \leq e n^{n+\frac{1}{2}}e^{-n}$ for $n\geq 1$ (applying Stirling's approximation), we have $1-\frac{1}{2^{\hat{y}}}{\hat{y}\choose{\lceil\frac{\hat{y}}{2}\rceil}}=1-\frac{\hat{y}!}{2^{\hat{y}}(\frac{\hat{y}}{2}!)^2}\geq
1-\frac{e \hat{y}^{\hat{y}+\frac{1}{2}}e^{-\hat{y}}}{2^{\hat{y}}(\sqrt{2\pi}\frac{\hat{y}}{2}^{\frac{\hat{y}}{2}+\frac{1}{2}}e^{-\frac{\hat{y}}{2}})^2}
=1-\frac{e}{\pi{\sqrt{\hat{y}}}}$ for every even $\hat{y}$.
Note that $\probm(T=\infty)=\prod_{i\in\Nset_0}\probm(\bar{X}_{\hat{y}_i}=2)$, where $\hat{y}_i$ is the value of $y$ at the $i$-th arrival to the label $9$ and recall that $\hat{y}_0$ is sufficiently large.
Furthermore, from the program we have $\hat{y}_i=4^i\cdot \hat{y}_0$.
Let $d:=\frac{e}{\pi\sqrt{\hat{y}_0}}$, and we obtain that $\probm(T=\infty)=\prod_{i\in\Nset_0}\probm(\bar{X}_{\hat{y}_i}=2)=\prod_{i\in\Nset_0}(1-\frac{1}{2^{ \hat{y}_i}}{\hat{y}_i\choose{\lceil\frac{\hat{y}_i}{2}\rceil}})
\geq \prod_{i\in\Nset_0}(1-\frac{d}{\sqrt{4^i}})$.
  A well-known convergence criterion for infinite products is that $\prod_{i\in\Nset_0}(1-q_i)$ converges to a non-zero number if and only if $\sum_{i\in\Nset_0}q_i$ converges for $0\leq q_i<1$. Since $\sum_{i\in\Nset_0}\frac{d}{2^{i}}$ converges to $2d$, we have the infinite product $\prod_{i\in\Nset_0}(1-\frac{d}{2^i})$ converges to a non-zero number. Thus, $\probm(T=\infty)>0$.
\end{proof}

We now show that, using the FHV-rule proposed in \cite{HolgerPOPL}, one can deduce that the probabilistic program in Example~\ref{example:counter} is a.s. terminating.
\begin{proposition}\label{prop:flaw}
The FHV-rule in~\cite{HolgerPOPL} derives that the probabilistic program in Example~\ref{example:counter} is a.s. terminating.
\end{proposition}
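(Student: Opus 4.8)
The plan is to exhibit a concrete ranking expression $R$, a decrease constant $\epsilon>0$, and a decomposition of the outer loop body into sequential sub-programs, and then to check that \emph{every} premise of the FHV-rule---including the difference-boundedness side condition of~\cite{HolgerPOPL} that guarantees integrability---is satisfied. Since Proposition~\ref{prop:count1} already shows the program is not a.s.\ terminating, this establishes unsoundness. Concretely, I would write the outer loop as $\textbf{while }x\ge 1\textbf{ do }P_1;P_2;P_3\textbf{ od}$, where $P_1$ is the inner loop (lines $2$--$7$, i.e.~$z:=y$ followed by the bounded random walk), $P_2$ is $y:=4\times y$ (line $8$), and $P_3$ is $x:=x-1$ (line $9$). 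I take the ranking expression to be $R=x$ and the decrease constant to be $\epsilon=1$; note that $R$ is (propositionally) linear, as the rule demands.

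For the termination part of the premise, each $P_k$ trivially terminates: $P_2,P_3$ are single assignments, and $P_1$ runs exactly $y+1$ iterations (as $z$ counts down from $y$ to $-1$), hence terminates surely. For the unaffecting conditions $\{R=z\}P_k\{R\preceq z\}$: in $P_1$ the variable $x$ performs a symmetric random walk with an absorbing barrier at $2$, so at every step the one-step conditional expectation of $x$ equals its current value (when $x<2$ it moves by $r\in\{-1,+1\}$ with mean $0$, and when $x\ge 2$ it is unchanged); thus $x$ is a martingale along the inner loop and $\condexpv{x_{\mathrm{out}}}{x_{\mathrm{in}}=z}=z$, giving $R\preceq z$ with equality. $P_2$ does not modify $x$, so it is unaffecting as well. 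For the ranking condition, $P_3$ decreases $x$ deterministically by $1$, so the expected value of $R$ after $P_3$ is $z-1=z-\epsilon$, i.e.~$\{R=z\}P_3\{R\prec z\}$ holds. Hence both lines of the FHV-rule premise are met.

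The main obstacle is to confirm that the integrability side condition introduced in~\cite{HolgerPOPL} (conditional difference-boundedness, used with the Optional Stopping Theorem) is \emph{also} satisfied, so that the earlier fix does not exclude this example. This turns out to be immediate: each update to $x$ changes it by at most $1$ in absolute value ($|r|=1$, and $\textbf{skip}$ leaves it fixed), so the process tracking $x$ is difference-bounded with constant $c=1$, and the inference rules of~\cite{HolgerPOPL} certify the required bound. Moreover, since the inner loop has a fixed, deterministic horizon of $y+1$ steps, $x$ after $P_1$ always lies in $[z-(y+1),\,z+(y+1)]$ and is therefore trivially integrable---so the integrability concern that motivated~\cite{HolgerPOPL} does not even arise here. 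With all premises and side conditions verified, the FHV-rule derives that the program is a.s.\ terminating, contradicting Proposition~\ref{prop:count1}. The point I would emphasize afterwards is that the genuine defect is orthogonal to integrability: the expression $R=x$ is admitted as an RSM even though it is unbounded below during the inner loop, and it is exactly these rare large-negative excursions---rather than the typical behavior of hitting the barrier and being reset to $1$---that manufacture a fictitious expected decrease. This is precisely the non-negativity issue that the FHV-rule fails to address.
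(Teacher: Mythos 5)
Your proposal is correct and follows essentially the same route as the paper's own proof: both take $R=x$ as the compositional RSM, verify that the inner loop and the assignments $z:=y$, $y:=4\times y$ are unaffecting (using the martingale property of $x$ under the symmetric walk with barrier), that $x:=x-1$ is ranking with $\epsilon=1$, and that difference-boundedness/integrability holds since every update changes $x$ by at most one unit. Your added details (the explicit decomposition $P_1;P_2;P_3$ and the horizon bound $[z-(y+1),\,z+(y+1)]$) only flesh out steps the paper treats via the inference rules of~\cite{HolgerPOPL}.
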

\begin{proof}
To see that the FHV-rule derives a.s. termination on this example,
we show that the expression $x$ is a compositional RSM that satisfies the integrability and difference-boundedness conditions.
First, we can show that the program variable $x$ is integrable and difference-bounded at every label.
For example, for assignment statements at labels 2, 5, 7, 8 and 9 in Figure~\ref{example:counter}, the expression $x$ is integrable and difference-bounded after these statements simply because
either they do not involve $x$ at the left hand side or the assignment changes the value of $x$ by 1 unit.
Similarly, within the nested loop,
the loop body (from label 4 to label 7) causes bounded change to the value of $x$,
so the expression $x$ is integrable after the inner loop (using the while-rule in
\cite[Table 2]{HolgerPOPL}).
Second, it is easy to see that the expression $x$ is a compositional RSM as from
\cite[Definition 7.1]{HolgerPOPL}
we have the following:
\begin{compactitem}
\item The value of $x$ does not increase after the assignment statements $z:=y$ and $y:=4\times y$;
\item In the loop body of the nested loop, the expected value of $x$ does not increase, given that
it does not increase in any of the conditional branches;
\item By definition
of $Dec_{\le} (-,-)$, the expected value of $x$ does not increase after the inner loop;
\item The value of $x$ is decreased by $1$ after the last assignment statement $x:=x-1$.
\end{compactitem}
Thus, by applying~\cite{HolgerPOPL}'s main theorem for compositionality
(\cite[Theorem 7.7]{HolgerPOPL}),
we can conclude that the program should be a.s. terminating.
\end{proof}

From Proposition~\ref{prop:count1} and Proposition~\ref{prop:flaw}, we establish the main theorem of this section, i.e.~that the FHV-rule is not sound.

\begin{theorem}\label{thm:unsoundness}
The FHV-rule, i.e.~the probabilistic extension of the V-rule as proposed in \cite{HolgerPOPL}, is not sound for a.s. termination of probabilistic programs, even if we require the compositional RSM $R$ to be difference-bounded and integrable.
\end{theorem}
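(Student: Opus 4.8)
The statement to prove, Theorem~\ref{thm:unsoundness}, is that the FHV-rule is unsound: it can certify a.s.~termination for a program that is in fact not a.s.~terminating, even under the strengthened hypotheses of difference-boundedness and integrability. The plan is to establish unsoundness by exhibiting a single explicit counterexample, namely the program in Example~\ref{ex:counterexample}, and verifying two complementary facts about it. First, that the program genuinely fails to terminate almost-surely; and second, that the FHV-rule, applied with a legitimate choice of compositional RSM, nonetheless deduces a.s.~termination. Since soundness of a rule means that every derivation it permits yields a true conclusion, producing one program on which a permitted derivation yields a false conclusion suffices to refute soundness. Crucially, both halves of this argument are already supplied earlier in the excerpt: Proposition~\ref{prop:count1} provides the first fact and Proposition~\ref{prop:flaw} the second.

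Concretely, I would assemble the theorem as an immediate corollary of these two propositions. From Proposition~\ref{prop:count1} we have that the program in Figure~\ref{example:counter}, with initial value $x=1$ and $y$ sufficiently large, satisfies $\probm(T=\infty)>0$, hence $\probm^\sigma_\initcon(T<\infty)<1$; by Definition~\ref{def:astermination} the program is therefore \emph{not} a.s.~terminating. From Proposition~\ref{prop:flaw} we have that the FHV-rule, instantiated with the expression $R=x$, derives a.s.~termination for exactly this program. Moreover, Proposition~\ref{prop:flaw} establishes not merely that $x$ is a compositional RSM in the bare sense, but that $x$ additionally satisfies the integrability and difference-boundedness conditions of~\cite{HolgerPOPL} at every label --- precisely the strengthened side conditions invoked in the statement of the theorem. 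Putting these together: the FHV-rule (even with the difference-boundedness and integrability requirements imposed) certifies a program that is not a.s.~terminating, which directly contradicts the definition of a sound proof rule for a.s.~termination. This yields the claim.

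The intellectual substance of the theorem therefore lies entirely in the two propositions, which do the real work. The genuinely hard part is Proposition~\ref{prop:count1}: proving non-termination requires a careful probabilistic estimate showing that the barrier at location~$2$ is hit with probability tending to $1$ as the step-count $y$ grows, and then showing that the resulting survival probabilities multiply to a strictly positive infinite product. This rests on a Catalan-number count of barrier-avoiding paths, Stirling's approximation to bound $1-2^{-\hat y}\binom{\hat y}{\lceil \hat y/2\rceil}$ from below by $1-e/(\pi\sqrt{\hat y})$, and the classical criterion that $\prod_{i}(1-q_i)$ converges to a nonzero limit iff $\sum_i q_i$ converges. The exponential schedule $\hat y_i = 4^i\cdot\hat y_0$ (from line~8) makes the $q_i$ decay geometrically, so the sum converges and $\probm(T=\infty)>0$. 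By contrast, the FHV side of the argument (Proposition~\ref{prop:flaw}) is comparatively routine, since $x$ changes by at most one unit per relevant assignment, making integrability and difference-boundedness easy, and the expected value of $x$ is non-increasing across the inner random walk while strictly decreasing at line~9. Because both propositions are already available to me, the theorem itself follows in a few lines, and I would present it as such a combination rather than re-deriving anything.
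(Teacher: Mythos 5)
Your proposal is correct and matches the paper's own proof exactly: the paper derives Theorem~\ref{thm:unsoundness} as an immediate consequence of Proposition~\ref{prop:count1} (the program of Example~\ref{ex:counterexample} is not a.s.~terminating) and Proposition~\ref{prop:flaw} (the FHV-rule, with the difference-bounded and integrable compositional RSM $R=x$, nonetheless derives a.s.~termination). Your observation that the real work lies in the two propositions, with Proposition~\ref{prop:count1} being the technically hard part, is also faithful to how the paper structures the argument.
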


Note that integrability is a very natural requirement in probability theory. Hence, Theorem~\ref{thm:unsoundness} states that a natural probabilistic extension of the variant rule is not sufficient for proving a.s. termination of probabilistic programs.

\paragraph{Crucial Issue 2: Non-negativity of RSMs.} \label{cruc2}
	The reason why the approach of \cite{HolgerPOPL} is not sound lies in the fact that their approach neglects the non-negativity of ranking supermartingales (RSMs), \hongfei{as their compositional RSMs are not required to be non-negative}.
	In the classical V-rule for non-probabilistic programs, non-negativity is not required, given that negative values in a non-probabilistic setting simply mean that $R$ is negative.
	However, in the presence of probability, the expected value of $R$ is taken into account, and not $R$ itself. Thus, it is possible that the expected value of $R$ decreases and
	becomes arbitrarily negative, tending to $-\infty$, while simultaneously the value of $R$ increases with higher and higher probability.
	In our counterexample (Example~\ref{ex:counterexample}), the expected value of $x$ decreases after each outer loop iteration, however
	the probability that the value of $x$ remains the same increases with the value of $y$. More specifically, the decrease in the expected value results from the fact that after the inner loop,
	the value of $x$ may get arbitrarily negative towards $-\infty$.
	For a detailed explanation of the unsoundness of the FHV-rule, see Appendix~\ref{app:flaw}.

\section{Our Modular Approach}\label{sec:concentration}

In the previous section, we showed that the FHV-rule is not sound for proving a.s. termination of probabilistic programs.
In this section, we show how the FHV-rule can be \hongfei{minimally} strengthened to a sound approach.
The general idea of our approach is to define a new notion called a \emph{Descent Supermartingale Map (DSM)} that requires the expected value of the expression $R$ in the variant rule to always decrease by at least a positive amount $\epsilon$. We call this the \emph{strict decrease} condition. This condition is in contrast with the FHV-rule that allows the value of $R$ at certain statements to remain the same (in expectation).
We show that after this strengthening, the resulting rule is sound for \amir{modular} verification of  a.s. termination over probabilistic programs. \amir{We handle Crucial Issue~1 in the same manner as in the FHV-rule, i.e.~by enforcing difference-boundedness. As for Crucial Issue~2, we show that unlike RSMs, DSMs do not require non-negativity. Hence, this issue does not apply to our approach.}
Our main mathematical tools are the \emph{concentration inequalities} (e.g.~\cite{ColinMcDiarmid1998concentration}) that give tight upper bounds on the probability that a stochastic process deviates from its mean value.

\amir{In this section, we present our approach in terms of martingales and prove its soundness. In the next sections, we provide both a proof system based on inference rules and a synthesis algorithm based on templates for obtaining DSM-maps. This shows that our approach (i)~leads to a completely automated method for proving a.s.~termination (similar to other martingale-based approaches such as~\cite{SriramCAV,ChatterjeeFNH16,ChatterjeeFG16}) and (ii)~can also be applied in a semi-automatic setting using interactive theorem provers (similar to other rule-based approaches such as~\cite{KaminskiKMO16,DBLP:conf/lics/OlmedoKKM16,DBLP:journals/pacmpl/McIverMKK18}). Hence, our approach combines the best aspects of both martingale-based and rule-based approaches, with the added benefit of being modular}.

To clarify that our approach is indeed a strengthening of the FHV-rule in~\cite{HolgerPOPL},
we first write the rule-based approach of~\cite{HolgerPOPL} in an equivalent martingale-based format.
Below, we fix a probabilistic program $P'$ and a loop guard $G$ and let $P:=\textbf{while}(G,P')$.
For the purpose of \amir{modular} verification, we assume that $P'$ is a.s. terminating.
We recall that $T$ is the termination-time random variable (see Definition~\ref{def:astermination}) and $\sampdpd$
is the joint discrete probability distribution for sampling variables.
We also use the standard notion of invariants, which are over-approximations of the set of reachable configurations at every label.

\paragraph{Invariants.} An \emph{invariant} is a function $I: L \rightarrow 2^{\val{V_p}}$, such that for each label $\loc \in L$,
the set $I(\loc)$ at least contains all valuations $\nu$ of program variables for which the configuration $(\loc,\nu)$ can be visited in some run of the program.
An invariant $I$ is \emph{linear} if every $I(\loc)$ is a finite union of polyhedra.

We can now describe the FHV-rule approach in~\cite{HolgerPOPL} using \emph{V-rule supermartingale maps}.
A \emph{V-rule supermartingale map} w.r.t. %
an invariant $I$ is a function $R: \val{\pvars}\rightarrow\Rset$  satisfying
the following conditions:

\begin{compactitem}
\item {\em Non-increasing property.} The value of $R$ does not increase in expectation after the execution of any of the statements in the outer-loop body.
For example, the non-increasing condition for an assignment statement $\loc\in\alocs$ with $(\loc,u,\loc')\in\transitions$ (recall that $u$ is the update function) is equivalent to
$\sum_{\mu\in \val{\rvars}} \sampdpd(\mu)\cdot R(u(\nu,\mu))\leq  R(\nu) $ for all $\nu\in I(\loc)$.
This condition can be similarly derived for other types of labels.

\item {\em Decrease property.} There exists a statement that will definitely be executed in every loop iteration and will cause $R$ to decrease (in expectation). For example, the condition for strict decrease at an assignment statement $\loc\in\alocs$ with $(\loc,u,\loc')\in\transitions$ says that for all $\nu\in I(\loc)$ we have
  $\sum_{\mu\in \val{\rvars}} \sampdpd(\mu)\cdot R(u(\nu,\mu))\leq R(\nu)-\epsilon $, where $\epsilon$ is a fixed positive constant.
 \item {\em Well-foundedness.} The values of $R$ should be bounded from below when restricted to the loop guard.
Formally, this condition requires that for a fixed constant $c$ and all $\nu\in I(\lin)$ such that $\nu\models G$, we have $R(\nu)\ge c$.
\item {\em Conditional difference-boundedness.} The conditional expected change in the value of $R$ after the execution of each statement is bounded. For example, at an assignment statement $\loc\in\alocs$ with $(\loc,u,\loc')\in\transitions$, this condition says that there exists a fixed positive bound $d$, such that $ \sum_{\mu\in \val{\rvars}} \sampdpd(\mu)\cdot |R(u(\nu,\mu))-R(\nu)|\le d$ for all $\nu\in I(\loc)$. The purpose of this condition is to ensure the integrability of $R$ (see~\cite[Lemma 7.4]{HolgerPOPL}).
\end{compactitem}

\paragraph{Strengthening.} We strengthen the FHV-rule of~\cite{HolgerPOPL} in two ways.
First, as the major strengthening, we require that the expression $R$ should strictly decrease in expectation at every statement,
as opposed to~\cite{HolgerPOPL} where the value of $R$ is only required to decrease at some statement.
Second, we slightly extend the conditional difference-boundedness condition and require that
the difference caused in the value of $R$ after the execution of each statement should always be bounded, i.e.~we require difference-boundedness not only in expectation, but in every run of the program.

The core notion in our strengthened approach is that of \emph{Descent Supermartingale maps (DSM-maps)}. A DSM-map is a function
representing a decreasing amount (in expectation) at each step of the execution of the program.

\begin{definition}[Descent Supermartingale Maps]\label{def:dbprsm}
A \emph{descent supermartingale map} (DSM-map) w.r.t. real numbers $\epsilon>0$, $c\in\Rset$, a non-empty interval $[a,b]\subseteq\mathbb{R}$ and an invariant $I$ is a function $\eta: \locs\times\val{\pvars}\rightarrow\Rset$  satisfying
the following conditions:
\begin{compactitem}
 \item [(D1)] For each $\loc\in\alocs$ with $(\loc,u,\loc')\in\transitions$ , it holds that
\begin{compactitem}
\item  $a\le \eta(\loc',u(\nu,\mu))-\eta(\loc,\nu)\le b$ for all $\nu\in I(\loc)$ and $\mu\in\val{\rvars}$;
\item $\sum_{\mu\in \val{\rvars}} \sampdpd(\mu)\cdot \eta(\loc',u(\nu,\mu))\leq \eta (\loc,\nu)-\epsilon $ for all $\nu\in I(\loc)$;
\end{compactitem}
\item [(D2)] For each $\loc\in\blocs$ and $(\loc,\phi,\loc')\in\transitions$, it holds that $a\le \eta(\loc',\nu)-\eta(\loc,\nu)\le \min\{-\epsilon,b\}$ for all $\nu\in I(\loc)$ such that $\nu\models \phi$;
\item [(D3)] For each $\loc\in\dlocs$ and $(\loc,\star,\loc')\in\transitions$, it holds that $a\le \eta(\loc',\nu)-\eta(\loc,\nu)\le \min\{-\epsilon,b\}$ for all $\nu\in I(\loc)$;
\item [(D4)] For each $\loc\in\plocs$ with $(\loc,p,\loc'),(\loc,1-p,\loc'')\in\transitions$, it holds that
  \begin{compactitem}
    \item $a\le \eta(\loc',\nu)-\eta(\loc,\nu)\le b$ for all $\nu\in I(\loc)$,
    \item $a\le \eta(\loc'',\nu)-\eta(\loc,\nu)\le b$ for all $\nu\in I(\loc)$,
    \item $p\cdot\eta(\loc',\nu)+(1-p)\cdot \eta(\loc'',\nu)\le \eta (\loc,\nu)-\epsilon$ for all $\nu\in I(\loc)$;
  \end{compactitem}
\item [(D5)]  For all $\nu\in I(\lin)$ such that $\nu\models G$
(recall that $G$ is the loop guard),
it holds that $\eta(\lin,\nu)\ge c$.
\end{compactitem}
\end{definition}

\noindent Informally, $R$ is a DSM-map if:
\begin{compactitem}
\item[(D1)--(D4)] Its value decreases in expectation by at least $\epsilon$ after the execution of each statement (the strict decrease condition), and
its change of value before and after each statement falls in $[a,b]$ (the strengthened difference-boundedness condition);
\item[(D5)] Its value is bounded from below by $c$ at every entry into the loop body (the well-foundedness condition).
\end{compactitem}

\amir{
\begin{remark} \label{rem:d5}
	We remark two points about DSM-maps:
	\begin{compactitem}
		\item DSM-maps require well-foundedness only w.r.t. the outermost loop guard. See (D5) above.
		\item The function $\eta$ is dependent not only on the valuation, but also on the label (program counter). Hence, it can correspond to different expressions at each label. Informally, we do not have a single fixed expression $R$, but instead have a label-dependent expression $\eta(\loc)$. This gives our approach more flexibility.
	\end{compactitem}
\end{remark}
}

By the decreasing nature of DSM-maps, it is intuitively true that the existence of a DSM-map implies a.s. termination.
However, this point is non-trivial as counterexamples will arise if we drop the difference-boundedness condition and only require the strict decrease condition
(see e.g.~\cite[Example~3]{DBLP:conf/aplas/HuangFC18}).
In the following, we use the difference-boundedness condition to derive a concentration property on the termination time (see~\cite{ChatterjeeFNH16}).
Under this concentration property, we prove that DSM-maps are sound for proving a.s. termination.

We first present a well-known concentration inequality called \emph{Hoeffding's Inequality}.

\begin{theorem*}[Hoeffding's Inequality~\cite{Hoeffding1963inequality}]\label{thm:hoeffding}
Let $\{X_n\}_{n\in\mathbb{N}_0}$ be a supermartingale w.r.t. some filtration $\{\mathcal{F}_n\}_{n\in\mathbb{N}}$ and $\{[a_n,b_n]\}_{n\in\mathbb{N}}$ be a sequence of intervals with positive length in $\mathbb{R}$.
If $X_0$ is a constant random variable and $X_{n+1}-X_n\in [a_n,b_n]$ a.s. for all $n\in\mathbb{N}_0$, then
\[
\mathbb{P}(X_n-X_0\ge\lambda)\le \exp(-\frac{2\lambda^2}{\sum_{k=1}^n(b_k-a_k)^2})
\]
for all $n\in\mathbb{N}_0$ and $\lambda> 0$.
\end{theorem*}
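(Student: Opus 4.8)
The plan is to establish Hoeffding's Inequality via the standard exponential-moment (Chernoff) method, adapted to the supermartingale setting. First I would reduce the tail bound to a bound on an exponential moment: for any parameter $s>0$, Markov's inequality applied to the nonnegative random variable $\exp(s(X_n-X_0))$ yields
\[
\mathbb{P}(X_n-X_0\ge\lambda)\le e^{-s\lambda}\,\mathbb{E}\bigl(\exp(s(X_n-X_0))\bigr).
\]
The whole proof then hinges on controlling $\mathbb{E}(\exp(s(X_n-X_0)))$, where $X_n-X_0=\sum_{k=1}^n Y_k$ with $Y_k:=X_k-X_{k-1}$ denoting the increments, which satisfy $Y_k\in[a_k,b_k]$ a.s. and, by the supermartingale property, $\mathbb{E}(Y_k\mid\mathcal{F}_{k-1})\le 0$ a.s.

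The core lemma, and the main obstacle, is a conditional version of Hoeffding's lemma: if $Z\in[a,b]$ a.s. and $\mathbb{E}(Z\mid\mathcal{G})\le 0$, then $\mathbb{E}(\exp(sZ)\mid\mathcal{G})\le\exp(s^2(b-a)^2/8)$ for every $s>0$. I would prove this by convexity of $t\mapsto e^{st}$ on $[a,b]$, bounding $e^{sZ}\le\frac{b-Z}{b-a}e^{sa}+\frac{Z-a}{b-a}e^{sb}$ pointwise and taking conditional expectations. The resulting bound is affine and increasing in the conditional mean $\mathbb{E}(Z\mid\mathcal{G})$, since its slope is positive because $e^{sb}>e^{sa}$, so the hypothesis $\mathbb{E}(Z\mid\mathcal{G})\le 0$ lets me replace the mean by $0$, reducing everything to the classical mean-zero estimate $\frac{b}{b-a}e^{sa}+\frac{-a}{b-a}e^{sb}\le\exp(s^2(b-a)^2/8)$, which follows from a second-order Taylor bound on the associated log-generating function. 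The subtle point is precisely that the supermartingale condition gives only $\le 0$ rather than $=0$ for the increments; the monotonicity argument is exactly what lets the negative conditional mean help rather than hurt (and the degenerate case $b_k<0$, which forces $Y_k<0$, makes the bound trivial since then $e^{sY_k}<1$).

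With the lemma in hand, I would bound the full exponential moment by peeling off increments one at a time using the tower property. Since $\exp(s\sum_{k=1}^{n-1}Y_k)$ is $\mathcal{F}_{n-1}$-measurable, I can pull it out of the inner conditional expectation:
\[
\mathbb{E}\bigl(e^{s\sum_{k=1}^{n}Y_k}\bigr)=\mathbb{E}\Bigl(e^{s\sum_{k=1}^{n-1}Y_k}\,\mathbb{E}\bigl(e^{sY_n}\mid\mathcal{F}_{n-1}\bigr)\Bigr)\le e^{s^2(b_n-a_n)^2/8}\,\mathbb{E}\bigl(e^{s\sum_{k=1}^{n-1}Y_k}\bigr),
\]
and iterating this, using that $X_0$ is a constant so that the base step is trivial, gives $\mathbb{E}(\exp(s(X_n-X_0)))\le\exp\bigl(\frac{s^2}{8}\sum_{k=1}^n(b_k-a_k)^2\bigr)$.

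Finally I would substitute this into the Chernoff bound to obtain $\mathbb{P}(X_n-X_0\ge\lambda)\le\exp\bigl(-s\lambda+\frac{s^2}{8}\sum_{k=1}^n(b_k-a_k)^2\bigr)$ and optimize over $s>0$. Writing $D:=\sum_{k=1}^n(b_k-a_k)^2$, the exponent is minimized at $s^\ast=4\lambda/D$, where it equals $-2\lambda^2/D$; plugging this in yields exactly $\mathbb{P}(X_n-X_0\ge\lambda)\le\exp(-2\lambda^2/\sum_{k=1}^n(b_k-a_k)^2)$, as claimed. I expect the conditional Hoeffding lemma to be the only nonroutine ingredient; the remaining steps are bookkeeping with the tower property and a one-variable optimization.
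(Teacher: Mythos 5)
Your proof is correct, but there is nothing in the paper to compare it against: the paper states Hoeffding's Inequality (in its Azuma-type, supermartingale form) as an imported classical result, citing \cite{Hoeffding1963inequality}, and gives no proof of it; the paper only \emph{uses} it, in Lemma~\ref{lemm:concen}, via the shift $Y_n = X_n + n\epsilon$. Your argument is the standard Chernoff-bound proof: Markov's inequality applied to $\exp(s(X_n-X_0))$, a conditional Hoeffding lemma for each increment, peeling via the tower property (legitimate here since bounded increments make all exponential moments finite), and optimizing $s^\ast = 4\lambda/D$ to get the exponent $-2\lambda^2/D$. The one place where the supermartingale setting genuinely differs from the textbook mean-zero case --- that the increments satisfy only $\expv(Y_k \mid \mathcal{F}_{k-1}) \le 0$ rather than $=0$ --- you handle correctly: the convexity upper bound is affine in the conditional mean with positive slope (for $s>0$), so the mean can be pushed up to $0$, and the leftover case $b_k<0$ (where $0 \notin [a_k,b_k]$ and the classical estimate does not directly apply) is trivial since then $e^{sY_k}<1$; the remaining case $a_k>0$ is vacuous, being inconsistent with the nonpositive conditional mean. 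So your write-up is a sound, self-contained proof of a statement the paper deliberately leaves as a black box.
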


Hoeffding's Inequality states that for any difference-bounded supermartingale, it is unlikely that
its value $X_n$ at the $n$-th step exceeds its initial value $X_0$ by much (measured by $\lambda$).

Using Hoeffding's Inequality, we prove the following lemma.

\begin{lemma}~\label{lemm:concen}
  Let $\{X_n\}_{n\in\mathbb{N}_0}$ be a supermartingale w.r.t. some filtration $\{\mathcal{F}_n\}_{n\in\mathbb{N}}$ and $[a,b]$ be an interval with positive length in $\Rset$.
 If $X_0$ is a constant random variable, it holds that $\condexpv{X_{n+1}}{\mathcal{F}_n}\le X_n-\epsilon$   for some $\epsilon>0$ and $X_{n+1}-X_n\in[a,b]$ a.s. for all $n\in\mathbb{N}_0$, then
for any $\lambda\in\Rset$,
\[
\mathbb{P}(X_n-X_0\ge\lambda)\le {\exp({-\frac{2(\lambda+n\cdot\epsilon)^2}{n(b-a)^2}})}
\]
for all sufficiently large $n$.
\end{lemma}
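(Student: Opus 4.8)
The plan is to absorb the strict per-step decrease of $\epsilon$ into a deterministic drift, thereby turning $\{X_n\}$ into an \emph{ordinary} supermartingale to which Hoeffding's Inequality applies directly. Concretely, I would set $Y_n := X_n + n\cdot\epsilon$ for each $n\in\Nset_0$, keeping the same filtration $\{\mathcal{F}_n\}$. Since $\condexpv{X_{n+1}}{\mathcal{F}_n}\le X_n-\epsilon$ by hypothesis, a one-line computation gives $\condexpv{Y_{n+1}}{\mathcal{F}_n}=\condexpv{X_{n+1}}{\mathcal{F}_n}+(n+1)\cdot\epsilon\le X_n+n\cdot\epsilon=Y_n$, so $\{Y_n\}$ is a genuine supermartingale. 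Moreover $Y_{n+1}-Y_n=(X_{n+1}-X_n)+\epsilon\in[a+\epsilon,\,b+\epsilon]$ almost surely, an interval whose length is still $b-a$, and $Y_0=X_0$ remains a constant random variable. Thus $\{Y_n\}$ satisfies exactly the hypotheses of Hoeffding's Inequality.

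Next, I would apply Hoeffding's Inequality to $\{Y_n\}$ with the constant intervals $[a_k,b_k]=[a+\epsilon,\,b+\epsilon]$, so that $\sum_{k=1}^n(b_k-a_k)^2=n(b-a)^2$. This yields, for every $\mu>0$,
\[
\mathbb{P}(Y_n-Y_0\ge\mu)\le\exp\Bigl(-\tfrac{2\mu^2}{n(b-a)^2}\Bigr).
\]
Since $Y_n-Y_0=(X_n-X_0)+n\cdot\epsilon$, the event $\{X_n-X_0\ge\lambda\}$ coincides with the event $\{Y_n-Y_0\ge\lambda+n\cdot\epsilon\}$. Substituting $\mu=\lambda+n\cdot\epsilon$ then produces exactly the claimed bound $\mathbb{P}(X_n-X_0\ge\lambda)\le\exp\bigl(-\frac{2(\lambda+n\cdot\epsilon)^2}{n(b-a)^2}\bigr)$.

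The one genuine restriction is that Hoeffding's Inequality is stated only for a \emph{positive} deviation $\mu$, so the substitution $\mu=\lambda+n\cdot\epsilon$ is legitimate only once $\lambda+n\cdot\epsilon>0$. For any fixed $\lambda\in\Rset$ this holds for all $n>-\lambda/\epsilon$, i.e.~for all sufficiently large $n$, which is precisely the qualifier appearing in the statement and explains why it cannot be dropped when $\lambda<0$. I expect this positivity bookkeeping, rather than any analytic difficulty, to be the only subtle point: the heart of the argument is the observation that the deterministic shift by $n\cdot\epsilon$ preserves both the supermartingale property and the step length $b-a$, so the $n\cdot\epsilon$ term migrates cleanly into the exponent's numerator via the change of the deviation variable.
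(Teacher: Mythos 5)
Your proposal is correct and follows essentially the same route as the paper's own proof: the shift $Y_n = X_n + n\cdot\epsilon$, the verification that $\{Y_n\}$ is a difference-bounded supermartingale with step interval $[a+\epsilon, b+\epsilon]$ of length $b-a$, and the application of Hoeffding's Inequality with deviation $\lambda + n\cdot\epsilon$ for all $n$ making this quantity positive. Your explicit remark about why the ``sufficiently large $n$'' qualifier is needed is exactly the same positivity restriction the paper invokes when it applies Hoeffding's Inequality.
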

\begin{proof}
   Let $Y_n=X_n+n\cdot\epsilon$, then $a+\epsilon\leq Y_{n+1}-Y_{n}=X_{n+1}-X_{n}+\epsilon\leq b+\epsilon$.
   Given that
  \begin{eqnarray*}
      \condexpv{Y_{n+1}}{\mathcal{F}_n}&=& \condexpv{X_{n+1}}{\mathcal{F}_n}+(n+1)\cdot\epsilon\\
      &\leq& X_n+n\cdot\epsilon\\
      &=&Y_n,
    \end{eqnarray*}
    we conclude that $\{Y_n\}_{n\in\Nset_0}$ is a supermartingale.
   Now we apply Hoeffding's Inequality for all $n$ such that $\lambda+n\cdot\epsilon>0$, and we get
    \begin{eqnarray*}
   \probm(X_n-X_0\geq \lambda)&=&\probm(Y_n-Y_0\geq \lambda+n\cdot\epsilon)\\
   &\leq& { \exp({-\frac{2(\lambda+n\cdot\epsilon)^2}{n(b-a)^2}})}\\
   \end{eqnarray*}
\end{proof}

Thus, we have the following corollary by calculation.
\begin{corollary}~\label{cor:lim}
  Let $\{X_n\}_{n\in\Nset_0}$ be a supermartingale satisfying the conditions of Lemma~\ref{lemm:concen}. Then, $\lim_{n\rightarrow +\infty}\sum_{k=n}^{+\infty}\probm(X_k-X_0\geq \lambda)=0$.
\end{corollary}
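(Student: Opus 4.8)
The plan is to observe that the claim is exactly the statement that the tail sums of the series $\sum_{k=0}^{\infty}\probm(X_k-X_0\ge\lambda)$ tend to $0$, which holds precisely when the full series converges to a finite value. So it suffices to prove convergence of $\sum_{k=0}^{\infty}\probm(X_k-X_0\ge\lambda)$. To this end I would invoke Lemma~\ref{lemm:concen}, which gives, for all sufficiently large $k$, the per-term bound
\[
\probm(X_k-X_0\ge\lambda)\le \exp\!\left(-\frac{2(\lambda+k\cdot\epsilon)^2}{k(b-a)^2}\right).
\]
The finitely many remaining terms (those $k$ that are not yet ``large enough'') are each at most $1$ and contribute only a finite amount, so they are irrelevant for convergence.

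Next I would analyze the exponent asymptotically. Expanding $(\lambda+k\epsilon)^2 = k^2\epsilon^2 + 2k\lambda\epsilon + \lambda^2$ and dividing by $k$, the dominant contribution is the linear term $\tfrac{2\epsilon^2}{(b-a)^2}\,k$, while the other two terms are bounded (the $\tfrac{2\lambda\epsilon}{(b-a)^2}$ term is constant and the $\tfrac{2\lambda^2}{k(b-a)^2}$ term vanishes). Hence, setting $C:=\tfrac{\epsilon^2}{(b-a)^2}>0$, for all sufficiently large $k$ the exponent is at most $-Ck$, so that $\probm(X_k-X_0\ge\lambda)\le \exp(-Ck)$. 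The latter is the general term of a convergent geometric series, so by the comparison test $\sum_{k=0}^{\infty}\probm(X_k-X_0\ge\lambda)$ converges to a finite limit.

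Finally, since the series converges, its tails must vanish: $\lim_{n\to+\infty}\sum_{k=n}^{+\infty}\probm(X_k-X_0\ge\lambda)=0$, which is exactly the corollary. As the statement itself advertises (``by calculation''), the only real work is the asymptotic estimate of the exponent, and this is the step I would expect to be the mild obstacle: one must check that the quadratic-over-linear expression grows at least linearly in $k$, i.e.~that the $k^2\epsilon^2/k$ term eventually dominates both the constant and the decaying terms, so that the per-term geometric bound is valid. Once that bound is in hand, convergence and the vanishing of the tail are immediate.
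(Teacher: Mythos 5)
Your proof is correct and follows essentially the same route as the paper: apply the Hoeffding-type bound from Lemma~\ref{lemm:concen}, expand $(\lambda+k\epsilon)^2/k$ to extract a term linear in $k$ (the paper's appendix computation keeps the bound $\exp(-\tfrac{2\epsilon^2}{(b-a)^2}k-\tfrac{4\lambda\epsilon}{(b-a)^2})$, which is your geometric bound with the constant made explicit), and conclude that the convergent geometric series has vanishing tails. The only blemish is an immaterial factor-of-two slip in the constant term of your expansion ($\tfrac{4\lambda\epsilon}{(b-a)^2}$ rather than $\tfrac{2\lambda\epsilon}{(b-a)^2}$), which does not affect the argument.
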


We are now ready to prove the soundness of DSM-maps.

\begin{theorem}[Soundness of DSM-maps]~\label{thm:rsm}
Let $P=\textbf{while}(G,P').$ If (i)~$P'$ terminates a.s.~for any initial valuation and scheduler; and (ii)~there exists a DSM-map $\eta$ for $P$, then for any initial valuation $\nu^{*}\in\val{\pvars}$ and for all schedulers $\sigma$,
we have $\probm_{\nu^{*}}^\sigma(T< \infty)=1$.
\end{theorem}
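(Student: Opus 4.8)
The plan is to fix an arbitrary scheduler $\sigma$ and initial valuation $\nu^{*}$ and prove the equivalent statement $\probm^\sigma_{\nu^{*}}(T=\infty)=0$. The strategy is to convert the DSM-map $\eta$ into a difference-bounded supermartingale with a strictly negative drift, apply the concentration bound of Corollary~\ref{cor:lim}, and contradict the lower bound (D5) that must hold every time the outer loop is re-entered.

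First I would build the process. Along a run $\{(\loc_n,\nu_n)\}_{n\in\Nset_0}$, set $X_n:=\eta(\loc_n,\nu_n)$ while the run has not yet reached $\lout$, and extend it past termination deterministically by $X_{n+1}:=X_n+a$ once $\loc_n=\lout$. Let $\{\mathcal{F}_n\}$ be the natural filtration of the induced MDP run. Since the invariant over-approximates reachable configurations (so $\nu_n\in I(\loc_n)$ a.s.), conditions (D1)--(D4) give $\condexpv{X_{n+1}}{\mathcal{F}_n}\le X_n-\epsilon$ before termination for every label type: this is immediate for assignment (D1) and probabilistic (D4) labels, and for conditional (D2) and non-deterministic (D3) labels it follows because the unique successor value drops by at least $\epsilon$ regardless of the guard outcome or of the scheduler's choice — the latter being exactly what makes the bound uniform over all $\sigma$. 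These conditions also confine the one-step change to $[a,b]$. Because $a\le -\epsilon$ (forced either by $a\le\min\{-\epsilon,b\}$ in (D2)/(D3), or, via (D1)/(D4), by the drift being a convex combination of values that are each at least $a$ above the current value), the post-termination increment $a$ again lies in $[a,b]$ and satisfies $\condexpv{X_{n+1}}{\mathcal{F}_n}=X_n+a\le X_n-\epsilon$. Hence $\{X_n\}$ is a difference-bounded supermartingale with $X_{n+1}-X_n\in[a,b]$ a.s., constant $X_0=\eta(\lin,\nu^{*})$, and drift at most $-\epsilon$, so it satisfies the hypotheses of Lemma~\ref{lemm:concen}.

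Next I would invoke Corollary~\ref{cor:lim} with $\lambda:=c-X_0$, obtaining $\lim_{n\to\infty}\sum_{k=n}^{\infty}\probm(X_k\ge c)=0$; in particular $\sum_{k}\probm(X_k\ge c)<\infty$, so the Borel--Cantelli lemma gives $\probm(X_k\ge c \text{ for infinitely many } k)=0$. It then remains to show that non-termination forces $X_k\ge c$ infinitely often. On the event $T=\infty$ the run never reaches $\lout$; since $P'$ terminates a.s.\ for every initial valuation and scheduler (assumption (i)), a union bound over the countably many executions of $P'$ shows that, up to a null set, no single execution of $P'$ runs forever. Consequently, on $T=\infty$ the outer loop must re-enter $P'$ infinitely often a.s. Each re-entry occurs at $\lin$ with $\nu_n\models G$, so by (D5) we get $X_n=\eta(\lin,\nu_n)\ge c$ at infinitely many indices $n$. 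Combined with the Borel--Cantelli conclusion, this yields $\probm^\sigma_{\nu^{*}}(T=\infty)=0$, as desired.

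The main obstacle I anticipate is the behavior after termination: once control reaches $\lout$, the genuine value $\eta(\loc_n,\nu_n)$ neither keeps decreasing nor need its increment lie in $[a,b]$, so the raw sequence is not a strict-drift, difference-bounded supermartingale on all of $\Omega$. The deterministic post-termination extension by $a$ repairs both properties at once and is harmless, since it only alters runs that have already terminated. The second delicate point is the passage from $T=\infty$ to \emph{infinitely many outer-loop iterations}, which is precisely where assumption (i) and the union bound over executions of $P'$ are indispensable: without them a run could fail to terminate by becoming trapped inside $P'$, in which case (D5) would give no usable information.
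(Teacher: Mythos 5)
Your proposal is correct and follows essentially the same route as the paper's own proof: convert $\eta$ into a difference-bounded supermartingale with strict drift $-\epsilon$, apply the Hoeffding-based concentration result (Lemma~\ref{lemm:concen}/Corollary~\ref{cor:lim}), and use (D5) at the infinitely many re-entries to $\lin$ (which non-termination forces, up to a null set, precisely because $P'$ terminates a.s.) to reach a contradiction. The differences are only in packaging: you invoke Borel--Cantelli where the paper directly bounds $\probm(X_{B_n}\ge c)$ by the tail sum $\sum_{k\ge n}\probm(X_k\ge c)$ using $B_n\ge n$, and your explicit post-termination extension of the process by increments of $a$ is in fact a cleaner repair of the drift and difference-boundedness conditions than the paper's convention of freezing $\eta$ at the terminal label.
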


\begin{proof}[Proof Sketch]
Let $\epsilon,c,a,b$ be as defined in Definition~\ref{def:dbprsm}.
For a given program $P$ with its DSM-map $\eta$, we define the stochastic process $\{X_n=\eta(\loc_n,\nu_n)\}_{n\in \Nset_0}$ where $(\loc_n,\nu_n)$ is the pair of random variables that represents the configuration at the $n$-th step of a run. We also define
the stochastic process $\{B_n\}_{n\in \Nset_0}$ in which
each $B_n$ represents the number of steps in the execution of $P$ until the $n$-th arrival at the initial label $\lin$. Then, $X_{B_n}$ is the random variable representing the value of $\eta$ at the $n$-th arrival at $\lin$.
Recall that, by condition (D5) in the definition of DSM-maps, the program stops if $X_{B_n}<c$. We now prove the crucial property that
$\probm(T'< \infty)\geq1-\lim_{n\rightarrow\infty}\probm(X_{B_n}\ge c)=1$, where $T'$ is the random variable that measures the number of outer loop iterations in a run.
We want to estimate the probability of $\probm(X_{B_n}\ge c)$ which is bounded by $\sum_{k=n}^{+\infty}\probm(X_k\geq c)$.
Note that $X_n$ satisfies the conditions of Lemma~\ref{lemm:concen}. We use Corollary~\ref{cor:lim} to bound the probability.
Since $\probm(T<\infty)=1$ iff $\probm(T'<\infty)=1$ (as $P'$ is a.s. terminating), we obtain that $\probm(T< \infty)=1$.
For a more detailed proof, see Appendix~\ref{app:thm:rsm}.

\end{proof}

\begin{remark}[{\hongfei{Modularity}}]
	The theorem above directly leads to a \amir{modular} approach for proving a.s.~termination. To prove that $P$ terminates a.s., it suffices to first prove that $P'$ terminates a.s., and then show the existence of a DSM-map w.r.t. $P$ as \hongfei{the} side condition.
\amir{
	We stress that in the theorem above, it is necessary to assume that $P'$ terminates a.s., because DSM-maps consider well-foundedness and termination only w.r.t. the outermost loop guard (see Remark~\ref{rem:d5}). Hence, the existence of a DSM-map does not prove a.s.~termination in and of itself, but only serves as a side condition in our modular approach.}
\end{remark}

We illustrate an example application of Theorem~\ref{thm:rsm}.
\begin{example}\label{example:rsp}
Consider the probabilistic while loop in Figure~\ref{fig:exprog}. %
\begin{figure}
\begin{center}
\begin{tabular}[t]{l}
\lstset{language=prog}
\begin{lstlisting}[mathescape, basicstyle=\small]
$\ \,1:$ while $x\geq 1$ do
$\ \,2:$     $y:=r;$
$\ \,3:$     while $y\geq1$ do
$\ \,4:$        if $\star$ then
$\ \,5:$           if prob (6/13) then
$\ \,6:$               $x:=x+1$
$\,\ \,\,$           else
$\,\ 7:$               $x:=x-1$
$\,\ \,\,$           fi
$\,\ \,\,$        else
$\,\ 8:$           if prob (4/13) then
$\,\ 9:$               $x:=x+2$
$\,\ \,\,$           else
$10:$               $x:=x-1$
$\,\ \,\,$           fi
        fi;
$11:$        $y:=y-1$
      od
    od
$12:$
\end{lstlisting}
\end{tabular}
\end{center}
\vspace{-.6em}
\caption{An Example Probabilistic Program. In this program, $\probm(r = k) = 1/9$ for $k = 1, 2, \ldots, 9.$}
\vspace{-1em}
\label{fig:exprog}
\end{figure}
\noindent where the probability distribution for the sampling variable $r$ is
given by $\probm(r=k)=1/9$ for $k=1,2,\ldots,9$.

The while loop models a variant of gambler's ruin based on the mini-roulette game with $13$ slots~\cite{DBLP:conf/ijcai/ChatterjeeFGO18}.
Initially, the gambler has $x$ units of money and he continues betting until he has no money.
At the start of each outer loop iteration, the number of gambling rounds is chosen uniformly at random from $1,2,\dots,9$ (i.e. the program variable $y$ is the number of gambling rounds in this iteration).
Then, at each round, the gambler takes one unit of money, and either chooses an \emph{even-money bet} that bets the ball to stop at even numbers between $1$ and $13$, which has a probability of $\frac{6}{13}$ to win one unit of money (see the non-deterministic branch from label $5$ to label $7$), or a \emph{$2$-to-$1$ bet} that bets the ball to stop at $4$ selected slots and wins two units of money with probability $\frac{4}{13}$ (see the branch from label $8$ to label $10$).
During each outer loop iteration, it is possible that the gambler runs out of money temporarily, but the gambler is allowed to continue gambling in the current loop iteration,
and the program terminates only if he depletes his money when the program is back to the start of the outer loop.
An invariant $I$ for the program is as follows:
$$\quad   I(\loc):=\begin{cases}
         \textbf{true}  & \mbox{ if } \loc =1\\
     x\geq 1     & \mbox{ if } \loc =2\\
     x\geq -8\wedge 0\leq y\leq 9   & \mbox{ if } \loc =3\\
     x\geq -7\wedge 1\leq y\leq 9   & \mbox{ if } 4\leq\loc\leq11\\
   \end{cases}\quad.$$
For this program, we can define a DSM-map $\eta$ as follows:
\[
    \eta(\loc, (x,y)):=\begin{cases}
    x  & \mbox{ if } \loc =1\\
     x-4/299    & \mbox{ if } \loc =2,12\\
     x-3/299\cdot y+ 7/299  & \mbox{ if } \loc=3\\
       x-3/299\cdot y+ 3/299    & \mbox{ if } \loc=4\\
        x-3/299\cdot y- 1/299    & \mbox{ if } \loc=5,8\\
     x-3/299\cdot y+ 317/299  & \mbox{ if } \loc =6\\
     x-3/299\cdot y- 281/299  & \mbox{ if } \loc =7,10\\
     x-3/299\cdot y+ 616/299  & \mbox{ if } \loc =9\\
     x-3/299\cdot y+ 14/299 & \mbox{ if } \loc =11\\
    \end{cases}.
\]
One can verify that $\eta$ is a DSM-map by choosing $\epsilon=4/299, a=-280/299, b=617/299$ and $c=1$. The minimal and maximal one-step differences of $\eta$
are met in the transitions from labels $9$ and $10$ to label $11$. Thus, the differences are in the interval  $[-1+19/299,2+19/299]=[a,b]$, and the expected value of $\eta$ decreases by at least $4/299=\epsilon$ in each step. Also, if the outer loop is not stopped, then $x\geq1=c$ at the initial label.
The other conditions can be similarly checked. Thus, $\eta$ is a DSM-map for $P$. Note that the internal while loop is a classic gambler's ruin and it is well-known that it terminates a.s. Therefore, by applying Theorem~\ref{thm:rsm}, we conclude that the program terminates a.s. under any initial valuation.
\end{example}

{We now compare the notion of DSM-maps with RSMs/RSM-maps~\cite{SriramCAV,ChatterjeeFNH16,ChatterjeeFG16} that have been successfully applied to prove finite expected termination time of probabilistic programs.}

\begin{remark}[Comparison with RSMs]
Our notion of DSM-maps is slightly (but crucially) different from RSMs~\cite{SriramCAV,ChatterjeeFNH16,ChatterjeeFG16}.
The difference is that a DSM-map does not require a global lower bound on its values,
but instead requires the difference-boundedness condition, while an RSM requires its values to be non-negative, but has no difference-boundedness condition.
\hongfei{As a result, the soundness of DSM-maps follows from concentration inequalities, while the soundness of RSMs follows from a limiting behavior of non-negative stochastic processes, a completely different aspect.}

\end{remark}

\begin{remark}[Comparison with~\cite{HolgerPOPL}] We remark the reason why the approach in \cite{HolgerPOPL} is not sound while ours is. This has to do with Crucial Issue~2 (Page~\pageref{cruc2}). The approach in~\cite{HolgerPOPL} neglects the fact that RSMs have to be non-negative and is therefore not sound. In contrast, our approach uses DSM-maps which are not restricted to be non-negative and are sound for proving a.s. termination of probabilistic programs.
As we have described previously, our approach of DSM-maps mainly strengthens the approach in \cite{HolgerPOPL} with the strict decrease condition at every statement.
\hongfei{The negativity of DSM-maps is then resolved}
through concentration inequalities, which guarantee that the probability of the value of $R$ tending unboundedly to $-\infty$ is exponentially decreasing (see Lemma~\ref{lemm:concen}).

\hongfei{
In~\cite{HolgerPOPL}, the authors \amir{describe} their approach \amir{as} \emph{compositional}. However, %
it \amir{requires} compositional RSMs as a non-trivial side condition. In this work, we call such approaches \emph{modular}.} \amir{Note that our side conditions (DSMs) have the same modularity and complexity as the side conditions required by~\cite{HolgerPOPL} (compositional RSMs).}
\end{remark}

\mingzhang{
\begin{remark}[DSM-maps as a Debugging Tool]
	Note that the DSM-maps contain a lot of useful information about the programs. For example, the verification could be ``inverted'' and used as a debugging tool, since DSM-maps can provide witnesses for proving/refuting a.s. termination. This point has also been mentioned in previous results on RepSMs (See~\cite{ChatterjeeNZ2017}).  Given that our approach is modular and hence much faster for larger programs, it can also be used as an almost-real-time debugging tool and this point applies to it even more strongly than previous RepSM approaches.
\end{remark}
}

\begin{remark}[Real-valued Variables]
Although we illustrate our approach on integer-valued variables, we show that it also works for real-valued variables.
First, we directly extend the notion of DSM-maps to real-valued variables, where we only replace
the discrete summation $\sum_{\mu\in \val{\rvars}} \sampdpd(\mu)\cdot \eta(\loc',u(\nu,\mu))$ to an integral.
Then we can prove the soundness of DSM-maps and construct the synthesis algorithm in the same way as for the integer case.
\end{remark}

\section{A Sound Proof System for Almost-Sure Termination}

In this section, we provide a proof system $\mathcal{D}$, in the style of Hoare logic, for the DSM approach. Note that DSM-maps treat the outer loop in a different manner than the inner loops. Specifically, in Definition~\ref{def:dbprsm}, the requirement (D5) is only applied to the outer loop. This distinction is handled in our proof system by introducing different kinds of Hoare triples.
Let $R$ and $R'$ be arithmetic  expressions over program variables. The Hoare triple $\{R \}P \{R'\}$ indicates that for the program $P$, there exists a DSM-map $\eta$ such that $\eta(\lin^{P},\b{ } )=R$ and $\eta(\lout^P, \b{ })=R'$. Similarly, the triple $\langle R \rangle P \langle R'\rangle$ indicates the existence of a map $\eta$ with $\eta(\lin^{P},\b{ } )=R$ and $\eta(\lout^P, \b{ })=R'$ that satisfies all conditions of a DSM-map except for (D5). Intuitively, such an $\eta$ is not a DSM-map for $P$, but it can be extended to a DSM-map for another program that contains $P$ as an inner loop. We use the term \emph{partial DSM} to describe such $\eta$.

We have following axiom schemata and rules in $\mathcal{D}$. Note that the \amir{DSM while rules are the most important novelty in our proof system. They are also the only pair of rules that are different for the two kinds of Hoare triples, i.e.~the first while rule requires (D5), while the second while rule does not.} Moreover, the values of $a, b, c$ and $\epsilon$ are not fixed and can be different in every application of the rules below.
\begin{compactenum}
	\item DSM while rules:
	$$\inference{\langle R \rangle P \langle R'\rangle, \mathbf{G \rightarrow R'\geq c},
		\\G \rightarrow  a\leq R-R'\leq-\epsilon,\\ \text{ and } \neg G \rightarrow a\leq R''- R'\leq-\epsilon
	}{\{R'\}\textbf{ while } G \textbf{ do } P\textbf{ od }\{ R''\}} \quad, \quad
	\inference{\langle R \rangle P \langle R'\rangle,
		\\G \rightarrow  a\leq R-R'\leq-\epsilon,\\ \text{ and } \neg G \rightarrow a\leq R''- R'\leq-\epsilon
	}{\langle R'\rangle \textbf{ while } G \textbf{ do } P\textbf{ od }\langle R''\rangle}$$
	\item Skip statement axiom schemata:
	$$\inference{a\leq R'-R\leq-\epsilon}{\{R\}\textbf{ skip }\{R'\}} \quad, \quad \inference{a\leq R'-R\leq-\epsilon}{\langle R \rangle\textbf{ skip }\langle R'\rangle}$$
	\item Assignment axiom schemata:
	$$\inference{a\leq \expv[R'[x \leftarrow \mathfrak{e}]]-R\leq-\epsilon}{\{R\}~x:=\mathfrak{e} ~\{R'\}} \quad, \quad
	\inference{a\leq \expv[R'[x \leftarrow \mathfrak{e}]]-R\leq-\epsilon}{\langle R\rangle~x:=\mathfrak{e} ~\langle R'\rangle}
	$$
	Here $R'[x \leftarrow \mathfrak{e}]$ is the expression obtained when one replaces all occurrences of the variable $x$ in $R'$ by the expression $\mathfrak{e}.$
	\item Sequential composition rules:
	$$\inference{\{R\}P_1\{R'\},\{R'\}P_2\{R''\}}{\{R\}P;Q\{R''\}} \quad, \quad
	\inference{\langle R\rangle P_1 \langle R'\rangle,\langle R'\rangle P_2 \langle R''\rangle}{\langle R\rangle P;Q\langle R''\rangle}$$
	\item Conditional branch rules:
	$$\inference{\{R_1\}P_1\{R'\},\{R_2\}P_2\{R'\},\\G \rightarrow a\leq R_1-R\leq-\epsilon,\\ \text{ and } \neg G \rightarrow a\leq R_2-R\leq-\epsilon}{\{R\}\textbf{ if } G \textbf{ then }P_1\textbf{ else }P_2\{R'\}} \quad, \quad
	\inference{\langle R_1\rangle P_1\langle R'\rangle,\langle R_2\rangle P_2\langle R'\rangle,\\G \rightarrow a\leq R_1-R\leq-\epsilon,\\ \text{ and } \neg G \rightarrow a\leq R_2-R\leq-\epsilon}{\langle R\rangle\textbf{ if } G \textbf{ then }P_1\textbf{ else }P_2\langle R'\rangle}
	$$
	\item Non-deterministic branch rules:
	$$\inference{\{R_1\}P_1\{R'\},\{R_2\}P_2\{R'\},\\ a\leq R_1-R\leq-\epsilon,\\ \text{ and } a\leq R_2-R\leq-\epsilon} {\{R\}\textbf{ if } \star \textbf{ then }P_1\textbf{ else }P_2\{R'\}} \quad, \quad
	\inference{\langle R_1\rangle P_1\langle R'\rangle,\langle R_2\rangle P_2\langle R'\rangle,\\ a\leq R_1-R\leq-\epsilon,\\ \text{ and } a\leq R_2-R\leq-\epsilon} {\langle R\rangle\textbf{ if } \star \textbf{ then }P_1\textbf{ else }P_2\langle R'\rangle}
	$$
	\item Probabilistic  branch rules:
	$$\inference{\{R_1\}P_1\{R'\},\{R_2\}P_2\{R'\},\\ a\leq R_1-R\leq b,\\ a\leq R_2-R\leq b,
		\\ \text{ and } p\cdot R_1+(1-p)\cdot R_2 \leq R-\epsilon
	} {\{R\}\textbf{ if } \textbf{prob}($p$) \textbf{ then }P_1\textbf{ else }P_2\{R'\}} \quad, \quad
	\inference{\langle R_1\rangle P_1\langle R'\rangle,\langle R_2\rangle P_2\langle R'\rangle,\\ a\leq R_1-R\leq b,\\ a\leq R_2-R\leq b,
	\\ \text{ and } p\cdot R_1+(1-p)\cdot R_2 \leq R-\epsilon
} {\langle R\rangle \textbf{ if } \textbf{prob}($p$) \textbf{ then }P_1\textbf{ else }P_2\langle R'\rangle}$$

\end{compactenum}

The rules above can be used for establishing the existence of a DSM-map, which serves as a side condition in our \amir{modular} approach for proving a.s.~termination. Let $\tm(P)$ denote that the program $P$ terminates. The proof system $\mathcal{D}$ contains the following schemata and rules to \amir{modularly} prove a.s.~termination of programs:

\begin{compactenum}
	\setcounter{enumi}{7}
	\item \amir{Modular} DSM termination rule:
	$$
	\inference{\tm(P) \text{ and }\\ \{R_1\} \textbf{ while } G \textbf{ do } P \textbf{ od } \{R_2\}}{ \tm(\textbf{ while } G \textbf{ do } P \textbf{ od })}
	$$
	\item Assignment and skip termination axiom schemata:
	$$
	\inference{}{\tm( x := \mathfrak{e} )}
	\quad, \quad
	\inference{}{\tm(\textbf{ skip })}
	$$
	\item Sequential composition termination rule:
	$$
	\inference{\tm(P_1) \text{ and } \tm(P_2)}{\tm(P_1; P_2)}
	$$
	\item Branching composition termination rules:
	$$
	\inference{\tm(P_1) \text{ and } \tm(P_2)}{\tm(\textbf{ if } G \textbf{ then }P_1 \textbf{ else } P_2 \textbf{ fi })}
	\quad, \quad
	\inference{\tm(P_1) \text{ and } \tm(P_2)}{\tm(\textbf{ if } \star \textbf{ then }P_1 \textbf{ else } P_2 \textbf{ fi })} $$
	$$
	\inference{\tm(P_1) \text{ and } \tm(P_2)}{\tm(\textbf{ if prob} (p) \textbf{ then }P_1 \textbf{ else } P_2 \textbf{ fi })}
	$$
\end{compactenum}

\begin{theorem}\label{thm:dsm-V-rule}
 The proof system $\mathcal{D}$ is sound for a.s. termination of probabilistic programs.
\end{theorem}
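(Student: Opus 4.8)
The plan is to prove soundness by a single structural induction on the height of derivations in $\mathcal{D}$, splitting into cases according to the form of the derived judgment. Writing $\lin^P,\lout^P$ for the entry and exit labels of a program fragment $P$, I would establish: (A) whenever $\{R\}P\{R'\}$ is derivable there is a genuine DSM-map $\eta$ for $P$ (all of (D1)--(D5) in Definition~\ref{def:dbprsm}) with $\eta(\lin^P,\cdot)=R$ and $\eta(\lout^P,\cdot)=R'$; (B) whenever $\langle R\rangle P\langle R'\rangle$ is derivable there is a \emph{partial} DSM $\eta$ (conditions (D1)--(D4), possibly not (D5)) with the same boundary values; and (C) whenever $\tm(P)$ is derivable, $P$ is a.s.~terminating under every initial valuation and scheduler. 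Statements (A) and (B) are exactly the intended meaning the paper assigns to the two kinds of Hoare triples, so the work there is to check that each rule's side conditions are calibrated precisely to preserve the relevant DSM conditions; the triple-building rules never mention $\tm$, so (A)/(B) close among themselves and (C) is then built on top of them.

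For the base cases I would unfold (D1): skip and assignment are both assignment-type statements, so the axiom's side condition $a\le R'-R\le-\epsilon$ (resp.\ $a\le\expv[R'[x\leftarrow\mathfrak{e}]]-R\le-\epsilon$) supplies the strict expected decrease of (D1), while the pointwise difference bound of (D1) is recorded in the interval $[a,b]$ (with any $b\ge-\epsilon$). The sequential rules glue the inductively-obtained $\eta_1$ for $P_1$ and $\eta_2$ for $P_2$ at their common boundary value $R'$; since $\eta_1(\lout^{P_1},\cdot)=R'=\eta_2(\lin^{P_2},\cdot)$, no transition is duplicated at the junction and the combined map inherits (D1)--(D4). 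The three branch rules attach a fresh branch label of value $R$, and here the side conditions $a\le R_i-R\le-\epsilon$ (conditional/non-deterministic) or the averaged inequality $p\cdot R_1+(1-p)\cdot R_2\le R-\epsilon$ with pointwise bounds (probabilistic) are verbatim the requirements (D2)/(D3)/(D4) for the newly introduced transitions. A bookkeeping obligation pervading all these steps is \emph{constant reconciliation}: each sub-derivation may carry its own $\epsilon_i,a_i,b_i$, so I would unify them by setting $\epsilon:=\min_i\epsilon_i>0$, $a:=\min_i a_i$, and $b:=\max\{-\epsilon,\max_i b_i\}$, noting that strict decrease by at least $\epsilon$ and membership in the widened interval are preserved, and that taking $b\ge-\epsilon$ makes $\min\{-\epsilon,b\}=-\epsilon$ so that the branch side conditions line up with (D2)/(D3).

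The crux of (A)/(B) is the pair of DSM while rules, and this is the step I expect to be the main obstacle. Given a partial DSM $\langle R\rangle P\langle R'\rangle$ for the loop body, I would assign the guard label (the loop entry $\lin$) the value $R'$ and extend $\eta$ to the whole loop. The key structural fact is that in the loop's CFG the body's exit $\lout^P$ is identified with the guard label, so the back-edge introduces no new constraint and the inherited value $\eta(\lout^P,\cdot)=R'$ already agrees with the guard value; the only genuinely new transitions are the two guard branches, whose (D2) demands $a\le R-R'\le-\epsilon$ under $G$ and $a\le R''-R'\le-\epsilon$ under $\neg G$ are exactly the rule's side conditions. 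The box rule further imposes $G\to R'\ge c$, which is precisely (D5) at the loop entry, whereas the angle rule drops it; this is the design that lets every inner loop be certified by an angle triple (a partial DSM, no well-foundedness) while the single outermost loop receives a box triple (a full DSM-map), in agreement with Remark~\ref{rem:d5}. The care needed here is to verify that the partial DSM's internal conditions survive the constant reconciliation and that the guard branches, which are the sole additions, satisfy (D2).

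Finally, for (C) I would discharge the termination rules one by one. The assignment and skip termination axioms hold since those statements terminate in a single step; the sequential and branching termination rules are exactly the closure properties already proved in Lemma~\ref{lem:seqcond}. The heart of the matter is the Modular DSM termination rule: its premises furnish, by the induction hypothesis, that the body $P$ is a.s.~terminating and, via statement (A), that a genuine DSM-map exists for $\textbf{while}(G,P)$, which are precisely the two hypotheses of Theorem~\ref{thm:rsm}; that theorem then yields a.s.~termination of the loop. Nested loops are handled automatically by the recursion on $\tm$, since each loop level invokes Theorem~\ref{thm:rsm} with its immediate body's a.s.~termination supplied as the side condition, so the modular shape of the proof system mirrors the modular soundness theorem exactly. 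As every derivable $\tm(P)$ therefore entails a.s.~termination, the system $\mathcal{D}$ is sound.
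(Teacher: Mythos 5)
Your proposal is correct and follows essentially the same route as the paper's proof: the while rules are discharged by extending the partial DSM of the body with the values $R'$ and $R''$ at the loop's entry and exit labels, rules (2)--(7) are matched to conditions (D1)--(D4), differing constants across sub-derivations are reconciled by taking extremal values, and the termination rules are reduced to Theorem~\ref{thm:rsm} and Lemma~\ref{lem:seqcond}. Your treatment is somewhat more explicit than the paper's (the three-part induction invariant and the observation that one should take $b\ge-\epsilon$ so that $\min\{-\epsilon,b\}=-\epsilon$ in (D2)/(D3)), but these are refinements of the same argument rather than a different approach.
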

\begin{proof}
  Consider the first DSM while rule. Let $Q=\textbf{while } G \textbf{ do } P\textbf{ od }$.
  Suppose that $\langle R \rangle P \langle R'\rangle$ and the partial DSM of $P$ is $\eta$, then we construct a DSM $\eta'$ by defining  $\eta'(\lin^{Q},\b{ } ):=R'$, $\eta'(\lout^{Q},\b{ } ):=R''$ and $\eta'(\loc,\b{ } ):=\eta(\loc,\b{ } )$ for all labels $\loc$ in the loop body. It is easy to check that $\eta'$ is a valid DSM, and we have $\{R'\} Q \{ R''\}$. The soundness of the other DSM while rule can be proven in a similar manner. Rules~(2)--(7) correspond to requirements (D1)--(D4) in the definition of a DSM-map (Definition~\ref{def:dbprsm}). Note that it does not matter if different $\epsilon$ values were used to obtain the preconditions of these rules, given that one can use the smallest $\epsilon$ as the parameter for the DSM. The same point applies to $a, b, c$.
   Rule~(8) is the same as Theorem~\ref{thm:rsm}. Rule~(9) is sound because a single assignment or skip statement a.s.~terminates. Soundness of rules~(10)--(11) is proven in Lemma~\ref{lem:seqcond}.
\end{proof}

We now \hongfei{argue} \hongfei{from the perspective of proof rules} that the approach of DSM-maps is a \amir{modular} approach for proving a.s.~termination.
	Note that in rule~(8) above, we use the assumption that $P$ terminates a.s., together with the side condition $\{R_1\} \textbf{ while } G \textbf{ do } P \textbf{ od } \{R_2\}$ (in the sense of Definition~\ref{def:compo}) to prove that $\textbf{ while } G \textbf{ do } P \textbf{ od }$ terminates a.s.~as well. Also, it is worth mentioning that our approach does not synthesize a global RSM for the entire program. Instead, it finds distinct DSMs for each of the while loops. 
See Appendix~\ref{app:psDSM} for a detailed example, in which two different DSM-maps are used for the internal and external while loops.

\section{the Template-Based Algorithm for Synthesizing DSM-maps}\label{sec:alg}

In this section, we provide an efficient template-based algorithm for synthesizing linear DSM-maps.
\mingzhang{
In theory, DSM-maps can take any general form. The only requirement is that they should satisfy (D1)-(D5) as in Definition~\ref{def:dbprsm}. In this section, to synthesize a DSM-map from a given program, we first assume that the function has a special form, i.e. it is linear, and then establish constraints over its coefficients. Finally, the constraints can be solved through linear programming, leading to a sound method for generation of DSM-maps.
}

Recall that the existence of DSM-maps is used as a side condition in our \amir{modular} approach for proving a.s.~termination. Concretely, the algorithm provided in this section can replace rules (1)--(7) in the proof system $\mathcal{D}$. Hence, combining it with rules (8)--(11) leads to an efficient and completely automated \amir{modular} method for proving a.s.~termination.

Since DSM-maps are similar to RSM-maps~\cite{SriramCAV,ChatterjeeFNH16,ChatterjeeFG16},
we can directly extend previous algorithms for synthesizing linear/polynomial RSM-maps~\cite{SriramCAV,ChatterjeeFNH16,ChatterjeeFG16} to  linear DSM-maps.
The key mathematical tool used in our algorithm is the well-known Farkas' Lemma.

\begin{theorem*}[Farkas' Lemma~\cite{FarkasLemma,SchrijverPolyhedra}]\label{thm:farkas}
Let $\mathbf{A}\in\mathbb{R}^{m\times n}$, $\mathbf{b}\in\mathbb{R}^m$, $\mathbf{c}\in\mathbb{R}^{n}$ and $d\in\mathbb{R}$.
Assume that $\{\mathbf{x}\in\mathbb{R}^n\mid \mathbf{A}\mathbf{x}\le \mathbf{b}\}\ne\emptyset$.
Then
\[
\{\mathbf{x}\in\mathbb{R}^n\mid \mathbf{A}\mathbf{x}\le \mathbf{b}\}\subseteq \{\mathbf{x}\in\mathbb{R}^n\mid \mathbf{c}^{\mathrm{T}}\mathbf{x}\le d\}
\]
iff there exists $\mathbf{y}\in\mathbb{R}^m$ such that $\mathbf{y}\ge \mathbf{0}$, $\mathbf{A}^\mathrm{T}\mathbf{y}=\mathbf{c}$ and $\mathbf{b}^{\mathrm{T}}\mathbf{y}\le d$.
\end{theorem*}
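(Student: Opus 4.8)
The plan is to prove the two directions of the biconditional separately, treating the forward direction (necessity of a dual certificate) as the substantive one. The reverse direction is immediate: given $\mathbf{y}\ge\mathbf{0}$ with $\mathbf{A}^{\mathrm T}\mathbf{y}=\mathbf{c}$ and $\mathbf{b}^{\mathrm T}\mathbf{y}\le d$, any $\mathbf{x}$ with $\mathbf{A}\mathbf{x}\le\mathbf{b}$ satisfies $\mathbf{c}^{\mathrm T}\mathbf{x}=\mathbf{y}^{\mathrm T}\mathbf{A}\mathbf{x}\le\mathbf{y}^{\mathrm T}\mathbf{b}=\mathbf{b}^{\mathrm T}\mathbf{y}\le d$, where the middle inequality uses $\mathbf{y}\ge\mathbf{0}$ together with $\mathbf{A}\mathbf{x}\le\mathbf{b}$ componentwise. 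So all the content sits in the forward direction, which I would establish by a separating-hyperplane argument equivalent to linear-programming duality.

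For the forward direction I would recast the desired conclusion as a cone-membership statement. Introduce the convex cone $C=\{(\mathbf{A}^{\mathrm T}\mathbf{y},\,\mathbf{b}^{\mathrm T}\mathbf{y}+s)\mid \mathbf{y}\ge\mathbf{0},\,s\ge 0\}\subseteq\mathbb{R}^{n+1}$, which is finitely generated by the $m+1$ vectors $(\mathbf{a}_i,b_i)$ (the rows $\mathbf{a}_i$ of $\mathbf{A}$ paired with the entries $b_i$ of $\mathbf{b}$) and $(\mathbf{0},1)$. Membership $(\mathbf{c},d)\in C$ is exactly equivalent to the claimed certificate, since it yields $\mathbf{y}\ge\mathbf{0}$ and $s\ge 0$ with $\mathbf{A}^{\mathrm T}\mathbf{y}=\mathbf{c}$ and $\mathbf{b}^{\mathrm T}\mathbf{y}=d-s\le d$. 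Thus it suffices to show that, under the inclusion hypothesis, $(\mathbf{c},d)$ cannot lie outside $C$.

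I would argue by contradiction: assume $(\mathbf{c},d)\notin C$. Since $C$ is a closed convex cone, the separating-hyperplane theorem supplies $(\mathbf{z},t)\in\mathbb{R}^{n+1}$ with $\mathbf{z}^{\mathrm T}\mathbf{c}+t\,d>0$ while $(\mathbf{A}\mathbf{z}+t\mathbf{b})^{\mathrm T}\mathbf{y}+t\,s\le 0$ for all $\mathbf{y}\ge\mathbf{0}$ and $s\ge 0$ (for a cone the separating functional can be taken nonpositive on $C$). Letting $s\to\infty$ forces $t\le 0$, and letting each coordinate of $\mathbf{y}$ grow forces $\mathbf{A}\mathbf{z}\le -t\,\mathbf{b}$. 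If $t<0$, then $\mathbf{x}:=\mathbf{z}/(-t)$ satisfies $\mathbf{A}\mathbf{x}\le\mathbf{b}$ yet $\mathbf{c}^{\mathrm T}\mathbf{x}>d$, contradicting the inclusion. If $t=0$, then $\mathbf{A}\mathbf{z}\le\mathbf{0}$ and $\mathbf{z}^{\mathrm T}\mathbf{c}>0$; taking any feasible $\mathbf{x}_0$ (which exists precisely by the nonemptiness assumption), the ray $\mathbf{x}_0+\lambda\mathbf{z}$ stays feasible for all $\lambda\ge 0$ while $\mathbf{c}^{\mathrm T}(\mathbf{x}_0+\lambda\mathbf{z})\to+\infty$, again contradicting the inclusion. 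Hence $(\mathbf{c},d)\in C$, which delivers the certificate.

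The hard part, and the one place where care is genuinely required, is justifying that the separating hyperplane exists, which rests on $C$ being \emph{closed}. Closedness of a finitely generated cone is the classical Minkowski--Weyl fact; I would either cite it or derive it via a Carath\'eodory-type argument reducing each element of $C$ to a nonnegative combination of at most $n+1$ generators and arguing on the resulting compact coefficient simplices. It is also worth noting where the hypotheses are consumed: the nonemptiness assumption is used exactly in the $t=0$ case to produce the unbounded ray, and it cannot be dropped, since for an infeasible system the inclusion holds vacuously while no certificate need exist. An alternative route would be to cite the homogeneous Farkas lemma directly and recover this affine version by appending the slack coordinate, but the self-contained separation argument above has the advantage of making the role of each hypothesis explicit.
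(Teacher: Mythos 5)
Your proposal is correct, but there is nothing in the paper to compare it against: the paper states Farkas' Lemma as a classical black-box result, citing \cite{FarkasLemma,SchrijverPolyhedra}, and never proves it. Your argument is the standard self-contained proof and it holds up under scrutiny. The easy direction is exactly the one-line weak-duality computation $\mathbf{c}^{\mathrm{T}}\mathbf{x}=\mathbf{y}^{\mathrm{T}}\mathbf{A}\mathbf{x}\le\mathbf{y}^{\mathrm{T}}\mathbf{b}\le d$. For the hard direction, your reduction to cone membership is sound: $(\mathbf{c},d)\in C$ with $C$ generated by the rows $(\mathbf{a}_i,b_i)$ and the slack generator $(\mathbf{0},1)$ is precisely the existence of the certificate, and the case analysis on the separating functional $(\mathbf{z},t)$ is handled correctly --- $t\le 0$ follows from the unbounded slack, $\mathbf{A}\mathbf{z}\le -t\mathbf{b}$ from scaling each coordinate of $\mathbf{y}$, the case $t<0$ rescales to a feasible point violating $\mathbf{c}^{\mathrm{T}}\mathbf{x}\le d$, and the case $t=0$ produces an unbounded feasible ray, which is exactly where the nonemptiness hypothesis enters (and your remark that it cannot be dropped is also accurate). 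You correctly identify the one genuinely delicate ingredient, the closedness of the finitely generated cone $C$, which is needed for strict separation and is the content of the Minkowski--Weyl theorem; citing it or giving the Carath\'eodory-plus-compactness argument are both acceptable ways to discharge it.
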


\paragraph{The Farkas' Linear Assertions $\Phi$.}
Farkas' Lemma transforms the inclusion testing of systems of linear inequalities into an emptiness problem.
Given a polyhedron $H=\{\mathbf{x}\in\mathbb{R}^n\mid \mathbf{A}\mathbf{x}\le \mathbf{b}\}$ as in the statement of Farkas' Lemma (Theorem~\ref{thm:farkas}),
we define the predicate $\Phi[H,\mathbf{c},d](\xi)$ (which is called a Farkas' linear assertion)
for Farkas' Lemma by
\[
\Phi[H,\mathbf{c},d](\xi):=\left(\xi\ge \mathbf{0}\right)\wedge \left(\mathbf{A}^\mathrm{T}\xi=\mathbf{c}\right)\wedge\left(\mathbf{b}^{\mathrm{T}}\xi\le d\right)
\]
where $\xi$ is a variable representing a column vector of dimension $m$.

Below, we fix an input probabilistic while loop $P$ with a linear invariant $I$. %
We assume that $P$ is \emph{affine}, i.e.~(i)~every assignment statement in $P$ has an affine expression at its right hand side; and (ii)~the loop guards
of the conditional branches of $P$ are in disjunctive normal form and each atomic proposition is a comparison between affine expressions.
\paragraph{The Synthesis Algorithm for DSM-maps.} Our algorithm for synthesizing DSM-maps consists of the following four steps:
\begin{compactenum}
  \item {\em Template.}
The algorithm establishes a template $\eta$ for a DSM-map by setting
$\eta(\loc,\mathbf{x}):=(\mathbf{\alpha}_\loc)^\mathrm{T}\mathbf{x}+\beta_\loc$ for each $\loc\in \locs$ and
$\mathbf{x}\in\Zset^{|\pvars|}$, where $\mathbf{\alpha}^\loc$ is a vector of scalar variables and $\beta^\loc$ is a scalar variable, both representing unknown coefficients.

\item {\em Variables for Parameters in a DSM-map.}
The algorithm sets up a scalar variable $\epsilon$, two scalar variables $a,b$ and a scalar variable $c$.
These variables directly correspond to the parameters for a DSM-map (see Definition~\ref{def:dbprsm}).

\item {\em Farkas' Linear Assertions.}
From the template, we establish Farkas' linear assertions from the conditions (D1)--(D4).
For example, the condition (D1) at a label $\ell$ requires that for the template $\eta$, it holds that
$\sum_{\mu\in \val{\rvars}} \sampdpd(\mu)\cdot \eta(\loc',u(\nu,\mu))\leq \eta (\loc,\nu)-\epsilon$ for all $\nu\in I(\ell)$.
Since the template $\eta$ is linear and we have affine assignments,
the inequality $\sum_{\mu\in \val{\rvars}} \sampdpd(\mu)\cdot \eta(\loc',u(\nu,\mu))\leq \eta (\loc,\nu)-\epsilon$
would also be linear.
Then (D1) is essentially an inclusion of the set $I(\ell)$ in the halfspace represented by $\sum_{\mu\in \val{\rvars}} \bar{\Upsilon}(\mu)\cdot \eta(\loc',u(\nu,\mu))\leq \eta (\loc,\nu)-\epsilon$, and can be equivalently transformed into a group of Farkas' linear assertions, given that $I(\ell)$ is a finite union of polyhedra.

\item {\em Solution through Linear Programming.}
We group the constructed Farkas' linear assertions together in a conjunctive manner so that we have a system of linear inequalities over scalar variables (including template variables, parameter variables and fresh variables from Farkas' linear assertions). Then, we solve for the variables through linear programming. If we can get a solution for the scalar variables, then we get a DSM-map that witnesses the a.s. termination of the input program; otherwise, the algorithm cannot prove the a.s. termination property and outputs ``\emph{fail}''.
\end{compactenum}

\begin{theorem}
   Linear DSM-maps can be computed in polynomial time.
\end{theorem}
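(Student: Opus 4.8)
The plan is to show that the four-step procedure described above produces a single system of linear inequalities whose size is polynomial in the size of the input, and whose feasibility (together with a witnessing solution) can be decided in polynomial time by linear programming. First I would fix the linear template $\eta(\loc,\mathbf{x}) := \alpha_\loc^{\mathrm{T}}\mathbf{x} + \beta_\loc$, which introduces $|\locs|\cdot(|\pvars|+1)$ unknown scalar coefficients, together with the four parameter variables $\epsilon, a, b, c$. Since $|\locs|$ and $|\pvars|$ are bounded by the program size, the number of unknowns is polynomial.

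Next I would encode each of the conditions (D1)--(D5) of Definition~\ref{def:dbprsm} as a universally quantified implication of the form ``$\nu \in H \Rightarrow (\text{linear inequality in } \nu)$'', where $H$ ranges over the finitely many polyhedra whose union forms the relevant piece $I(\loc)$ of the linear invariant. The key observation is that, because $\eta$ is linear and the assignments are affine, the expectation $\sum_{\mu\in\val{\rvars}}\sampdpd(\mu)\cdot\eta(\loc',u(\nu,\mu))$ is itself an affine expression in $\nu$ whose coefficients are linear in the template unknowns (the sampling randomness collapses into the fixed expectations of the sampling variables). Hence each strict-decrease condition, and each of the difference-bounded conditions (each split into a lower and an upper inequality, and expanded over the support of the sampling distribution), is a linear inequality in $\nu$ whose coefficients and constant term are linear in the unknowns. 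Both the expectation summation and the universal quantification over $\mu$ expand to finitely many linear inequalities, since the sampling distributions have finite support, so all of these are computed in polynomial time.

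The heart of the argument is the elimination of the universal quantifier over $\nu$ via Farkas' Lemma. For a condition ``$\forall\nu: \mathbf{A}\nu \le \mathbf{b} \Rightarrow \mathbf{c}^{\mathrm{T}}\nu \le d$'' I would introduce a fresh vector $\xi$ of Farkas multipliers and replace the condition by the assertion $\Phi[H,\mathbf{c},d](\xi)$, i.e.\ $\xi \ge \mathbf{0} \wedge \mathbf{A}^{\mathrm{T}}\xi = \mathbf{c} \wedge \mathbf{b}^{\mathrm{T}}\xi \le d$. The crucial point that keeps the whole construction within linear programming is that the premise polyhedron $H$ (coming from the fixed, given invariant) has \emph{known} $\mathbf{A}$ and $\mathbf{b}$, whereas only the conclusion $(\mathbf{c},d)$ depends on the unknown coefficients, and it does so \emph{linearly}. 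Consequently $\mathbf{A}^{\mathrm{T}}\xi = \mathbf{c}$ and $\mathbf{b}^{\mathrm{T}}\xi \le d$ are \emph{jointly linear} in the combined vector of template unknowns, parameter variables, and multipliers $\xi$; no bilinear product $\xi \cdot \alpha$ ever arises. I expect this to be the main obstacle to verify carefully: if the unknowns appeared in the premise as well, Farkas' Lemma would yield products of unknowns and we would fall back to (NP-hard) bilinear programming, so the argument hinges on the invariant being a fixed input rather than something to be synthesized simultaneously.

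Finally, conjoining all the Farkas assertions --- one per inequality, per polyhedron, per transition, of which there are polynomially many --- yields a single system of linear inequalities of polynomial size over polynomially many variables, solvable in polynomial time by any polynomial-time LP method (ellipsoid or interior-point). Two minor points remain: the strict requirement $\epsilon > 0$ is discharged by normalizing $\epsilon := 1$, which is without loss of generality since scaling a DSM-map together with $a,b,c$ by a positive constant preserves all of (D1)--(D5); and the strict inequalities inside loop guards are handled by the standard non-strict relaxation of Farkas' Lemma. A feasible solution yields exactly the coefficients of a linear DSM-map, so the procedure is a sound polynomial-time synthesis algorithm.
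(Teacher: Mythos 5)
Your proposal is correct and follows essentially the same route as the paper: the paper's proof simply observes that Steps (1)--(3) of the synthesis algorithm run in polynomial time, so the resulting LP in Step (4) has polynomial size and is solvable in polynomial time. Your write-up fills in the details the paper dismisses as ``straightforward'' --- in particular the key observation that, since the invariant polyhedra are fixed inputs, the Farkas assertions are jointly linear in the template coefficients, parameters, and multipliers (no bilinear terms), plus the normalization $\epsilon:=1$ to avoid a strict inequality --- all of which is sound and consistent with the paper's construction.
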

\begin{proof}
	It is straightforward to check that Steps (1)--(3) of the Synthesis algorithm have polynomial runtime. Hence, the resulting LP, which should be solved in Step (4), has polynomial size. It is well-known that LPs can be solved in polynomial time.
\end{proof}

\begin{example}
We now illustrate our synthesis algorithm on the program in Example~\ref{example:rsp}.
\begin{compactitem}
\item First, we set the
template function $\eta(\loc,\mathbf{x})=(\alpha_\loc)^\mathrm{T}\mathbf{x}+\beta_\loc$ for every label $\loc$, where $\mathbf{x}=(x,y)^\mathrm{T}$ is the vector of program variables and the scalar variable $\beta_\loc$ together with the coordinate variables in the vector $\alpha_\loc$  are unknown coefficients at a label $\loc$.
\item
Second, we set up the parameters $\epsilon,a,b,c\in \Rset$ as in the definition of DSM-maps. The unknown coefficients in $\alpha^\loc,\beta^\loc$ and the parameters are what we want to solve for, in order to obtain a concrete DSM-map.
\item In the third step, we establish Farkas' linear assertions.
Below we illustrate an example on the construction of Farkas' linear assertions.
Consider the condition (D4) at  the label $5$. The linear invariant at the label $5$ is  $x\geq -7\wedge 1\leq y\leq 9$ that represents the polyhedron $H= \left\{ \mathbf{x} \mid   \bigl( \begin{smallmatrix} -1 & 0 \\ 0 & 1 \\0&-1 \end{smallmatrix} \bigr)\mathbf{x}\leq \bigl( \begin{smallmatrix} 7 \\ 9\\-1  \end{smallmatrix}  \bigr)\right\}$.
To satisfy (D4), we have to ensure that the following conditions hold for every $\mathbf{x}$: $a\leq\eta(6,\mathbf{x})-\eta(5,\mathbf{x})\leq b$,
$a\leq\eta(7,\mathbf{x})-\eta(5,\mathbf{x})\leq b$ and $\frac{6}{13}\eta(6,\mathbf{x})+\frac{7}{13}\eta(7,\mathbf{x})\leq \eta(5,\mathbf{x})-\epsilon$.
We first rewrite them into
$\bigl(\alpha_5-\alpha_6\bigr)^\mathrm{T}
 \mathbf{x}
 \leq -a+\beta_6-\beta_5 $
 ,
 $(-\alpha_5+\alpha_6)^\mathrm{T}
\mathbf{x}
\leq
b-\beta_6+\beta_5$
and
$(\frac{6}{13}\alpha_6+\frac{7}{13}\alpha_7-\alpha_5)^\mathrm{T}
\mathbf{x}
\leq
-\epsilon$
.
Let $d:=-a+\beta_6-\beta_5$, $d':=b-\beta_6+\beta_5$.
Then we construct the Farkas' linear assertions $\Phi[H,\alpha_5-\alpha_6,d](\xi)$, $\Phi[H,-\alpha_5+\alpha_6,d'](\xi')$ and $\Phi[H,\frac{6}{13}\alpha_6+\frac{7}{13}\alpha_7-\alpha_5,\epsilon](\xi'')$.
\item Finally, in the fourth step we group all generated Farkas' linear assertions together in a conjunctive manner and solve for the unknown coefficients, together with the parameters and the fresh variables from Farkas' linear assertions, using an LP-solver. If we can get a solution for the unknown coefficients, then the algorithm confirms that the input program is a.s. terminating (Theorem~\ref{thm:rsm}). Otherwise, the algorithm outputs ``\emph{fail}''. In this case, our algorithm is able to synthesize a linear DSM-map (see Section~\ref{sec:exp}).
\end{compactitem}

\end{example}

\section{Experimental Results}\label{sec:exp}

In this section, we present experimental results obtained by an implementation of the algorithm of Section~\ref{sec:alg}. Note that our algorithm has very few dependencies, all of which are standard operations (e.g. linear invariant generation and linear programming).

\paragraph{Experimental Benchmarks.} We consider two families of benchmarks:

\begin{compactitem}
	\item First, to illustrate the applicability of our approach to different types of while loops, we consider the program of Figure~\ref{example:counter} (i.e.~the counterexample to the FHV-rule), the Mini-roulette program of Example~\ref{example:rsp}, and three other classical examples of probabilistic programs that exhibit various types of nested while loops (Figure~\ref{fig:benchmark}).
\emph{Program 1} is a simple nested while loop, in which the outer loop control variable is updated in the inner loop.
\emph{Program 2} is a nested while loop with two sequentially-composed inner loops, in which the outer loop control variables are each updated in one of these inner loops.
\emph{Program 3} is a three-level nested while loop.

\item Second, we demonstrate that our approach can handle real-world programs by providing experimental results on the benchmarks used in~\cite{pldi18}.

\end{compactitem}

\begin{figure*}
	\begin{center}
		\begin{tabular}{|c|c|c|}
			\hline
			Program 1 & Program 2 & Program 3\\
			\hline
			\lstset{language=prog}
			\lstset{tabsize=4}
			\begin{lstlisting}[mathescape,basicstyle=\small]
$1:$while $x\geq 1$ do
$2:$	$z:=y$;
$3:$	while $z\geq0$ do
$4:$		$z:=z-1$;
$5:$		$x:=x+r$
$\,\,$	od;
$6:$	$y:=2\times y$;
$7:$	$x:=x-1$
$\,\,$  od
$8:$
			\end{lstlisting}
			&
			\lstset{language=prog}
			\lstset{tabsize=4}
			\begin{lstlisting}[mathescape,basicstyle=\small]		
1:while $x+y\geq 1$ do
2:    $a:=z$;
3:    $b:=z$;
4:   while $a\geq0$ do
5:       $a:=a-1$;
6:       $x:=x+r_1$
     od;
7:   while $b\geq0$ do
8:       $b:=b-1$;
9:       $y:=y+r_2$
     od;
10:    $z:=2\times z$;
11:    $x:=x-1$
   od
12:
			\end{lstlisting}
			&
			\lstset{language=prog}
			\lstset{tabsize=4}
			\begin{lstlisting}[mathescape,basicstyle=\small]		
1: while $x\geq 1$ do
2:    $a:=z$;
3:    while $a\geq0$ do
4:       $a:=a-1$;
5:       $b:=z$;
6:       while $b\geq0$ do
7:           $b:=b-1$;
8:           $x:=x+r$
od;
9:       $x:=x+r$
od;
10:   $z:=2\times z$;
11:   $x:=x-1$
   od
12:
		\end{lstlisting}
			\\ \hline
		\end{tabular}
	\end{center}
	\caption{Our benchmark programs. These programs exhibit different types of nested while loops.}
	\label{fig:benchmark}
\end{figure*}

\paragraph{Invariants.} Our approach is able to synthesize DSMs using very simple invariants obtained from the loop guards. See Appendix~\ref{app:exp} for more details. Note that in all cases, the invariants we use are strictly weaker than, and can be replaced by, invariants generated by standard tools such as~\cite{DBLP:conf/cav/ColonSS03} and~\cite{sting1}. However, we use weaker invariants to demonstrate the power of our algorithm.

\paragraph{Distributions.} We assume that each sampling variable $r$ in Programs 1, 2 and 3 is sampled according to the distribution $\probm(r=1)=0.25, \probm(r=-1)=0.75$. This choice is arbitrary and our approach can synthesize linear DSMs for any distribution, as long as such a DSM exists. The benchmarks of~\cite{pldi18} contain a specification of the distributions.

\paragraph{Implementation and Experiment Machine.} We implemented our approach in Java. Our implementation passed the OOPSLA artifact evaluation and is publicly available at \href{http://pub.ist.ac.at/~akafshda/DSM/}{http://pub.ist.ac.at/$\sim$akafshda/DSM/}.
We used lpsolve~\cite{berkelaar2004lpsolve} and JavaILP~\cite{javailp} for solving the linear programming instances. The results were obtained on a Windows 10 machine with a 2.5 GHz Intel Core i5-2520M processor and 8 GB of RAM.

\paragraph{Experimental Results.}
Table~\ref{tab:experiments} summarizes our experimental results over the five example programs and Table~\ref{tab:experiments2} provides the results over the benchmarks from~\cite{pldi18}. Note that the counterexample program does not terminate almost surely. Therefore, any sound approach is expected to fail on this program.
In all other cases, our approach is extremely efficient. \amir{It processes each benchmark program in less than 2 seconds and successfully synthesizes \emph{separate} linear DSM-maps for each of the while loops in the program}. In all cases, the DSM parameters $\epsilon$ and $c$ are synthesized as $1$ and $0$, respectively (except that $c=-71$ for~\texttt{coupon}). \amir{The reported runtime for each benchmark is the total time spent for synthesizing DSM-maps for all while loops in the benchmark. The column $\eta(\lin)$ reports the expression at the first label $\lin$ of the program in the DSM-map $\eta$ corresponding to the outermost loop.}
\mingzhang{
See Appendix~\ref{app:dsmres} for more details.
}

\begin{table*}
	\begin{center}
		{
			\footnotesize{
				\begin{tabular}{|c|c|c|c|c|c|c|}
					\hline
					Example & Result & Runtime (s) & $\eta(\lin)$ & $[a, b]$\\
					\hline
					\hline
					
					Counterexample & Failure & $0.774$ & -- & --\\
					\hline

					Example~\ref{example:rsp} & Success & $0.812$ & $75.4 \cdot x$ & $[-70.6, 155.6]$ \\
					\hline
					
					Program 1 & Success & $0.722$ & $6 \cdot x + 5$ & $[-4, 8]$\\
					\hline
					
					Program 2 & Success & $0.921$ & $7 \cdot x + 7 \cdot y + 6$ & $[-5, 9.5]$ \\
					\hline
					
					Program 3 & Success & $0.872$ & $8 \cdot x + 7$ & $[-5, 11]$\\
					\hline
					
				\end{tabular}
			}
		}
	\end{center}
	\caption{Experimental Results on Example Programs}
	\vspace{-1em}
	\label{tab:experiments}
\end{table*}

\begin{table*}
	\begin{center}
		{
			\footnotesize{
				\begin{tabular}{|c|c|c|c|c|c|c|}
					\hline
					Benchmark & Result & Runtime (s) & $\eta(\lin)$ & $[a, b]$\\
					\hline
					\hline
					
					\texttt{ber} & Success & $0.597$ & $4 \cdot n - 4 \cdot x + 1$ & $[-3, 1]$ \\
					\hline
					
					\texttt{bin} & Success & $0.822$ & $0.4 \cdot n - 0.4 \cdot x + 1$ & $[-3, 1]$ \\
					\hline
					
					\texttt{C4B\_t09} & Success & $0.665$ & $-4 \cdot j + 4 \cdot x + 1$ & $[-1, 0]$ \\
					\hline
					
					\texttt{C4B\_t13} $\diamondsuit$ & Success & $1.308$ & $4.5 \cdot x + 2 \cdot y - 1$ & $[-1.5, 0.5]$ \\
					\hline
					
					\texttt{C4B\_t15} $\diamondsuit$ & Success & $1.264$  & $5 \cdot x + 2$ & $[-5, 0]$ \\
					\hline
					
					\texttt{C4B\_t19} & Success & $1.202$ & $6 \cdot i + 5$ & $[-4, 2]$ \\
					\hline
					
					\texttt{C4B\_t30} & Success & $0.653$ & $2.5 \cdot x + 2.5 \cdot y + 1$ & $[-3.6, 1.5]$ \\
					\hline
					
					\texttt{C4B\_t61} & Success & $1.217$ & $0.286 \cdot l$ & $[-1.3, 0]$\\
					\hline
					
					\texttt{complex} $\diamondsuit$ & Success & $1.444$ & $8 \cdot N - 8 \cdot x + 1$ & $[-5, 3]$\\
					\hline
					
					\texttt{condand} & Success & $0.637$ & $3 \cdot m + 3 \cdot n + 1$ & $[-1, 0]$ \\
					\hline
					
					\texttt{cooling} $\diamondsuit$ & Success & $1.364$ & $2 \cdot mt - 2 \cdot st + 2.443 \cdot pt + 3.821 $ & $[-2.6, 0]$\\
					\hline
					
					\texttt{coupon} & Success & $0.769$ & $-35 \cdot i + 69$ & \makecell{$[-29, 27]$\\ $c=-71$}\\
					\hline
					
					\texttt{cowboy\_duel} & Success & $0.632$ & $4.066 \cdot \textsf{flag} - 1.162$ & $[-2.7, 2.2]$ \\
					\hline
					
					\texttt{filling\_vol} & Success & $0.720$ & $-3.356 \cdot \textsf{volMeasured} + 3.356 \cdot \textsf{volToFill} + 3$ & $[-16.6, 2.3]$ \\
					\hline
					
					\texttt{geo} & Success & $0.628$ & $7 \cdot \textsf{flag} + 1$ & $[-4, 2]$ \\
					\hline
					
					\texttt{hyper} & Success & $0.599$ & $10 \cdot n - 10 \cdot x$ & $[-19, 1]$ \\
					\hline
					
					\texttt{linear01} & Success & $0.614$ & $1.796 \cdot x + 2.593$ & $[-1.6, 0.2]$ \\
					\hline
					
					\texttt{prdwalk} & Success & $0.661$ & $1.770 \cdot n - 1.770 \cdot x + 1$ & $[-5.6, 3.2]$ \\
					\hline
					
					\texttt{prnes} $\diamondsuit$ & Success & $1.328$ & $-23.655 \cdot n + 0.032 \cdot y + 4.365$ & $[-17.3, 20.3]$ \\
					\hline
					
					\texttt{prseq} & Success & $1.242$ & $1.005 \cdot x - 1.005 \cdot y + 1$ & $[-2, 0]$ \\
					\hline
					
					\texttt{prspeed} & Success & $0.947$ & $8 \cdot m + 4.161 \cdot n - 4.161 \cdot x - 8 \cdot y + 7.484$ & $[-5, 3]$\\
					\hline
					
					\texttt{race} & Success & $0.687$ & $-3.279 \cdot h + 3.279 \cdot t + 14.557$ & $[-17.2, 15.6]$ \\
					\hline
					
					\texttt{rdseql} $\diamondsuit$ & Success & $1.261$ & $5.5 \cdot x + 2 \cdot y - 2$ & $[-1.5, 0.5]$ \\
					\hline
					
					\texttt{rdspeed} & Success & $0.760$ & $6 \cdot m + 2 \cdot n - 2 \cdot x - 6 \cdot y + 2$ & $[-4, 2]$ \\
					\hline
					
					\texttt{rdwalk} & Success & $0.625$ & $6 \cdot n - 6 \cdot x + 1$ & $[-10, 8]$ \\
					\hline
					
					\texttt{rfind\_lv} & Success & $0.625$ & $6 \cdot \textsf{flag} + 1$ & $[-4, 2]$ \\
					\hline
					
					\texttt{rfind\_mc} & Success & $0.668$ & $0.5 \cdot \textsf{flag} - 3.75 \cdot i + 3.75 \cdot k + 1.25$ & $[-1.25, 0]$ \\
					\hline
					
					\texttt{robot} & Success & $1.562$ & $28.778 \cdot N + 114.444$ & $[-216.7, 1.7]$ \\
					\hline
					
					\texttt{roulette} & Success & $0.905$ & $27.595 \cdot \textsf{money} + 205.962$ & $[-205, 109.4]$\\
					\hline
					
					\texttt{sprdwalk} & Success & $0.693$ & $4 \cdot n - 4 \cdot x + 1$ & $[-3, 1]$ \\
					\hline
					
					\texttt{trapped\_miner} $\diamondsuit$ & Success & $1.676$ & $-9 \cdot i + 9 \cdot n + 8$ & $[-6, 4]$\\
					\hline
					
				\end{tabular}
			}
		}
	\end{center}
	\caption{Experimental Results on the Benchmarks of~\cite{pldi18}. Benchmarks that contain nested loops are marked with a $\diamondsuit$.}
	\label{tab:experiments2}
\end{table*}

%

%

%
%
%
%

%
%

\section{Related Works}\label{sect:realtedwork}

We compare our results with the most related works on termination verification of probabilistic programs.
We discuss two main classes of approaches: supermartingale-based and proof-rule-based.

\paragraph{Supermartingale-based approaches.}
The most \amir{related supermartingale-based works %
are}~\cite{vmcai19,MM04,MM05,SriramCAV,HolgerPOPL,BG05,ChatterjeeFG16,ChatterjeeFNH16,ChatterjeeNZ2017,DBLP:journals/pacmpl/McIverMKK18}.
Compared to
these results, the most significant difference is that our result considers modular verification of the termination property, while previous approaches tackle the termination problem directly on the whole program \amir{(except the cases mentioned below)}. In detail, we synthesize individual DSMs for each loop in the program, while most previous results synthesize a global (ranking) supermartingale for the whole program and do not have the modular feature.

\hongfei{\amir{Another} advantage of \amir{our approach} is that we do not require non-negativity of supermartingales, which is however required in all of the previous results mentioned above}. \hongfei{For example, consider Program 1 in Figure~\ref{fig:benchmark}. In this example, we have a DSM-map for the outer loop that only involves the program variable $x$ (see Table~\ref{tab:experiments} and Appendix~\ref{app:dsmres}). Observe that (i)~the expected value of $x$ decreases throughout the outer loop, and (ii)~the value of $x$ \amir{is unbounded and can become arbitrarily positive or negative}. Also note that the decrease in $x$ is the main reason that the loop terminates a.s.
\amir{In} previous \amir{approaches, due to the restrictive requirement} of non-negativity, we cannot choose $x$ as \amir{a} (ranking) supermartingale.
Moreover, we cannot choose $\vert x \vert$ either, \amir{given that} the expected value of $\vert x \vert$ increases at $x=0$.
Hence, \amir{it is non-trivial to obtain} a \emph{non-negative} (ranking) supermartingale for this example. \amir{In contrast, our} approach \amir{is more flexible and succinctly} proves the a.s.~termination \amir{property of this program by synthesizing two distinct DSM-maps:} a DSM on $z$ for the inner loop and a DSM on $x$ for the outer loop.

\amir{We now compare our approach to previous} supermartingale-based results \amir{that provide some level of modularity}.

\amir{\paragraph{Comparison with~\cite{HolgerPOPL}.} This is the most similar result. However, we have already} shown that this approach is not sound. We have also presented the minimal required strengthening and proven that our new approach is sound.

\amir{\paragraph{Comparison with \cite{DBLP:journals/pacmpl/McIverMKK18}.}} Although the approach in~\cite{DBLP:journals/pacmpl/McIverMKK18} is also modular and constructs supermartingales loop-by-loop, their approach has the following disadvantages: (i)~their approach is restricted to non-negative supermartingales, \amir{and cannot be used} when a non-negative supermartingale is hard to construct (see our \amir{above point about} Program~1 in Figure~\ref{fig:benchmark}); (ii)~their approach requires to calculate the complete semantics of the loop body, which is \amir{infeasible in general}, while our approach (together with our algorithmic method) only requires to examine the syntax of the loop body.

\amir{\paragraph{Comparison with~\cite{AgrawalC018}.}} Another related result in~\cite{AgrawalC018} considers lexicographic RSMs that are sound for a.s.~termination of probabilistic programs.
While lexicographic RSMs have some flavor of modularity (such as decomposition based on lexicographic order), they
also synthesize a global lexicographic RSM and hence are not modular in the sense of~(\ref{def:compo}). Moreover, their approach also requires the non-negativity of lexicographic RSMs, thus suffers the same problem when constructing non-negative RSMs is difficult.
}

\paragraph{Proof-rule-based approaches.}
 \amir{Another family of approaches} for termination analysis \amir{are} based on the notion of proof rules~\cite{KaminskiKMO16,DBLP:journals/fac/Hesselink94,JonesPhdThesis,DBLP:conf/lics/OlmedoKKM16,DBLP:journals/pacmpl/McIverMKK18}.
For example,~\cite{KaminskiKMO16} presents a proof-rule-based approach for proving finite expected termination time of
probabilistic while loops, and~\cite{DBLP:conf/lics/OlmedoKKM16} presents sound proof rules for probabilistic programs with recursion.
Most results on proof rules focus on specifying local logical properties at every \amir{label} to ensure a global logical property,
and do not consider \amir{modular} proof rules.
In contrast, we provide \amir{modular} proof rules that prove the almost-sure termination property.
The most relevant result is given in~\cite{pldi18} that presents a \amir{modular} approach for deriving resource bounds of probabilistic programs.
Compared with our result, their result focuses on resource bounds and can only handle programs with finite expected resource consumption, whereas
our result focuses on termination properties and can handle programs with infinite expected termination time.
\hongfei{An example can be obtained by \amir{considering Program 1,  Figure~\ref{fig:benchmark} and} changing the assignment $z:=z-1$ at label 4 to $z:=z+r'$, where we have $\probm(r'=1)=\probm(r'=-1)=0.5$. Then the inner loop models a symmetric walk that terminates a.s. but with infinite expected termination time. \amir{Therefore, this program} has infinite expected termination time.
For this modified example, the original DSM-map  remains \amir{valid} (see ``Program 1'' in Table~\ref{tab:experiments} and Appendix~\ref{app:dsmres}). And thus, \amir{our modular approach proves its} a.s.~termination.
\amir{Note that} our approach only relies on a side condition (existence of a DSM) and the assumption that the loop body is a.s.~terminating, thus \amir{it} can handle loop bodies with infinite expected termination time.}

\section{Conclusion}\label{sect:conclusion}

In this paper, we first proved that a natural probabilistic extension of the variant rule in the Floyd-Hoare logic is not sound for
\amir{modular} verification of almost-sure termination of probabilistic programs and identified the flaw in the previous related work~\cite{HolgerPOPL}.
Then, we proposed a \hongfei{minimal} sound strengthening of the approach in~\cite{HolgerPOPL} \amir{through the notion of descent supermartingales (DSMs)}, and demonstrated an \amir{efficient} algorithmic implementation of our strengthened approach \amir{for linear DSMs}.
An important future direction is to investigate different rules and sound approaches for \amir{modular} verification of probabilistic termination. \amir{Another direction is to consider the algorithmic problem of synthesizing non-linear DSM-maps}.

\subsection*{Acknowledgements}
\begin{small}
The research was partially supported by the National Natural Science Foundation of China (Grant No. 61772336, 61802254), Open Project
of Shanghai Key Laboratory of Trustworthy Computing, %
Vienna Science and Technology Fund (WWTF) Project ICT15-003, Austrian Science Fund (FWF) NFN Grant No. S11407-N23 (RiSE/SHiNE), ERC Starting Grant (279307: Graph Games), an IBM PhD Fellowship, and DOC Fellowship No.~24956 of
the Austrian Academy of Sciences (\"{O}AW).
We are sincerely
grateful to Prof. Yuxi Fu, director of
the support of the BASICS Lab at Shanghai Jiao Tong University, for his support.
\end{small}

\clearpage

%
\newpage
\appendix

\section{The Detailed Semantics}~\label{app:semantics}
The behavior of a probabilistic program $P$ accompanied with its CFG $\mathcal{G}=(\locs,(\pvars,\rvars),\transitions)$ under a scheduler $\sigma$ is described as follows.
The program starts in the initial configuration $(\loc_0,\initval)$.
Then in each \emph{step} $i$ ($i\in\mathbb{N}_0$), given the current configuration $(\loc_{i},\nu_{i})$, the next configuration $(\loc_{i+1},\nu_{i+1})$ is determined by the following procedure:
\begin{compactenum}
\item
a valuation $\mu_i$ of the sampling variables is sampled according to the joint distribution of the cumulative distributions $\{\Upsilon_r\}_{r\in \rvars}$ and independent of all previously-traversed configurations (including $(\loc_i,\nu_i)$), all previous samplings on $\rvars$ and previous executions of probabilistic branches;
\item
if $\loc_i\in\locs_\mathrm{a}$ and $(\loc_i, u, \loc')$ is the only transition in $\transitions$ with source label $\loc_i$, then $(\loc_{i+1},\nu_{i+1})$ is set to be
$(\loc',u(\nu_{i},\mu{r}_i))$.
\item
if $\loc_i\in\locs_\mathrm{b}$ and $(\loc_i, \phi, \loc_1),(\loc_i, \neg\phi,\loc_2)$ are namely the two transitions in $\transitions$ with source label $\loc_i$, then
$(\loc_{i+1},\nu_{i+1})$ is set to be (i) $(\loc_1,\nu_i)$ when $\nu_i\models\phi$ and (ii) $(\loc_2,\nu_i)$ when $\nu_i\models\neg\phi$;

\item
if $\loc_i\in\locs_\mathrm{p}$ and $(\loc_i, p, \loc_1),(\loc_i, 1-p,\loc_2)$ are namely the two transitions in $\transitions$ with source label $\loc_i$, then with a Bernoulli experiment independent of all previous samplings, probabilistic branches and traversed configurations, $(\loc_{i+1},\nu_{i+1})$ is set to be (i) $(\loc_1,\nu_i)$ with probability $p$ and (ii) $(\loc_2,\nu_i)$ with probability $1-p$;
\item
if $\loc_i\in\locs_\mathrm{d}$ and $c_0,\dots,c_i$ is the finite path traversed so far
(i.e., $c_0=(\loc_0,\initval)$ and $c_i=(\loc_i,\nu_i)$) with $\sigma(c_0,\dots,c_i)=(\loc_i, \star, \loc')$, then
$(\loc_{i+1},\nu_{i+1})$ is set to be $(\loc',\nu_i)$;
\item if there is no transition in $\transitions$ emitting from $\loc_i$ (i.e., $\loc_i=\lout$), then $(\loc_{i+1}, \nu_{i+1})$ is set to be $(\loc_i,\nu_{i})$.
\end{compactenum}
We define the semantics of probabilistic programs using Markov decision processes.

\begin{definition}[The Semantics]
The Markov decision process $\MDP{W}=(\MDPstates_W, \MDPactions, \MDPkernel_W)$ (for the probabilistic program $W$) is defined as follows.
\begin{compactitem}
\item The \emph{state space} $\MDPstates_W$ is the configuration set $(\locs\times\val{\pvars})$.
\item The \emph{action set} $\MDPactions$ is $\left\{\tau, \mathbf{th}, \mathbf{el}\right\}$.
Intuitively, $\tau$ refers to absence of nondeterminism and \textbf{th} (resp. \textbf{el}) refers to the \textbf{then}- (resp. \textbf{else}-) branch of a nondeterministic label.
\item The \emph{probability transition function} $\MDPkernel_W:\MDPstates_W\times\MDPstates_W\rightarrow [0,1]$
is given as follows.

For all configurations $(\loc, \nu)$, we have:
\begin{compactitem}
    \item {\em Assignment:}  if $\loc\in\alocs$ is an assignment label and $(\loc,u,\loc')$ is the only triple in $\transitions$ with source label $\loc$ and update function $u$, then
    \[
    \textstyle \MDPkernel_W\left(\left(\loc,\nu\right),\tau ,\left(\ell',\nu'\right)\right):=
    \sum_{\mu\in U}\sampdpd(\mu)\] where $U=\{\mu~|~\ \nu'=u(\nu,\mu)\}$;  \\

    \item {\em Branching:} if $\loc\in\blocs$ and $(\loc, \phi, \loc_1),(\loc, \neg\phi, \loc_2)$ are the two triples in $\transitions$ with source label $\loc$ and propositional arithmetic predicate $\phi$, then
    \[
    \MDPkernel_W\left((\loc,\nu),\tau , (\loc',\nu)\right):=\begin{cases}
    1  & \mbox{ if } \nu\models\phi , \loc'=\loc_1 \\
    1  &  \mbox{ if } \nu\not\models\phi, \loc'=\loc_2 \\
    0 & \mbox{otherwise}
    \end{cases}~;
    \]
   \item {\em Probabilistic:} If $\loc\in\plocs$ and $(\loc, p, \loc_1),(\loc, 1-p, \loc_2)$ are namely two triples in $\transitions$ with source label $\loc$, then
            \[
            \MDPkernel_W\left((\loc,\nu), \tau, (\loc',\nu)\right):=\begin{cases}
            p  & \mbox{ if } \loc'=\loc_1  \\
            1-p  & \mbox{ if } \loc'=\loc_2  \\
            0 & \mbox{ otherwise }\\
            \end{cases}
            \]

    \item {\em Nondeterminism:} If $\loc\in\dlocs$ and $(\loc, \star, \loc_1),(\loc, \star, \loc_2)$ are namely two triples in $\transitions$ with source label $\loc$ such that $\loc_1$ (resp. $\loc_2$) refers to the \textbf{then}-(resp. \textbf{else}-) branch, then
            \[
            \MDPkernel_W\left((\loc,\nu), \mbox{\textbf{th}}, (\loc',\nu)\right):=\begin{cases}
            1  & \mbox{ if } \loc'=\loc_1  \\
            0 & \mbox{ otherwise }\\
            \end{cases}
            \]
            and
             \[
            \MDPkernel_W\left((\loc,\nu), \mbox{\textbf{el}}, (\loc',\nu)\right):=\begin{cases}
            1  & \mbox{ if } \loc'=\loc_2  \\
            0 & \mbox{ otherwise }\\
            \end{cases}
            \]

 \item {\em Terminal label:} if there is no transition in $\transitions$ emitting from $\loc_i$ (i.e., $\loc_i=\lout$), then $\MDPkernel_W\left((\loc,\nu),\tau ,(\loc,\nu)\right):=1$;
    \item for other cases, $\MDPkernel_W\left((\loc,\nu),a, (\loc',\nu')\right):=0$.

\end{compactitem}
\end{compactitem}
\end{definition}

\section{Flaw in the Proof of FHV-rule}\label{app:flaw}
Below we clarify the critical point on where the flaw in~\cite{HolgerPOPL} lies.
The flaw lies in the point that RSMs should be \emph{non-negative}.
In the following, we define an extra technical notion.

\paragraph{Characteristic Random Variables.} Given random variables $X_0,\dots,X_n$
and a predicate $\Phi$,
we denote by $\mathbf{1}_{\phi(X_0,\dots,X_n)}$ the random variable such that
\[
\mathbf{1}_{\phi(X_0,\dots,X_n)}(\omega)=
\begin{cases}
1 & \mbox{if }\phi\left(X_0(\omega),\dots,X_n(\omega)\right)\mbox{ holds} \\
0 & \mbox{otherwise}
\end{cases}
\]
By definition, $\expv\left(\mathbf{1}_{\phi(X_0,\dots,X_n)}\right)=\probm\left(\phi(X_0,\dots,X_n)\right)$.
Note that if $\phi$ does not involve any random variable, then $\mathbf{1}_{\phi}$ can be deemed as a constant whose value depends only on whether $\phi$ holds or not.

This point can also be observed from the counterexample (Figure~\ref{example:counter}) that the value of the program variable $x$ may grow unboundedly below zero due to increasing values of $y$, breaking the non-negativity.
In detail, the flaw lies in their proof of Theorem 7.7
at the claim that the stochastic process satisfying
\[
\condexpv{R_{T_{k+1}}}{\mathcal{F}_{T_k}}\leq R_{T_{k}^{}}-\epsilon\cdot\textbf{1}_{G_0\cap\ldots\cap G_{T_{k}^{}}}
\]
is an RSM.
However, due to the lack of guarantee on the non-negativity of $R_{T_{k+1}}$, we cannot say that this is an RSM, although its conditional expected value decreases in each step.
The rest of their proof tries to remedy this issue by enforcing the stochastic process to be non-negative.
In detail, their proof constructs the stochastic process $R_{T_{k}}\cdot\mathbf{1}_{R_{T_{k}}>0}$ which is non-negative, but then this process may not satisfy the decreasing condition of RSMs.
Thus, no valid RSMs are constructed in their proof, implying that the proof is invalid.

\section{Proof of Theorem~\ref{thm:rsm}}\label{app:thm:rsm}

\noindent\textbf{Theorem~\ref{thm:rsm}.} Let $P=\textbf{while}(G,P').$ If (i)~$P'$ terminates a.s.~for any initial valuation and scheduler; and (ii)~there exists a DSM-map $\eta$ for $P$, then for any initial valuation $\nu^{*}\in\val{\pvars}$ and for all schedulers $\sigma$,
we have $\probm_{\nu^{*}}^\sigma(T< \infty)=1$.
\begin{proof}
   Let $\eta$ be any DSM-map for a program $P$, $\nu_0\in\val{\pvars}$ be any initial valuation and $a,b,c,\epsilon$ be the parameters in Definition~\ref{def:dbprsm}.

   We define the stochastic process $\{X_n=\eta(\loc_n,\nu_n)\}_{n\in \Nset_0}$ adapted to $\{\mathcal{F}_n\}_{n\in\Nset_0}$ representing the evaluation of $P$ according to the semantics.
   If $P$ evaluates to a label $\loc$ with no out transition, then $\eta(\loc,\nu)$ is a constant $c$ by definition.

   Informally, $X_n$ is a ranking supermartingale. If $\{X_n\}_{n\in\Nset_0}$ decreases for sufficiently many times, it will be less than $c$ at $\lin$ which implies termination.
    We have $ X_{B_n}\geq c$ for every $n\in\Nset_0$, where $B_n$ is the stochastic process representing the number of steps of $P$'s $n$-th arrival to the label $\lin$. We suppose that the program $Q$ is terminating for any initial valuation, and thus we have $B_n$ is well defined.
     \begin{eqnarray*}
     \probm(X_{B_n}\geq c)
    &=&\sum_{k=n}^{+\infty}\probm(X_k\geq c\wedge B_n=k)\\
   &\leq & \sum_{k=n}^{+\infty}\probm(X_k\geq c)%
   \end{eqnarray*}
   Let $Y_n=X_n+n\cdot\epsilon$, then $a+\epsilon\leq Y_{n+1}-Y_{n}=X_{n+1}-X_{n}+\epsilon\leq b+\epsilon$.
    \begin{eqnarray*}
      \condexpv{Y_{n+1}}{\mathcal{F}_n}&=& \condexpv{X_{n+1}}{\mathcal{F}_n}+(n+1)\cdot\epsilon\\
      &=& \textbf{1}_{(\loc_n,u,\loc')\in\transitions}\cdot\sum_{\mu\in \val{\rvars}} \bar{\Upsilon}(\mu)\cdot \eta(\loc',u(\nu,\mu)) \\
      &&+\textbf{1}_{(\loc_n,\phi,\loc')\in\transitions\wedge \nu_n\models\phi}\cdot \eta(\loc',\nu_n) \\
         &&+\textbf{1}_{(\loc_n,\star,\loc')\in\transitions}\cdot \eta(\loc',\nu_n) \\
       &&+\textbf{1}_{(\loc_n,p,\loc'), (\loc_n,1-p,\loc'')\in\transitions}\cdot \\&&\quad(p\eta(\loc',\nu_n)+(1-p)\eta(\loc'',\nu_n))\\
       &&+(n+1)\cdot\epsilon \\
      &\leq& \eta(\loc_n,\nu_n)-\epsilon+(n+1)\cdot\epsilon\\
      &=& X_n+n\cdot\epsilon\\
      &=&Y_n
    \end{eqnarray*}
     Thus $\{Y_n\}_{n\in\Nset_0}$ is a supermartingale satisfying the condition of Hoeffding inequality and we have
    \begin{eqnarray*}
   \sum_{k=n}^{+\infty}\probm(X_k\geq c)&=&\sum_{k=n}^{+\infty}\probm(Y_k-Y_0\geq c-X_0+k\cdot\epsilon)\\
   &\leq& \sum_{k=n}^{+\infty}e^{-\frac{2(c-X_0+k\cdot\epsilon)^2}{k(b-a)^2}}\\
   &\leq& \sum_{k=n}^{+\infty}e^{-\frac{2\epsilon^2}{(b-a)^2}k-\frac{4(c-X_0)\epsilon}{(b-a)^2}}
   \end{eqnarray*}
   The above term $\rightarrow 0$ when $n\rightarrow +\infty$,
   And we have \[\probm(T_P< \infty)\geq 1-\lim_{n\rightarrow+\infty}\probm(X_{B_n}\geq c)=1 \]
\end{proof}

\section{An Example Usage of the Proof System $\mathcal{D}$}\label{app:psDSM}

We provide a complete example of applying the proof system $\mathcal{D}$ for proving almost-sure termination of a probabilistic program. 

\begin{example}
		\begin{figure}[H]
		\centering
		\begin{minipage}{5cm}
			\lstset{language=prog}
			\lstset{tabsize=4}
			\begin{lstlisting}[mathescape,basicstyle=\small]
$1:$while $x\geq 0$ do
$2:$	$y := z$;
$3:$	while $y\geq0$ do
$4:$		$y:=y+r$;
$5:$		$x:=x+r$
$\,\,$	od;
$6:$	$x:=x+r$;
$7:$	$z:=2*z$
$\,\,$  od
$8:$
			\end{lstlisting}
		\end{minipage}
		\caption{An Example Program. In this program we have $\probm(r = -1) = 0.75$ and $\probm(r= 1) = 0.25.$}
		\label{fig:dex}
	\end{figure}
	Consider the program in Figure~\ref{fig:dex}, in which $\probm(r = -1) = 0.75$ and $\probm(r= 1) = 0.25$ and therefore, $\expv[r]=-0.5.$ Here is a step-by-step termination proof using the system $\mathcal{D}$\footnote{In all  steps of this proof, we have $\epsilon = 1, c = 0$ and $[a, b] = [-100, 100].$ Note that we do not need to fix/propagate a single $\epsilon$, given that we can simply use the minimum of the $\epsilon$'s used for each individual step for constructing the DSM-maps. The same point applies to $a, b, c$, too.}:
	
	\footnotesize{
	\begin{enumerate}[\textsf{\roman{enumi}.}]
		\vspace{5mm}
		\item $\inference{-100 \leq \expv[R_2 [y\leftarrow y+r]] - R_1 \leq -1}{ \langle R_1\rangle y := y + r \langle R_2\rangle }[(3)] \quad \quad R_1 = 6 \cdot y, \quad R_2 = 6 \cdot y + 2$
		
		\vspace{5mm}
		\item $\inference{-100 \leq \expv[R_3 [x\leftarrow x+r]] - R_2 \leq -1}{ \langle R_2\rangle x := x + r \langle R_3\rangle }[(3)] \quad \quad R_3 = 6 \cdot y + 1$
		
		\vspace{5mm}
		\item $\inference{ \langle R_1\rangle y := y+r \langle R_2\rangle~(\textsf{i}) \\ \langle R_2\rangle x := x+r \langle R_3\rangle~(\textsf{ii}) }{ \langle R_1\rangle y := y + r; x:= x+r \langle R_3\rangle }[(4)]$
		
		\vspace{5mm}
		\item $\inference{ \langle R_1\rangle y := y + r; x:= x+r \langle R_3\rangle~(\textsf{iii})\\ y \geq 0 \rightarrow R_3 \geq 0 \\  y \geq 0 \rightarrow -100 \leq R_1 - R_3 \leq -1 \\ y<0 \rightarrow -100 \leq R_4 - R_3 \leq -1}
		{ \{R_3\} \textbf{ while } y \geq 0 \textbf{ do } y := y+r; x:=x+r \textbf{ od } \{R_4\} }[(1)] \quad \quad R_4 = 6 \cdot y$
		
		\vspace{5mm}
		\item $\inference{}{\tm(y:=y+r)}[(9)]$
		
		\vspace{5mm}
		\item $\inference{}{\tm(x:=x+r)}[(9)]$
		
		\vspace{5mm}
		\item $\inference{\tm(y:=y+r)~(\textsf{v})\\ \tm(x:=x+r)~(\textsf{vi})}{\tm(y:=y+r;x:=x+r)}[(10)]$
		
		\vspace{5mm}
		\item $\inference{\tm(y:=y+r;x:=x+r)~(\textsf{vii})\\\{R_3\} \textbf{ while } y \geq 0 \textbf{ do } y := y+r; x:=x+r \textbf{ od } \{R_4\}~(\textsf{iv})}{\tm( \textbf{ while } y \geq 0 \textbf{ do } y := y+r; x:=x+r \textbf{ od } )}[(8)]$
		
		\vspace{5mm}
		\item $\inference{-100 \leq \expv[R_6[y \leftarrow z]] - R_5 \leq -\epsilon}{ \langle R_5 \rangle y := z \langle R_6 \rangle}[(3)] \quad \quad R_5 = 10 \cdot x + 2, \quad R_6 = 10 \cdot x + 1$
		
		\vspace{5mm}
		\item $\inference{-100 \leq \expv[R_8[y \leftarrow y + r]] - R_7 \leq -1}{ \langle R_7 \rangle y := y+r \langle R_8 \rangle}[(3)] \quad \quad R_7 = 10 \cdot x - 2, \quad R_8 = 10 \cdot x - 3$
		
		\vspace{5mm}
		\item $\inference{-100 \leq \expv[R_6[x \leftarrow x+r]] - R_8 \leq -1}{ \langle R_8 \rangle x := x+r \langle R_6 \rangle}[(3)]$
		
		\vspace{5mm}
		\item $\inference{\langle R_7 \rangle y := y+r \langle R_8 \rangle~(\textsf{x})\\ \langle R_8 \rangle x := x+r \langle R_6 \rangle~(\textsf{xi})}
		{\langle R_7 \rangle y := y+r; x:=x+r \langle R_6 \rangle}[(4)]$
		
		\vspace{5mm}
		\item $\inference{ \langle R_7 \rangle y := y+r; x:=x+r \langle R_6 \rangle~(\textsf{xii}) \\ y \geq 0 \rightarrow -100 \leq R_7 - R_6 \leq -1 \\ y <0 \rightarrow -100 \leq R_9 - R_6 \leq -1 }
		{ \langle R_6 \rangle \textbf{ while } y \geq 0 \textbf{ do } y := y+r; x:=x+r \textbf{ od } \langle R_9 \rangle }[(1)] \quad \quad R_9 = 10 \cdot x$
		
		\vspace{5mm}
		\item $\inference{-100 \leq \expv[R_{10}[x \leftarrow x+r]] - R_9 \leq -1}{ \langle R_9 \rangle x := x + r \langle R_{10} \rangle }[(3)] \quad \quad R_{10} = 10 \cdot x + 4$
		
		\vspace{5mm}
		\item $\inference{-100 \leq \expv[R_{11}[z \leftarrow 2 \cdot z]] - R_{10} \leq -1}
		{ \langle R_{10} \rangle z := 2 * z \langle R_{11} \rangle }[(3)] \quad \quad R_{11} = 10 \cdot x + 3$
		
		\vspace{5mm}
		\item $\inference{\langle R_5 \rangle y := z \langle R_6 \rangle~(\textsf{ix})\\ \langle R_6 \rangle \textbf{ while } y \geq 0 \textbf{ do } y := y+r; x:=x+r \textbf{ od } \langle R_9 \rangle~(\textsf{xiii})}
		{\langle R_5 \rangle y := z;  \textbf{ while } y \geq 0 \textbf{ do } y := y+r; x:=x+r \textbf{ od } \langle R_9 \rangle }[(4)]$
		
		\vspace{5mm}
		\item $\inference{\langle R_9 \rangle x := x + r \langle R_{10} \rangle~(\textsf{xiv}) \\ \langle R_{10} \rangle z := 2 * z \langle R_{11} \rangle~(\textsf{xv})}
		{\langle R_9 \rangle x := x + r; z := 2 * z \langle R_{11} \rangle}[(4)]$
		
		\vspace{5mm}
		\item $\inference{\langle R_5 \rangle y := z;  \textbf{ while } y \geq 0 \textbf{ do } y := y+r; x:=x+r \textbf{ od } \langle R_9 \rangle~(\textsf{xvi})\\\langle R_9 \rangle x := x + r; z := 2 * z \langle R_{11} \rangle~(\textsf{xvii})}
		{\langle R_5 \rangle y := z;  \textbf{ while } y \geq 0 \textbf{ do } y := y+r; x:=x+r \textbf{ od};  x := x + r; z := 2 * z \langle R_{11} \rangle}[(4)]$
		
		\vspace{5mm}
		\item $\inference{\langle R_5 \rangle y := z;  \textbf{ while } y \geq 0 \textbf{ do } y := y+r; x:=x+r \textbf{ od};  x := x + r; z := 2 * z \langle R_{11} \rangle~(\textsf{xviii})\\
		x \geq 0 \rightarrow R_{11} \geq 0\\
		x \geq 0 \rightarrow -100 \leq R_5 - R_{11} \leq -1\\
		x < 0 \rightarrow -100 \leq R_{12} - R_{11} \leq -1 
		}
		{\{R_{11} \} \textbf{while } x \geq 0 \textbf{ do }y := z;  \textbf{ while } y \geq 0 \textbf{ do } y := y+r; x:=x+r \textbf{ od};  x := x + r; z := 2 * z \textbf{ od} \{ R_{12}\}}[(1)]$ \\ \\ $R_{12} = 10 \cdot x$
		
		\vspace{5mm}
		\item $\inference{}{\tm(y := z)}[(9)]$
		
		\vspace{5mm}
		\item $\inference{}{\tm(x:=x+r)}[(9)]$
		
		\vspace{5mm}
		\item $\inference{}{\tm(z:=2*z)}[(9)]$
		
		\vspace{5mm}
		\item $\inference{\tm(x:=x+r)~(\textsf{xxi})\\\tm(z:=2*z)~(\textsf{xxii})}
		{\tm(x:=x+r; z:=2*z)}[(10)]$
		
		\vspace{5mm}
		\item $\inference{\tm(y := z)~(\textsf{xx})\\\tm( \textbf{ while } y \geq 0 \textbf{ do } y := y+r; x:=x+r \textbf{ od }(\textsf{viii})}
		{\tm( y:=z; \textbf{ while } y \geq 0 \textbf{ do } y := y+r; x:=x+r \textbf{ od }) }[(10)]$
		
		\vspace{5mm}
		\item $\inference{\tm( y:=z; \textbf{ while } y \geq 0 \textbf{ do } y := y+r; x:=x+r \textbf{ od })~(\textsf{xxiv})\\
		\tm(x:=x+r; z:=2*z)~(\textsf{xxiii})
		}
		{\tm( y:=z; \textbf{ while } y \geq 0 \textbf{ do } y := y+r; x:=x+r \textbf{ od}; x:=x+r; z:=2*z)}[(10)]$
		
		\vspace{5mm}
		\item $\inference{\tm( y:=z; \textbf{ while } y \geq 0 \textbf{ do } y := y+r; x:=x+r \textbf{ od}; x:=x+r; z:=2*z)~(\textsf{xxv}) \\
		\{ R_{11} \} \textbf{while } x \geq 0 \textbf{ do }y := z;  \textbf{ while } y \geq 0 \textbf{ do } y := y+r; x:=x+r \textbf{ od};  x := x + r; z := 2 * z \textbf{ od} \{ R_{12}\}~(\textsf{xix})}
		{\tm(\textbf{while } x \geq 0 \textbf{ do }y := z;  \textbf{ while } y \geq 0 \textbf{ do } y := y+r; x:=x+r \textbf{ od};  x := x + r; z := 2 * z \textbf{ od})}[(8)]$

	\end{enumerate}
}

\normalsize{
\amir{
	\vspace{1em}
	Note that our approach is modular and uses a distinct DSM-map for each while loop. Specifically, it does \emph{not} synthesize a global RSM for the entire program. For example, in the proof above, we use two distinct DSM-maps for proving the termination of the inner loop and the outer loop. Note that the DSM-map used for the inner loop is in terms of $y$ (see~$R_1, \ldots, R_4$), whereas the DSM-map used for proving a.s.~termination of the outer loop is in terms of $x$ (see $R_5, \ldots, R_{12}$).
}
}

\end{example}

\section{Experimental Results}
\subsection{Invariants Used in the Experiments} \label{app:exp}

The following invariants were used for obtaining experimental results over the example programs:

\noindent Counterexample: \\
$I(1):= $true\\
$I(2):= x\geq 1 $\\
$I(3):= x \geq 0 \wedge z \leq y$\\
$I(4):= z\geq 0 \wedge z\leq y$\\
$I(5):= x \leq 1 \wedge z\geq 0\wedge z\leq y$\\
$I(6):= x \geq 2 \wedge z\geq 0\wedge z\leq y$\\
$I(7):= z \geq 0\wedge z\leq y$\\
$I(8):= z\leq -1 \wedge z\leq y$\\
$I(9):= z\leq -1 \wedge z \leq 0.25 \cdot y$\\

\medskip

\noindent Program 1:\\
$I(1):= $true\\
$I(2):= x\geq 1 $\\
$I(3):=  z \leq y$\\
$I(4):= z\geq 0 \wedge z\leq y$\\
$I(5):= z\geq -1 \wedge z\leq y-1$\\
$I(6):= z\leq -1 \wedge z\leq y$\\
$I(7):= z\leq -1$\\

\medskip

\noindent Program 2:\\
$I(1):= $true\\
$I(2):= x+y\geq 1 $\\
$I(3):= x+y\geq 1 \wedge a=z$\\
$I(4):= a \leq z \wedge b=z$\\
$I(5):= a\geq 0 \wedge a\leq z\wedge b=z $\\
$I(6):= a\geq -1 \wedge a\leq z-1\wedge b=z$\\
$I(7):= a\leq -1 \wedge a\leq z\wedge b \leq z$\\
$I(8):= a\leq -1 \wedge a\leq z\wedge b\geq 0 \wedge b\leq z$\\
$I(9):= a\leq -1 \wedge a\leq z\wedge b\geq -1 \wedge b\leq z-1$\\
$I(10):= a\leq -1 \wedge a\leq z\wedge b\leq -1 \wedge b\leq z$\\
$I(11):= a\leq -1 \wedge a \leq 0.5 \cdot z \wedge b\leq -1 \wedge b \leq 0.5 \cdot z$\\

\medskip

\noindent Program 3:\\
$I(1):= $true\\
$I(2):= x\geq 1 $\\
$I(3):= a\leq z$\\
$I(4):= a\geq 0 \wedge a\leq z$\\
$I(5):= a\geq -1 \wedge a\leq z-1 $\\
$I(6):= a\geq -1 \wedge a\leq z-1\wedge b \leq z$\\
$I(7):= a\geq -1 \wedge a\leq z-1\wedge b\geq0\wedge b\leq z$\\
$I(8):= a\geq -1 \wedge a\leq z-1\wedge b\geq -1 \wedge b\leq z-1$\\
$I(9):= a\geq -1 \wedge a\leq z-1\wedge b\leq -1 \wedge b\leq z$\\
$I(10):= a\leq -1 \wedge a\leq z$\\
$I(11):= a\leq -1 \wedge a \leq 0.5 \cdot z$\\

\subsection{Details of the Synthesized DSM-maps} \label{app:dsmres}

Our implementation produced the following DSM-maps for the outer-most loops of the example programs:

\begin{table}[h]
	\begin{center}
		$\epsilon = 1, c = 0, [a, b] = [-70.6, 155.6]$
		
		\begin{tabular}{|l|l|}
			\hline
			$\loc$ & $\eta(\loc, \mathbf{\nu})$\\
			\hline
			\hline
			
			$1$ & $75.4 \cdot x$\\
			\hline
			$2$ & $75.4 \cdot x - 1$\\
			\hline
			$3$ & $75.4 \cdot x - 0.8 \cdot y + 2$ \\
			\hline
			$4$ & $75.4 \cdot x - 0.8 \cdot y + 1$\\
			\hline
			$5$ & $75.4 \cdot x - 0.8 \cdot y$\\
			\hline
			$6$ & $75.4 \cdot x - 0.8 \cdot y + 80.2$\\
			\hline
			$7$ & $75.4 \cdot x - 0.8 \cdot y - 70.6$\\
			\hline
			$8$ & $75.4 \cdot x - 0.8 \cdot y$\\
			\hline
			$9$ & $75.4 \cdot x - 0.8 \cdot y + 155.6$\\
			\hline
			$10$ & $75.4 \cdot x - 0.8 \cdot y - 70.6$\\
			\hline
			$11$ & $75.4 \cdot x - 0.8 \cdot y + 3.8$\\
			\hline
		\end{tabular}
	\end{center}
	\caption{The Synthesized DSM-map for Example~\ref{example:rsp}}
\end{table}

\begin{table}[h]
	\begin{center}
		$\epsilon = 1, c = 0, [a, b] = [-4, 8]$
		
		\begin{tabular}{|l|l|}
			\hline
			$\loc$ & $\eta(\loc, \mathbf{\nu})$\\
			\hline
			\hline
			
			$1$ & $6 \cdot x + 5$\\
			\hline
			$2$ & $6 \cdot x + 4$\\
			\hline
			$3$ & $6 \cdot x + 2$\\
			\hline
			$4$ & $6 \cdot x + 1$\\
			\hline
			$5$ & $6 \cdot x$\\
			\hline
			$6$ & $6 \cdot x + 1$\\
			\hline
			$7$ & $6 \cdot x$\\
			\hline
		\end{tabular}
	\end{center}
	\caption{The Synthesized DSM-map for Program 1}
\end{table}

\begin{table}[h]
	\begin{center}
		$\epsilon = 1, c = 0, [a, b] = [-5, 9.5 ]$
		
		\begin{tabular}{|l|l|}
			\hline
			$\loc$ & $\eta(\loc, \mathbf{\nu})$\\
			\hline
			\hline
			
			$1$ & $7 \cdot x + 7 \cdot y + 6$\\
			\hline
			$2$ & $7 \cdot x + 7 \cdot y + 5$\\
			\hline
			$3$ & $7 \cdot x + 7 \cdot y + 4$\\
			\hline
			$4$ & $7 \cdot x + 7 \cdot y + 3$\\
			\hline
			$5$ & $7 \cdot x + 7 \cdot y + 1.5$\\
			\hline
			$6$ & $7 \cdot x + 7 \cdot y + 0.5$\\
			\hline
			$7$ & $7 \cdot x + 7 \cdot y + 2$\\
			\hline
			$8$ & $7 \cdot x + 7 \cdot y + 1$\\
			\hline
			$9$ & $7 \cdot x + 7 \cdot y $\\
			\hline
			$10$ & $7 \cdot x + 7 \cdot y + 1$ \\
			\hline
			$11$ & $7 \cdot x + 7 \cdot y$\\
			\hline
		\end{tabular}
	\end{center}
	\caption{The Synthesized DSM-map for Program 2}
\end{table}

\begin{table}[h]
	\begin{center}
		$\epsilon = 1, c = 0, [a, b] = [-5, 11]$
		
		\begin{tabular}{|l|l|}
			\hline
			$\loc$ & $\eta(\loc, \mathbf{\nu})$\\
			\hline
			\hline
			
			$1$ & $8 \cdot x + 7$\\
			\hline
			$2$ & $8 \cdot x + 3$\\
			\hline
			$3$ & $8 \cdot x + 2$\\
			\hline
			$4$ & $8 \cdot x + 1$\\
			\hline
			$5$ & $8 \cdot x$\\
			\hline
			$6$ & $-b + 8 \cdot x + z - 1$\\
			\hline
			$7$ & $-b + 8 \cdot x + z - 2$\\
			\hline
			$8$ & $-b + 8 \cdot x + z - 4$\\
			\hline
			$9$ & $8 \cdot x - 1$\\
			\hline
			$10$ & $8 \cdot x + 1$\\
			\hline
			$11$ & $8 \cdot x$\\
			\hline
		\end{tabular}
	\end{center}
	\caption{The Synthesized DSM-map for Program 3}
\end{table}

\end{document}